\theoremstyle{plain}
\newtheorem{theorem}{Theorem}
\newtheorem{conjecture}{Conjecture}
\newtheorem{lemma}{Lemma}
\theoremstyle{definition}
\newtheorem{property}{Property}
\newcommand{\bY}{\boldsymbol{Y}}
\newcommand{\bX}{\boldsymbol{X}}
\newcommand{\bZ}{\boldsymbol{Z}}
\newcommand{\bv}{\boldsymbol{v}}
\newcommand{\beps}{\boldsymbol{\epsilon}}
\newcommand{\bSigma}{\boldsymbol{\Sigma}}
\newcommand{\btheta}{\boldsymbol{\theta}}
\newcommand{\bdelta}{\boldsymbol{\delta}}
\newcommand{\boldeta}{\boldsymbol{\eta}}
\begin{document}

\title{Estimation of Parameters of the Truncated Normal Distribution with Unknown Bounds}
\author{Dylan Borchert\thanks{email: dylan.borchert@jacks.sdstate.edu}, Semhar Michael, and Christopher Saunders \\
Department of Mathematics and Statistics \\ South Dakota State University}


\maketitle

\begin{abstract}
Estimators of parameters of truncated distributions, namely the truncated normal distribution, have been widely studied for a known truncation region. There is also literature for estimating the unknown bounds for known parent distributions. 
In this work, we develop a novel algorithm under the expectation-solution (ES) framework, which is an iterative method of solving nonlinear estimating equations, to estimate both the bounds and the location and scale parameters of the parent normal distribution utilizing the theory of best linear unbiased estimates from location-scale families of distribution and unbiased minimum variance estimation of truncation regions. The conditions for the algorithm to converge to the solution of the estimating equations for a fixed sample size are discussed, and the asymptotic properties of the estimators are characterized using results on M- and Z-estimation from empirical process theory. The proposed method is then compared to methods utilizing the known truncation bounds via Monte Carlo simulation.
\end{abstract}



\section{Introduction}
\label{sec:intro}

Estimating the parameters of the truncated normal distribution for fixed truncation bounds is a classical problem\cite{Pearson-on_the_generalized}. Different techniques have been utilized to solve this problem when the truncation bounds or number of censored or truncated samples are known including maximum likelihood estimation \cite{Cohen-on_estimating, Cohen-estimating, crain1979, grundy1952, Mittal1987}, method of moments \cite{Shah-estimation}, and unbiased estimation \cite{Gupta-estimation} using the linear regression on the order statistics from a location-scale family of distributions \cite{Lloyd-leastsquares}. Recently, estimating the parameters of a truncated normal random effect under symmetric truncation has been explored \cite{Chen-estimating}. The problem of estimating the parameters of the parent normal distribution under known and unknown truncation has also been of interest in the machine learning community \cite{Kontonis-Efficient, Bhattacharyya-Learning, canonne-gaussian, Daskalakis-efficient}. Similarly, estimating truncation bounds is another classical problem \cite{Robson-estimation-1964, morimoto1967}. In this work, we wish to estimate all four parameters of the truncated normal distribution, namely the mean and standard deviation of the parent normal distribution, as well as the unknown truncation bounds.

The motivation for estimating all four parameters of the truncated normal distribution stems from the identification of source problems within forensic science. In this problem, the task is to decide whether two sets of observations came from the same class or two different classes. This task can be seen as a non-nested model selection or an open set classification problem. Specifically within the forensic science community, there has been a push for making this assignment using a likelihood ratio or a Bayes factor \cite{ENFSI-guidelines, Aitken-practitioner-2010}. One way to do this is to set up a random effects model, decomposing the observations into a source-level effect coming from a source-level distribution and a within-source effect or error term coming from a within-source distribution \cite{Ommen-building, Ommen-aproblem}. A potential drawback to this approach is if there are observations from two sources that are relatively far apart from each other, either in the tails of the between-source distribution or from a subpopulation that was under-represented in the training dataset, which is far away from the bulk of the training data sources. Though the lens of the data seen, it is very unlikely these two sets of observations are from the same source; it is even more unlikely that two rare sources would be selected from the between-source distribution. This is one consequence of Lindley's paradox, where a ratio of small (marginal) likelihoods can still return a very small or very large likelihood ratio or Bayes factor. 

There was a lively discussion about Lindley's paradox following Shafer's reply \cite{Shafer-lindley1982}, to Lindley's paper \cite{lindley-aproblem} that proposed a Bayes factor for a problem in forensic science related to an open set classification problem applied to refractive index glass data. In this discussion, Degroot stated, 
\begin{quote}
    All good Bayesian statisticians reserve a little pinch of probability for the possibility that their model is wrong. Usually this possibility does not have to be acknowledged formally in the statistical analyses that are carried out, but every now and then, when an unusual value of Y is observed, it is important to reconsider the model and to evaluate that little pinch of probability \citep{DeGroot-Lindley}.
\end{quote}
We propose truncating the within-source distribution as one way of incorporating this ``pinch of probability." This would allow for the likelihood that two sets of observations from rare sources become zero if the two sets of observations are far enough from each other in some sense. To begin developing this framework, we first consider studying the closed set classification problem under the frequentist setting, where classes are modeled using the truncated normal distribution. This allows for a new observation to have zero likelihood in all the known classes if it is too far from the exemplars in these classes. To do this, the parameters of the truncated normal distribution, both the mean and standard deviation of the parent normal distribution, as well as the truncation bounds. 

In this paper, we propose estimators of the parameters of the truncated normal distribution, when the truncation bounds are unknown, as the solution to a system of equations set up from the theory of regression on the order statistics of a location-scale family of distributions and uniform minimum variance unbiased estimation of truncation bounds for a known parent distribution. The estimators are found using a numerical solution to the system of equations found with a proposed algorithm under the Expectation-Solution (ES) scheme, which is a generalization of the Expectation-Maximization (EM) algorithm \cite{Dempster-EM}. The EM algorithm is commonly used to provide maximum likelihood estimates of parameters of a distribution when given truncated or censored samples, where the number of truncated/censored samples and/or the truncation/censoring regions are known \cite{mclachlan-EM}. 

The paper is organized as follows. Section \ref{sec:methods} sets up the notation and motivation for the system of equations. Section \ref{sec:ES_alg} gives the proposed ES algorithm to numerically solve the system of equations, as well as establishes theory related to the convergence to a solution of the system and the large sample behavior of the resulting estimators. Section \ref{sec:sims} provides Monte Carlo simulation studies to visualize the large sample behavior of the estimators. Section \ref{sec:data_ex} shows an application of the estimators of the truncated normal distribution applied to a closed set classification problem using the Iris data. Section \ref{sec:conclusion} concludes the paper.

\section{Methods}
\label{sec:methods}

Let $X_{1}, \dots X_{n} \overset{\mathrm{iid}}{\sim} \text{TN}(\mu,\sigma,\tau_l,\tau_u)$, the truncated normal distribution with parameters $\btheta = (\mu, \sigma^2, \tau_l, \tau_u)'$, where $\mu$ is a real number, $\sigma >0$, $\tau_l < \tau_u$ with distribution function $F_{\btheta}$. That is the distribution with density
$$f_{\btheta}(t) = \frac{\frac{1}{\sigma}\phi(\frac{t-\mu}{\sigma})}{\Phi(\frac{\tau_u-\mu}{\sigma}) - \Phi(\frac{\tau_l-\mu}{\sigma})}, t \in [\tau_l, \tau_u ],$$ 
where $\phi$ and $\Phi$ are the density and distribution function of the standard normal distribution, respectively.

Define the order statistic $\bX_{(n)} = (X_{n:1}, X_{n:2}, \dots, X_{n:n})'$ to be the vector of ordered observations where $X_{n:k}$ is the $k^{th}$ ordered observation from our sample of size $n$. Let $\tau_l^* = \frac{\tau_l-\mu}{\sigma}$, and $\tau_u^* = \frac{\tau_u - \mu}{\sigma}$. The goal of this work is to provide estimates $\mu, \sigma, \tau_l$ and $\tau_u$ and characterize their asymptotic properties. 

The proposed algorithm will follow the expectation-solution (ES) algorithm framework \cite{Elashhoff2004}. The ES algorithm is a method for providing parameter estimates that solve a system of estimating equations involving a latent variable. The estimating equations used for the algorithm will be discussed one at a time before stating the full system of equations and the formal statement of the ES algorithm.

\subsection{Estimating Equations for $\mu$ and $\sigma$}
 This algorithm will hinge on an alternative sampling model for a sample of size $n$ from a truncated normal distribution, where observations are repeatedly sampled from the parent normal distribution until $n$ observations lie within the truncation bounds. If we had kept track of the number of observations that fell below $\tau_l$, $n_l$, and the number of observations that were above $\tau_u$, $n_u$, we can think of our vector of order statistics $\bX_{(n)}$ as a ``middle portion" of a larger vector of order statistics from a normal distribution. We will denote this as $\bY_{(n)} = (X_{n_l + n_u + n: n_l + 1}, \dots, X_{n_l + n_u + n: n_l + n})$. Thus, our $X_{n:k}$ lines up with $Y_{(n_l + n_u + n):(n_l+k)}$ from the larger sample from the full normal distribution. We can then set up the regression model
\begin{equation}
   X_{n:k} = \mu + \sigma \mathbb{E}(Z_{(n_l + n_u + n):(n_l+k)}) + \epsilon_k \label{eqn:reg},
\end{equation} where $\mathbb{E}(\beps) = 0$ and $\mathbb{V}(\beps) = V$. For ease of computation, we can replace $\mathbb{E}(Z_{(n_l + n_u + n):(n_l+k)})$ with $\xi(n, n_l,n_u,k)$ which is the median of the $k^{th}$ order statistic \cite{Filliben-probabilityplot} and use the identity matrix in place of $V$ \cite{Shapiro-anapproximate, Gupta-estimation}. We have $\xi(n, n_l,n_u,k) = \Phi^{-1}(G_{n, n_l, n_u, k}^{-1}(0.5))$ where $G_{n, n_l, n_u, k}^{-1}$ is quantile function of the $\text{Beta}(n_l + k, n_u + n + 1 - k)$ distribution. It is worthwhile to note that $\xi(n, n_l,n_u,k)$ is defined for non-integer values of $n_l$ and $n_u$. We can then use OLS to obtain estimates of $\mu$ and $\sigma$. 

\subsection{The Distributions of $N_l$ and $N_u$}

The regression model in \eqref{eqn:reg} depends on the realizations of the latent variables $N_l$ and $N_u.$ We can characterize the distributions of $N_l$ and $N_u$ with the following negative binomial distributions
\begin{align*}
    &N_l \sim \text{NB}\left(n, \frac{\Phi(\tau_u^*) - \Phi(\tau_l^*)}{\Phi(\tau_u^*)}\right) \\
    &N_u \sim \text{NB}\left(n, \frac{\Phi(\tau_u^*) - \Phi(\tau_l^*)}{1 - \Phi(\tau_l^*)}\right).
\end{align*}
with the parametrization of $K \sim NB(r,\pi)$ defined as $K$ failures observed in order to get $r$ successes with probability of success equal to $\pi$. We then have the following expectations
\begin{align}
    &h_l(\btheta,n) \equiv \mathbb{E}_{\btheta}(N_l) = n \left( \frac{\Phi(\tau_l^*)}{\Phi(\tau_u^*) - \Phi(\tau_l^*)}  \right) \equiv n c_l \label{eqn:expectations1} \\
    &h_u(\btheta,n) \equiv \mathbb{E}_{\btheta}(N_u) = n \left( \frac{1 - \Phi(\tau_u^*)}{\Phi(\tau_u^*) - \Phi(\tau_l^*)} \right) \equiv n c_u. \label{eqn:expectations2}
\end{align}

\subsection{Estimating Equations for $\tau_l$ and $\tau_u$}

Following \cite{morimoto1967}, the uniform minimum variance unbiased estimates of $\tau_l$ and $\tau_u$ for known $\mu$ and $\sigma$ are 
\begin{align}
    &\hat{\tau}_l = X_{n:1} - \frac{\Phi(W_{n:n}) - \Phi(W_{n:1})}{(n-1)\frac{1}{\sigma}\phi(W_{n:1})}, \label{eqn:bounds1} \\
    &\hat{\tau}_u = X_{n:n} + \frac{\Phi(W_{n:n}) - \Phi(W_{n:1})}{(n-1)\frac{1}{\sigma}\phi(W_{n:n})}, \label{eqn:bounds2}
\end{align}
where $W_i = \frac{X_i - \mu}{\sigma}.$

\section{ES Algorithm} 
\label{sec:ES_alg}

Following notation from \cite{Elashhoff2004}, we can now define the algorithm. Our ``complete" data is $X^{c} = (X^{obs}, X^{lat})^t$ where our observed data is $X^{obs} = \bX_{(n)}$ and the latent observations are $X^{lat} = (N_l, N_u)^t.$ This leads to the ``complete data estimating equations"

\begin{align} \label{eqn:completeest}
U^c(X^c, \btheta) &= U^{(1)}(X^{obs}, N_l, N_u, \btheta) \\ &=  \begin{pmatrix} n^{-1} \sum_{i=1}^n (X_{n:i} - \mu - \sigma \xi(n, N_l,N_u,k)) \\
n^{-1} \sum_{i=1}^n (X_{n:i} - \mu - \sigma \xi(n, N_l, N_u, k))\xi(n, N_l, N_u,k) \\
X_{n:1} - \frac{\Phi(W_{n:n}) - \Phi(W_{n:1})}{(n-1)\frac{1}{\sigma}\phi(W_{n:1})} - \tau_l \\
X_{n:n} + \frac{\Phi(W_{n:n}) - \Phi(W_{n:1})}{(n-1)\frac{1}{\sigma}\phi(W_{n:n})} - \tau_u
\end{pmatrix}.
\end{align} 
If we had access to the latent variables $N_l$ and $N_u$ we could get parameter estimates by solving 
$U^c(X^c, \btheta) = \boldsymbol{0}$ for $\btheta.$ As these values will not be known  when presented with the observed data, they can be replaced with their expectations under the negative binomial sampling model to get the observed estimating equations 
\begin{align} \label{eqn:obsest}
U^{obs}(X^{obs}, \btheta) &= U^{(1)}(X^{obs}, h_l(n,\btheta), h_u(n,\btheta), \btheta) \\
&=  \begin{pmatrix} n^{-1} \sum_{i=1}^n (X_{n:i} - \mu - \sigma c_{n:k}(\btheta)) \\
n^{-1} \sum_{i=1}^n (X_{n:i} - \mu - \sigma c_{n:k}(\btheta))c_{n:k}(\btheta) \\
X_{n:1} - \frac{\Phi(W_{n:n}) - \Phi(W_{n:1})}{(n-1)\frac{1}{\sigma}\phi(W_{n:1})} - \tau_l \\
X_{n:n} + \frac{\Phi(W_{n:n}) - \Phi(W_{n:1})}{(n-1)\frac{1}{\sigma}\phi(W_{n:n})} - \tau_u
\end{pmatrix},
\end{align} 
where $c_{n:k}(\btheta) = \xi(n, h_l(n,\btheta),h_u(n,\btheta),k).$
Elashoff and Ryan used an assumption that $U^{obs}(X^{obs}, \btheta)$ could be expressed as a sum of $n$ independent terms, that is $U^{obs}(X^{obs}, \btheta) = \sum_{i=1}^n U^{obs}_i(\btheta)$ \cite{Elashhoff2004}. Note that in the system of equations above does not satisfy this assumption as the system of equations is a linear weighted combination of transformed order statistics. 

The ES algorithm aims to solve $U^{obs}(X^{obs}, \btheta) = \boldsymbol{0}$ for $\btheta$. After proposing an initial value of $\btheta$, denoted $\btheta^{(0)}$, we iterative between two steps:
\begin{itemize}
    \item E-step: Calculate $n_l^{(t)} = h_l(n, \btheta^{(t-1)})$ and $n_u^{(t)} = h_u(n, \btheta^{(t-1)})$
    \item S-step: Solve $U^{(1)}(X^{obs}, n_l^{(t)}, n_u^{(t)}, \btheta)$ to obtain $\btheta^{(t)}$
\end{itemize}
until some convergence criterion is met.

\subsection{Convergence to Root} \label{sec:fixed_convergence}

One property of the proposed algorithm to provide trustworthy parameter estimates of the truncated normal model is the convergence of the algorithm to a solution of $U^{obs}(X^{obs}, \btheta) = \boldsymbol{0}.$ We will follow a similar approach to that of Elashoff and Ryan \cite{Elashhoff2004} to establish convergence to a root of $U^{obs}(X^{obs}, \btheta)$ using theory for the Block Gauss-Seidel Process \cite{OrtegaReinboldt-1970}. Define

\begin{equation} \label{eqn:u2}
    U^{(2)}(X^{obs}, n_l, n_u, \btheta) = \begin{pmatrix}
        n_l - h_l(\btheta, n) \\
        n_u - h_u(\btheta, n)
    \end{pmatrix}.
\end{equation}

Then defining a new system of equations combining \eqref{eqn:completeest} and \eqref{eqn:u2}

\begin{equation} \label{eqn:combined u}
    U(X^{obs},\bdelta) = \begin{pmatrix}
        U^{(1)}(X^{obs}, n_l, n_u, \btheta) \\
        U^{(2)}(X^{obs}, n_l, n_u, \btheta)
    \end{pmatrix},
\end{equation}
where $\bdelta = (\btheta, n_l, n_u)^t$. Note that a solution, $\tilde{\bdelta} = (\tilde{\btheta}, \tilde{n}_l, \tilde{n}_u)^t$ satisfying $U(X^{obs},\tilde{\bdelta})=0$ also satisfies $U^{obs}(X^{obs}, \tilde{\btheta}) = 0$, which can be seen by a backwards substitution. Now, consider the differencing function
\begin{equation} \label{eqn:diff_fun}
    g(\bdelta_1, \bdelta_2) = \begin{pmatrix}
    U^{(1)}(X^{obs}, n_{l2}, n_{u2}, \btheta_1) \\
        U^{(2)}(X^{obs}, n_{l1}, n_{u1}, \btheta_1)
\end{pmatrix}.
\end{equation}
It can be seen that an iteration of the proposed algorithm gives $\bdelta^{(k+1)}$ by solving $g(\bdelta, \bdelta^{(k)}) = 0$ for $\bdelta.$ An interesting result is that the partial derivative of $g$ with respect to the first argument is invertible as 
\begin{align*}
   \det(D_{\bdelta_1}g) &= \det\left(\begin{pmatrix}
        \frac{\partial U^{(1)}(X^{obs}, n_{l2}, n_{u2}, \btheta_1)}{d \btheta_1} & \boldsymbol{0} \\
        \frac{\partial U^{(2)}(X^{obs}, n_{l1}, n_{u1}, \btheta_1)}{d \btheta_1} & \frac{\partial U^{(2)}(X^{obs}, n_{l1}, n_{u1}, \btheta_1)}{d (n_{l1}, n_{u1})^t}
    \end{pmatrix}\right)  \\
    &= \det\left(\frac{\partial U^{(1)}(X^{obs}, n_{l2}, n_{u2}, \btheta_1)}{d \btheta_1}\right) \det\left(\frac{\partial U^{(2)}(X^{obs}, n_{l1}, n_{u1}, \btheta_1)}{d (n_{l1}, n_{u1})^t}\right) \\
    &= \det\left(\begin{pmatrix}
        -1 & -n^{-1}\sum_{i=1}^n \xi(n, n_{l2},n_{u2},i) \\
        -n^{-1}\sum_{i=1}^n \xi(n, n_{l2},n_{u2},i) & -n^{-1}\sum_{i=1}^n \xi(n, n_{l2},n_{u2},i)^2
    \end{pmatrix} \right) \\
    &= n^{-1}\sum_{i=1}^n (\xi(n, n_{l2},n_{u2},i)^2 -n^{-1}\sum_{j=1}^n \xi(n, n_{l2},n_{u2},j))^2  > 0.
\end{align*}

To establish convergence of the ES algorithm to a root of $U^{obs}(X^{obs}, \btheta)$ A result and a general proof technique from \cite{Ortega-difference1966} is utilized.

\begin{lemma}[\cite{Ortega-difference1966}]
\label{lemma:remainder}
    Let $H$ be a $k\times k$ matrix with spectral radius $\rho(H) \equiv \lambda < 1$ and let $r:\mathcal{M} \subset \mathbb{R}^k \rightarrow \mathbb{R}^k$ be a mapping from a neighborhood of the origin, $\mathcal{M}$, into $\mathbb{R}^l$ such that 
$$\frac{||r(x)||}{||x||} \rightarrow 0, \text{  as  } ||x|| \rightarrow 0.$$
Then given any constant $\epsilon$ satisfying $\lambda < \lambda+\epsilon<1$, there is an open neighborhood of the origin $\mathcal{M}'$ and a positive constant $d$ such that for any initial vector $e^{(0)} \in \mathcal{M}'$ a solution $e^{(k)}$ of $e^{(k)} = He^{(k-1)} + r(e^{(k-1)})$ exists and satisfies 
$$||e^{(k)}|| \le d(\lambda+\epsilon)^k ||e^{(0)}||.$$
\end{lemma}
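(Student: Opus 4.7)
The plan is to replace the Euclidean norm with an adapted norm tailored to the matrix $H$, so that the linear part becomes strictly contractive at a rate only slightly worse than the spectral radius. Specifically, I would first invoke the classical fact that for any $\epsilon_1 > 0$ there exists a vector norm $\|\cdot\|_*$ on $\mathbb{R}^k$ (obtainable, for instance, by applying a suitable diagonal rescaling to a basis in which $H$ is in Jordan form) whose induced operator norm satisfies $\|H\|_* \le \lambda + \epsilon_1$. Since all norms on $\mathbb{R}^k$ are equivalent, there exist positive constants $c_1, c_2$ with $c_1 \|x\| \le \|x\|_* \le c_2 \|x\|$ for every $x \in \mathbb{R}^k$.

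Next, I would take $\epsilon_1 = \epsilon/2$, giving $\|H\|_* \le \lambda + \epsilon/2$, and combine the remainder hypothesis with the equivalence of norms to find an open ball $B_* = \{x : \|x\|_* < \eta\} \subset \mathcal{M}$ on which $\|r(x)\|_* \le (\epsilon/2)\|x\|_*$. Then on $B_*$ the triangle inequality yields $\|H e + r(e)\|_* \le (\lambda + \epsilon)\|e\|_*$. Setting $\mathcal{M}' = B_*$, an easy induction shows that if $e^{(0)} \in \mathcal{M}'$ then every iterate remains in $\mathcal{M}'$ (since $\lambda + \epsilon < 1$ forces the $\|\cdot\|_*$-norm to shrink at each step), so the iteration map is well defined at every stage, and moreover $\|e^{(k)}\|_* \le (\lambda+\epsilon)^k \|e^{(0)}\|_*$. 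Converting back to the original norm via the equivalence constants yields $\|e^{(k)}\| \le (c_2/c_1)(\lambda+\epsilon)^k \|e^{(0)}\|$, so the constant $d = c_2/c_1$ works.

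The main obstacle is the construction of the adapted norm $\|\cdot\|_*$, i.e., showing that the induced operator norm can be brought arbitrarily close to the spectral radius. This is the standard lemma whose proof typically proceeds via the Jordan canonical form (or Schur decomposition), rescaling the generalized eigenbasis by $\mathrm{diag}(1, \delta, \delta^2, \dots)$ for sufficiently small $\delta$ so that the off-diagonal Jordan entries contribute negligibly to the operator norm. A secondary subtlety is ensuring that the iterates cannot escape the neighborhood on which the remainder estimate is valid; this is handled automatically by the strict contraction property $\lambda + \epsilon < 1$ once the initial vector lies in $\mathcal{M}'$, but it must be checked as part of the induction so that $r(e^{(k-1)})$ is defined and the bound $\|r(e^{(k-1)})\|_* \le (\epsilon/2)\|e^{(k-1)}\|_*$ remains applicable at every step.
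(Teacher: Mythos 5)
The paper does not prove this lemma; it is quoted verbatim from Ortega (1966) and used as a black box in Theorem \ref{thm:iterate_convergence}, so there is no in-paper argument to compare against. Your proof is correct and is essentially the standard one from that literature (the Ostrowski-type attraction argument in Ortega--Rheinboldt): build an adapted norm with $\|H\|_* \le \lambda + \epsilon/2$ via a rescaled Jordan basis, absorb the remainder on a small ball, verify by induction that the iterates stay in the ball, and translate back through norm equivalence to get the constant $d = c_2/c_1$. The only cosmetic point is that the codomain $\mathbb{R}^l$ in the statement must be read as $\mathbb{R}^k$ for the iteration to make sense, which you implicitly and correctly do.
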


Now, this leads to the following theorem

\begin{theorem}
\label{thm:iterate_convergence}
    Let $\tilde{\bdelta}$ be a solution to $U(X^{obs},\tilde{\bdelta})=0$ and define $g(\bdelta_1, \bdelta_2)$ as in \eqref{eqn:diff_fun}. If $\rho(-D_{\bdelta_1}g(\tilde{\bdelta}, \tilde{\bdelta})^{-1} D_{\bdelta_2}g(\tilde{\bdelta}, \tilde{\bdelta})) < 1$ and $g\in C^2$, then there exists a neighborhood $\mathcal{N}$ of $\tilde{\bdelta}$ such that $\bdelta^{(0)}\in \mathcal{N}$ implies that the proposed algorithm iterates towards $\tilde{\bdelta}$
\end{theorem}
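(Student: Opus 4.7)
The plan is to recast one iteration of the algorithm as a fixed-point map $\Phi$ defined implicitly by $g$, compute its linearization at $\tilde{\bdelta}$, and then apply Lemma \ref{lemma:remainder} to the resulting local error recursion.

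First, I would establish that the iteration is locally well-defined as a smooth map. Since $g\in C^2$, since $g(\tilde{\bdelta}, \tilde{\bdelta}) = 0$ (a direct consequence of $U(X^{obs}, \tilde{\bdelta}) = 0$ together with the block structure of $g$), and since the determinant computation preceding the theorem shows $D_{\bdelta_1} g(\tilde{\bdelta}, \tilde{\bdelta})$ is invertible, the implicit function theorem produces open neighborhoods $\mathcal{N}_1, \mathcal{N}_2$ of $\tilde{\bdelta}$ and a $C^2$ map $\Phi : \mathcal{N}_2 \to \mathcal{N}_1$ such that, for each $\bdelta^{(k)}\in \mathcal{N}_2$, the equation $g(\bdelta, \bdelta^{(k)}) = 0$ has the unique solution $\bdelta = \Phi(\bdelta^{(k)})$ in $\mathcal{N}_1$, with $\Phi(\tilde{\bdelta}) = \tilde{\bdelta}$. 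By the characterization of the algorithm already given in the paper, this means exactly that $\bdelta^{(k+1)} = \Phi(\bdelta^{(k)})$ whenever the iterate stays in $\mathcal{N}_2$.

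Next, I would differentiate the identity $g(\Phi(\bdelta), \bdelta) \equiv 0$ at $\bdelta = \tilde{\bdelta}$ to obtain
$$
D_{\bdelta_1} g(\tilde{\bdelta}, \tilde{\bdelta})\, D\Phi(\tilde{\bdelta}) + D_{\bdelta_2} g(\tilde{\bdelta}, \tilde{\bdelta}) = 0,
$$
so that $D\Phi(\tilde{\bdelta}) = -D_{\bdelta_1} g(\tilde{\bdelta}, \tilde{\bdelta})^{-1} D_{\bdelta_2} g(\tilde{\bdelta}, \tilde{\bdelta}) \equiv H$, which by hypothesis has spectral radius $\lambda < 1$. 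Setting $e^{(k)} = \bdelta^{(k)} - \tilde{\bdelta}$ and Taylor-expanding $\Phi$ about $\tilde{\bdelta}$ yields the error recursion
$$
e^{(k+1)} = \Phi(\tilde{\bdelta} + e^{(k)}) - \tilde{\bdelta} = H\, e^{(k)} + r(e^{(k)}),
$$
where $r$ is the first-order Taylor remainder of $\Phi$ at $\tilde{\bdelta}$. Because $\Phi \in C^2$ (inherited from $g \in C^2$ via the implicit function theorem), we have $\|r(x)\| = O(\|x\|^2)$, hence $\|r(x)\|/\|x\| \to 0$ as $\|x\| \to 0$.

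With these ingredients in hand, I would apply Lemma \ref{lemma:remainder} with the matrix $H$ above and the remainder $r$: for any $\epsilon$ with $\lambda < \lambda + \epsilon < 1$ there exist a neighborhood $\mathcal{M}'$ of the origin and a constant $d > 0$ such that any iterate starting from $e^{(0)} \in \mathcal{M}'$ satisfies $\|e^{(k)}\| \le d(\lambda+\epsilon)^k \|e^{(0)}\|$. Taking $\mathcal{N} = (\tilde{\bdelta} + \mathcal{M}') \cap \mathcal{N}_2$ then gives the advertised neighborhood, and the geometric bound yields $\bdelta^{(k)} \to \tilde{\bdelta}$, which is the desired convergence. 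The main obstacle is really the first step: verifying that $\Phi$ exists as a smooth single-valued map on a neighborhood of $\tilde{\bdelta}$; once the implicit function theorem is invoked using the invertibility established earlier, the remaining pieces are standard linearization and a direct appeal to the lemma.
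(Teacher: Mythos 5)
Your proposal is correct and follows essentially the same route as the paper's proof: both invoke the implicit function theorem at $\tilde{\bdelta}$ (your $\Phi$ is the paper's $u$), Taylor-expand to obtain the error recursion $e^{(k+1)} = H e^{(k)} + r(e^{(k)})$, and conclude via Lemma \ref{lemma:remainder}. The only difference is that you explicitly verify $D\Phi(\tilde{\bdelta}) = -D_{\bdelta_1}g(\tilde{\bdelta},\tilde{\bdelta})^{-1}D_{\bdelta_2}g(\tilde{\bdelta},\tilde{\bdelta})$ by differentiating the identity $g(\Phi(\bdelta),\bdelta)\equiv 0$, a step the paper leaves implicit but which is needed to connect the spectral-radius hypothesis to the recursion.
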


\begin{proof}
As $\tilde{\bdelta}$ satisfies $U(X^{obs},\tilde{\bdelta})=0$ it also satisfies $g(\tilde{\bdelta},\tilde{\bdelta}) = 0.$ By the implicit function theorem \cite{Apostol-1975, Dieudonne-foundations}, there exists a neighborhood $\mathcal{N}$ of $\tilde{\bdelta}$ and a unique continuous function $u(\bdelta)$ defined on $\mathcal{N}$ such that $u(\tilde{\bdelta}) = \tilde{\bdelta}$ and $(u(\bdelta), \bdelta)^t$ satisfies $g(u(\bdelta), \bdelta) = 0$ for all $\bdelta \in \mathcal{N}.$

By Taylor's Theorem 
$u(\bdelta) = u(\tilde{\bdelta}) + Du(\tilde{\bdelta})(\bdelta - \tilde{\bdelta}) + r(\bdelta - \tilde{\bdelta}).$ As g is twice continuously differentiable $u$ is also twice continuously differentiable \cite{Dieudonne-foundations} so, $||r(\bdelta - \tilde{\bdelta})|| \le c||\bdelta - \tilde{\bdelta})||^2$ for some $c>0$ and all $\bdelta \in \mathcal{N}.$ 

Now as $\bdelta^{(k+1)}$ satisfies $g(\bdelta^{(k+1)}, \bdelta^{(k)})=0,$ it follows that $\bdelta^{(k+1)} = u(\bdelta^{(k)}),$ and so 

$$u(\bdelta^{(k)}) = u(\tilde{\bdelta}) + Du(\tilde{\bdelta})(\bdelta^{(k)} - \tilde{\bdelta}) + r(\bdelta^{(k)} - \tilde{\bdelta})$$

yielding

$$\bdelta^{(k+1)} - \tilde{\bdelta} = Du(\tilde{\bdelta})(\bdelta^{(k)} - \tilde{\bdelta}) + r(\bdelta^{(k)} - \tilde{\bdelta}).$$

Using Lemma \ref{lemma:remainder}, there is a neighborhood $\mathcal{M}$ such that if $\bdelta^{(0)} \in \mathcal{M}$ then 
$$||\bdelta^{(k+1)} - \tilde{\bdelta}|| \le a b^k ||\bdelta^{(0)} - \tilde{\bdelta}||$$ for $a >0$ and $0 < b < 1.$ Therefore, if $\bdelta^{(0)} \in \mathcal{M}\cap\mathcal{N}$, the algorithm has attraction point $\tilde{\bdelta}.$
\end{proof}

Thus, a sufficient condition for the proposed ES algorithm to converge to a root of $U^{obs}(X^{obs}, \btheta)$ is that $\rho(-D_{\bdelta_1}g(\tilde{\bdelta}, \tilde{\bdelta})^{-1} D_{\bdelta_2}g(\tilde{\bdelta}, \tilde{\bdelta})) < 1$ and $g\in C^2$. The latter is true as the system of equations is a combination of continuous functions element-wise. The first condition will be assumed to be true. The value of the objective function can be evaluated at the solution after the fact to ensure the resulting estimators are indeed a (near) solution to the desired system of equations.

\subsection{Consistency of Parameter Estimates} \label{sec:consistency}

A desirable property of any estimator is convergence to the parameters of the generative model under some mode of convergence as the sample size tends to infinity. The following results show that when a sample is from a truncated normal distribution, the proposed estimators of the truncated normal distribution provide consistent estimates, as convergence in probability, of the parameters of the truncated normal distribution that generated the data. The proof will use the theory of $Z$-estimators, found in Van der Vaart \cite{vdVaart_1998}. From a random sample $X_1,\dots, X_n \sim \text{TN}(\mu,\sigma,\tau_l,\tau_u)$ define the random function $\Psi_n(\btheta) = U^{obs}((X_{n:1},\dots,X_{n:n})^t, \btheta)$. To prove consistency of estimator $\hat{\btheta}_n$, the solution to $\Psi_n(\btheta)=0,$ we we use a result in $Z$-estimation. In the proof the limiting form of $\Psi_n(\btheta)$, $\Psi(\btheta)$, will be needed. To derive $\Psi(\btheta)$, a result, Shorack and Wellner's result on $L$-estimators will be utilized \cite{Shorackandwellner}. The result on $L$-estimators is presented in the following Lemma for completeness.
\begin{lemma}[Shorack and Wellner \cite{Shorackandwellner}] \label{lemma:L_stat_shorackandwellner}
    Let $X_1, \dots, X_n$ be an iid sample from a distribution with distribution function $F$. Define $T_n = \frac{1}{n} \sum_{i=1}^n c_{ni}h(X_{n:i})$ and $J_n(t) = c_{ni}$ for $\frac{i-1}{n} < t \le \frac{i}{n}$ with $J_n(0) = c_{n1}.$ If there is a function $J$ on $[0,1]$ such that $|J| < M$ and $|J_n|<M$ for $M>0$, $h$ is an increasing left continuous function satisfying $|h(F^{-1})| \le M$, $c_{ni} = J(t)$ for some $t\in [\frac{i-1}{n}, \frac{i}{n}], 1\le i\le n$ and $J$ is Lipschitz on $[0,1]$, then $$\sqrt{n}(T_n - \mu(J,F)) \rightsquigarrow N(0, \sigma^2(J,F)),$$ where $$\mu(J,F) = \int_0^1 h(F^{-1}(t))J(t)dt$$ and $$\sigma^2(J,F) = \int_{-\infty}^{\infty}\int_{-\infty}^{\infty} (F(\min(x,y))-F(x)F(y))J(F(x))J(F(y))dxdy.$$
\end{lemma}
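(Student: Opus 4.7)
The plan is to show that $\sqrt{n}(T_n-\mu(J,F))$ is asymptotically equivalent to a continuous linear functional of the uniform empirical process and then invoke Donsker's theorem. First I would reduce to the uniform case via the quantile transform: put $U_i = F(X_i)$ and $\tilde{h} := h\circ F^{-1}$, which under the hypotheses is bounded (by $M$), left-continuous, non-decreasing, and therefore of bounded variation on $[0,1]$. Letting $\mathbb{U}_n$ and $\mathbb{U}_n^{-1}$ denote the empirical distribution function and its quantile function of $U_1,\dots,U_n$, and $\alpha_n(t)=\sqrt{n}(\mathbb{U}_n(t)-t)$ the uniform empirical process, a short computation gives
\begin{equation*}
T_n \;=\; \int_0^1 \tilde{h}(\mathbb{U}_n^{-1}(t))\,J_n(t)\,dt, \qquad \mu(J,F) \;=\; \int_0^1 \tilde{h}(t)\,J(t)\,dt.
\end{equation*}

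The core of the proof is the linearization
\begin{equation*}
\sqrt{n}\,(T_n-\mu(J,F)) \;=\; -\int_0^1 J(s)\,\alpha_n(s)\,d\tilde{h}(s) \;+\; o_P(1).
\end{equation*}
To obtain it I would split the difference into $\sqrt{n}\int_0^1 \tilde{h}(\mathbb{U}_n^{-1}(t))[J_n(t)-J(t)]\,dt$, which is $O(n^{-1/2})$ because the Lipschitz hypothesis forces $\|J_n-J\|_\infty = O(1/n)$ and $\tilde{h}$ is bounded, and $\sqrt{n}\int_0^1 [\tilde{h}(\mathbb{U}_n^{-1}(t))-\tilde{h}(t)]\,J(t)\,dt$. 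For the second piece I would apply the Stieltjes identity $\tilde{h}(a)-\tilde{h}(b)=\int_0^1[\mathbf{1}\{s\le a\}-\mathbf{1}\{s\le b\}]\,d\tilde{h}(s)$, Fubini, and the duality $\mathbf{1}\{s\le \mathbb{U}_n^{-1}(t)\} = \mathbf{1}\{\mathbb{U}_n(s-)<t\}$ to rewrite it as $\int_0^1 \left(\int_{\mathbb{U}_n(s-)}^{s} J(t)\,dt\right)d\tilde{h}(s)$. The inner integral equals $-J(s)(\mathbb{U}_n(s)-s)+\rho_n(s)$ with $|\rho_n(s)|\le \mathrm{Lip}(J)\,\|\mathbb{U}_n-\mathrm{id}\|_\infty^2$, and since $\tilde{h}$ has total variation $\le 2M$ and $\|\mathbb{U}_n-\mathrm{id}\|_\infty=O_P(n^{-1/2})$, the $\rho_n$ contribution becomes $o_P(n^{-1/2})$, producing the claimed linearization.

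Finally I would apply Donsker's theorem, which gives $\alpha_n\rightsquigarrow B$ in $D[0,1]$ under the sup-norm, where $B$ is the standard Brownian bridge. The map $\alpha\mapsto -\int_0^1 J(s)\alpha(s)\,d\tilde{h}(s)$ is bounded linear (hence continuous) because $J$ is bounded and $\tilde{h}$ has finite total variation, so the continuous mapping theorem delivers
\begin{equation*}
\sqrt{n}(T_n-\mu(J,F)) \;\rightsquigarrow\; -\int_0^1 J(s)\,B(s)\,d\tilde{h}(s),
\end{equation*}
a centered Gaussian whose variance, evaluated via $\mathrm{Cov}(B(u),B(v))=u\wedge v-uv$ and the change of variables $u=F(x)$, $v=F(y)$, matches the stated $\sigma^2(J,F)$. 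I expect the main obstacle to be the Stieltjes manipulations used to produce the linearization: because $\tilde{h}$ is only of bounded variation (not differentiable) and $\mathbb{U}_n^{-1}$ has jumps of size $1/n$, the exchange of the order of integration and the uniform remainder bound for $\rho_n$ must be argued carefully, and the assumptions that $J$ is Lipschitz and that $\tilde{h}$ is bounded and monotone are precisely what is needed to absorb these effects into an $o_P(1)$ error.
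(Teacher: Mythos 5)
The paper does not prove this lemma at all: it is imported verbatim as a cited result from Shorack and Wellner and used as a black box, so there is no in-paper proof to compare against. Judged on its own, your sketch is essentially the standard proof of asymptotic normality for $L$-statistics (the linearization of $\sqrt{n}(T_n-\mu(J,F))$ into $-\int_0^1 J(s)\,\alpha_n(s)\,d\tilde{h}(s)+o_P(1)$ followed by Donsker and the continuous mapping theorem), which is exactly the argument in the cited reference, and the main steps are sound: the representation $T_n=\int_0^1 \tilde{h}(\mathbb{U}_n^{-1}(t))J_n(t)\,dt$ via the quantile transform, the bound $\|J_n-J\|_\infty=O(1/n)$ from the Lipschitz and sampling hypotheses, the Fubini/duality step, and the remainder control via $\|\mathbb{U}_n-\mathrm{id}\|_\infty=O_P(n^{-1/2})$ together with the bounded variation of $\tilde{h}$. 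Two small points deserve care. First, in the inner integral you should track the discrepancy between $\mathbb{U}_n(s-)$ and $\mathbb{U}_n(s)$, which is at most $1/n$ at jump points and is absorbed into the $o_P(n^{-1/2})$ error only after integrating against $d\tilde{h}$, using its finite total variation. Second, your final change of variables gives variance $\int\!\!\int J(F(x))J(F(y))\bigl(F(x\wedge y)-F(x)F(y)\bigr)\,dh(x)\,dh(y)$, which agrees with the formula as stated in the lemma (with $dx\,dy$) only when $h$ is the identity; that is the only case the paper uses, but the lemma as written is slightly inconsistent for general $h$, and your proof correctly produces the general form.
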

Now, with Lemma \ref{lemma:L_stat_shorackandwellner}, the limiting form of $\Psi_n(\btheta)$ can be established, which will be used to establish the consistency of the proposed estimators.
\begin{lemma}
\label{lemma:limiting_system}
    Define the random function $\Psi_n(\btheta) = U^{obs}((X_{n:1},\dots,X_{n:n})^t, \btheta).$ If $$X_1,\dots, X_n \sim \text{TN}(\mu_0,\sigma_0,\tau_{l0},\tau_{u0}),$$ then $\Psi_n(\btheta) \overset{p}{\rightarrow} \Psi(\btheta),$ where 
\begin{equation}
    \Psi(\btheta) = \begin{pmatrix}
        \mu_0 + \sigma_0\alpha_{01} - \mu -\sigma\alpha_1 \\
        \mu(J_{\btheta},F_{\btheta_0}) - \mu\alpha_1 - \sigma\alpha_2 \\
        \tau_{l0} - \tau_l \\
        \tau_{u0} - \tau_u
    \end{pmatrix}
\end{equation} 
where $J_{\btheta}, \alpha_1$, and $\alpha_2$ are defined in Appendix \ref{app:standard_truncated}.
\end{lemma}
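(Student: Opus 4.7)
The plan is to show convergence in probability of $\Psi_n(\btheta)$ to $\Psi(\btheta)$ componentwise for each fixed $\btheta$. Write $\Psi_n = (\Psi_{n,1}, \Psi_{n,2}, \Psi_{n,3}, \Psi_{n,4})^t$ corresponding to the four rows of $U^{obs}$. The first two components are linear statistics in the order statistics $X_{n:k}$ weighted by deterministic weights $c_{n:k}(\btheta)$, plus purely deterministic averages of those weights; the last two are functions of the extreme order statistics $X_{n:1}$ and $X_{n:n}$ together with correction terms that I expect to be $O_p(1/n)$.

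For $\Psi_{n,1}$ and $\Psi_{n,2}$, the essential preliminary step is to show that the weights $c_{n:k}(\btheta) = \xi(n, h_l(n,\btheta), h_u(n,\btheta), k)$ admit a uniform approximation $c_{n:k}(\btheta) \approx J_{\btheta}(k/n)$ for a smooth bounded function $J_{\btheta}$ on $[0,1]$. Because $h_l(n,\btheta) = n c_l$ and $h_u(n,\btheta) = n c_u$ scale linearly in $n$, the $\text{Beta}(nc_l+k,\, nc_u+n+1-k)$ distribution whose quantile defines $\xi$ has shape parameters of order $n$, and its median is well approximated by its mean $(c_l+k/n)/(c_l+c_u+1)$; pushing this through $\Phi^{-1}$ yields the candidate $J_{\btheta}(t) = \Phi^{-1}((c_l+t)/(c_l+c_u+1))$. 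Because $c_l, c_u > 0$, the argument stays strictly inside $(0,1)$, so $J_{\btheta}$ is $C^{\infty}$, bounded, and Lipschitz on $[0,1]$. With this in hand, Lemma \ref{lemma:L_stat_shorackandwellner} applies with $h(x) = x$ and $F = F_{\btheta_0}$, whose inverse takes values in the compact interval $[\tau_{l0},\tau_{u0}]$, so the boundedness hypotheses are met. This gives $n^{-1}\sum_k X_{n:k} \overset{p}{\to} \mu_0 + \sigma_0 \alpha_{01}$, $n^{-1}\sum_k c_{n:k}(\btheta) \to \alpha_1 \equiv \int_0^1 J_{\btheta}(t)\,dt$, $n^{-1}\sum_k X_{n:k}\,c_{n:k}(\btheta) \overset{p}{\to} \mu(J_{\btheta}, F_{\btheta_0})$, and $n^{-1}\sum_k c_{n:k}(\btheta)^2 \to \alpha_2 \equiv \int_0^1 J_{\btheta}(t)^2\,dt$. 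Assembling these via the continuous mapping theorem yields the claimed limits for the first two components.

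For $\Psi_{n,3}$ and $\Psi_{n,4}$, I would invoke the convergence of extremes of a truncated distribution: because $F_{\btheta_0}$ has bounded support $[\tau_{l0},\tau_{u0}]$ with strictly positive density at each endpoint, $X_{n:1} \overset{p}{\to} \tau_{l0}$ and $X_{n:n}\overset{p}{\to}\tau_{u0}$. The correction terms are of the form $[\Phi(W_{n:n})-\Phi(W_{n:1})]/[(n-1)\phi(W_{n:1})/\sigma]$, with numerator bounded by $1$ and denominator of order $n$, since $W_{n:1}$ converges in probability to $(\tau_{l0}-\mu)/\sigma$, a finite real number at which $\phi$ is bounded below by a positive constant with probability tending to $1$. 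These corrections are therefore $O_p(1/n)$ and vanish in the limit, giving $\Psi_{n,3} \overset{p}{\to} \tau_{l0}-\tau_l$ and $\Psi_{n,4} \overset{p}{\to} \tau_{u0}-\tau_u$.

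The main obstacle is the uniform control of $c_{n:k}(\btheta) - J_{\btheta}(k/n)$ required to bring Lemma \ref{lemma:L_stat_shorackandwellner} to bear. This amounts to showing that the median of a Beta with shape parameters of order $n$ is uniformly close to its mean, which should follow from a Laplace expansion of the Beta density or, equivalently, from the CLT applied to its Gamma representation. Extra care is needed for the boundary indices $k=1$ and $k=n$, where one nominal shape parameter is small; however, the positive offsets $nc_l$ and $nc_u$ keep the relevant quantile in the interior of $(0,1)$, so a direct estimate closes those cases. Once this uniform approximation is secured, the remainder reduces to routine verifications of the hypotheses of Lemma \ref{lemma:L_stat_shorackandwellner} together with standard extreme-order-statistic arguments.
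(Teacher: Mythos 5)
Your proposal is correct and follows the same architecture as the paper's proof: identify the limiting weight function $J_{\btheta}$, reduce the first two components to $L$-statistics handled by Lemma \ref{lemma:L_stat_shorackandwellner} (plus the law of large numbers for $n^{-1}\sum_i X_i$), and treat the last two components via extremes of a compactly supported distribution with $O_p(1/n)$ correction terms. Your candidate $J_{\btheta}(t)=\Phi^{-1}\left((c_l+t)/(c_l+c_u+1)\right)$ is, after substituting the definitions of $c_l$ and $c_u$, exactly the paper's quantile function $\Phi^{-1}\left(\Phi(\tau_l^*)+t(\Phi(\tau_u^*)-\Phi(\tau_l^*))\right)$ of the standard truncated normal, so $\int_0^1 J_{\btheta}(t)\,dt=\alpha_1$ and $\int_0^1 J_{\btheta}(t)^2\,dt=\alpha_2$ as required. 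The one place where you genuinely diverge is the step you correctly flag as the main obstacle: relating $\xi(n,h_l(n,\btheta),h_u(n,\btheta),k)$ to $J_{\btheta}$. You propose an asymptotic ``median $\approx$ mean'' expansion for Beta distributions with shape parameters of order $n$, which would yield a uniform $O(1/n)$ error and then still require a perturbation argument, because Lemma \ref{lemma:L_stat_shorackandwellner} as stated demands the exact identity $c_{ni}=J(t)$ for some $t\in[\tfrac{i-1}{n},\tfrac{i}{n}]$ rather than an approximation. The paper instead proves (Property 2 of Appendix \ref{app:standard_truncated}) the exact two-sided sandwich $J_{\btheta}(\tfrac{k-1}{n})<\xi(n,h_l(n,\btheta),h_u(n,\btheta),k)<J_{\btheta}(\tfrac{k}{n})$, by trapping the Beta median between its mean and mode (both shape parameters exceed one, so a mean--median--mode inequality applies) and showing that both of those lie in the required window; continuity of $J_{\btheta}$ then delivers the hypothesis of the lemma exactly, with no approximation error to control and no special treatment of the boundary indices. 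Your route would work but is heavier; the paper's sandwich is the cleaner way to close precisely the gap you identified.
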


\begin{proof}
By Property 4 in Appendix \ref{app:standard_truncated}, $\mathbb{E}(X_i) = \mu_0 + \sigma_0\alpha_{01}.$ Also, by property 2, in Appendix \ref{app:standard_truncated}, $$\frac{1}{n}\sum_{k=1}^n \xi(n, h_l(n,\btheta), h_u(n,\btheta),k) \rightarrow \int_0^1 J_{\btheta}(p)dp, \text{ as } n\rightarrow \infty$$ and $$\frac{1}{n}\sum_{k=1}^n \xi(n, h_l(n,\btheta), h_u(n,\btheta),k)^2 \rightarrow \int_0^1 J^2_{\btheta}(p)dp, \text{ as } n\rightarrow \infty$$  This gives, along with the strong law of large numbers, that 
\begin{equation}
    \sum_{k=1}^n (X_{n:k} - \mu - \sigma \xi(n, h_l(n,\btheta),h_u(n,\btheta),k)) \overset{wp1}{\rightarrow} \mu_0 + \sigma_0\alpha_{01} - \mu -\sigma\alpha_1
\end{equation}

Now by properties 1 and 2 in Appendix \ref{app:standard_truncated}, we can apply Lemma \ref{lemma:L_stat_shorackandwellner} using $h(t)=t$ to obtain $$\frac{1}{n}\sum_{k=1}^n X_{n:k}\xi(n, h_l(n,\btheta), h_u(n,\btheta),k) - \mu(J_{\btheta}, F) = O_p(n^{-\frac{1}{2}}).$$
This yields

\begin{equation}
\begin{split}
    &n^{-1} \sum_{i=1}^n (X_{n:i} - \mu - \sigma \xi(n, h_l(n,\btheta), h_u(n,\btheta), k))\xi(n, h_l(n,\btheta), h_u(n,\btheta),k) \\
    &\overset{p}{\rightarrow} \mu(J_{\btheta}, F) - \mu \int_0^1 J_{\btheta}(p)dp - \sigma \int_0^1 J^2_{\btheta}(p)dp, \text{ as } n\rightarrow \infty.
\end{split}
\end{equation}

Now, looking at the third and fourth elements of $\Psi_n(\btheta),$
we have that there exists a positive number $L$ such that $\max\{|W_{n:1}|, |W_{n:2}|\}\le L$ as $\Theta$ is bounded. Thus $\frac{\Phi(W_{n:n}) - \Phi(W_{n:1})}{(n-1)\frac{1}{\sigma}\phi(W_{n:1})} = o_P(1)$ and $\frac{\Phi(W_{n:n}) - \Phi(W_{n:1})}{(n-1)\frac{1}{\sigma}\phi(W_{n:n})} = o_P(1)$ as $n\rightarrow \infty.$ Thus, as the truncated normal distribution is bounded above and below, we have 
$$X_{n:1} - \frac{\Phi(W_{n:n}) - \Phi(W_{n:1})}{(n-1)\frac{1}{\sigma}\phi(W_{n:1})} - \tau_l \rightarrow \tau_{l0} - \tau_l$$
and
$$X_{n:n} + \frac{\Phi(W_{n:n}) - \Phi(W_{n:1})}{(n-1)\frac{1}{\sigma}\phi(W_{n:n})} - \tau_u \rightarrow \tau_{u0} - \tau_u,$$
as $n\rightarrow \infty.$

Therefore, $\Psi_n(\btheta) \overset{p}{\rightarrow} \Psi(\btheta),$ where 
\begin{equation*}
    \Psi(\btheta) = \begin{pmatrix}
        \mu_0 + \sigma_0\alpha_{01} - \mu -\sigma\alpha_1 \\
        \mu(J_{\btheta},F_{\btheta_0}) - \mu\alpha_1 - \sigma\alpha_2 \\
        \tau_{l0} - \tau_l \\
        \tau_{u0} - \tau_u
    \end{pmatrix}.
\end{equation*} 
\end{proof}
With the limiting form of the system of equations, the results on $Z$-estimators is given in Lemma \ref{lemma:z_est_consistency} before stating the main theorem establishing consistency of the estimators of the truncated normal distribution parameters produced by the given ES algorithm.

\begin{lemma}[Van der Vaart \citep{vdVaart_1998})]
\label{lemma:z_est_consistency}
Let $\Psi_n$ be random vector-valued functions and let $\Psi$ be a fixed vector valued function of $\btheta$ such that for every $\epsilon>0$ 
$$\sup_{\btheta \in \Theta}||\Psi_n(\btheta) - \Psi(\btheta)||\overset{P}{\rightarrow}0,$$
$$\inf_{\btheta:d(\btheta,\btheta_0)\ge\epsilon}||\Psi(\btheta)||>0=||\Psi(\btheta_0)||.$$
Then any sequence of estimators $\hat{\btheta}_n$ such that $\Psi_n(\hat{\btheta}_n) = o_P(1)$ converges in probability to $\btheta_0.$
\end{lemma}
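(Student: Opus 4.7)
The plan is the standard contrapositive argument for Z-estimator consistency: since $\Psi$ is well-separated at $\btheta_0$, any near-zero of the random criterion $\Psi_n$ must sit near $\btheta_0$. I would follow the template of Chapter~5 of Van der Vaart \cite{vdVaart_1998}, which in this abstract setting reduces to a short two-line sandwich: transfer the information ``$\Psi_n(\hat{\btheta}_n)$ is small'' to ``$\Psi(\hat{\btheta}_n)$ is small'' via uniform convergence, then use well-separation to turn that into ``$\hat{\btheta}_n$ is close to $\btheta_0$.''

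First I would show $\Psi(\hat{\btheta}_n) = o_P(1)$. Evaluating the triangle inequality at the random point $\hat{\btheta}_n$ gives
$$||\Psi(\hat{\btheta}_n)|| \le ||\Psi(\hat{\btheta}_n) - \Psi_n(\hat{\btheta}_n)|| + ||\Psi_n(\hat{\btheta}_n)||.$$
The first summand is bounded above by $\sup_{\btheta\in\Theta}||\Psi_n(\btheta) - \Psi(\btheta)||$, which is $o_P(1)$ by the uniform convergence hypothesis, and the second summand is $o_P(1)$ by assumption on $\hat{\btheta}_n$. Hence $||\Psi(\hat{\btheta}_n)|| \overset{p}{\rightarrow} 0$.

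Next I would convert the separation hypothesis into convergence in probability. For fixed $\epsilon>0$, set $\eta := \inf\{||\Psi(\btheta)|| : d(\btheta,\btheta_0)\ge \epsilon\}$, which is strictly positive by the well-separation hypothesis (and note $\Psi(\btheta_0)=0$ is the unique zero at positive distance). The event $\{d(\hat{\btheta}_n,\btheta_0)\ge \epsilon\}$ is contained in $\{||\Psi(\hat{\btheta}_n)|| \ge \eta\}$, so
$$P\bigl(d(\hat{\btheta}_n,\btheta_0)\ge \epsilon\bigr) \le P\bigl(||\Psi(\hat{\btheta}_n)|| \ge \eta\bigr) \longrightarrow 0$$
by the previous step. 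Since $\epsilon>0$ was arbitrary, $\hat{\btheta}_n \overset{p}{\rightarrow} \btheta_0$.

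For the lemma itself there is essentially no difficulty; it is a deterministic manipulation once the two hypotheses are granted. The real obstacle, which lies outside this statement, will be verifying those two hypotheses for the specific $\Psi_n$ and $\Psi$ arising from the ES algorithm: upgrading the pointwise limit of Lemma~\ref{lemma:limiting_system} to \emph{uniform} convergence over the parameter space $\Theta$ (which will presumably require a compactness argument on $\Theta$, equicontinuity of the $L$-statistic weights $\xi(n,\cdot,\cdot,k)$, and a Glivenko--Cantelli style bound for the order-statistic sums), and checking that the limiting criterion $\Psi$ has $\btheta_0$ as its unique, well-separated zero. Those are the steps where the probabilistic content truly sits and where the subsequent consistency theorem will be made to work.
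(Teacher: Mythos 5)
Your argument is correct and is precisely the standard proof of Theorem~5.9 in Van der Vaart, which the paper cites for this lemma without reproducing a proof; the triangle-inequality transfer followed by the well-separation containment is exactly the intended argument. Your closing remark is also well taken: the substantive work lies in verifying uniform convergence and well-separation for the specific $\Psi_n$ and $\Psi$ of the ES system, not in this lemma itself.
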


\begin{theorem}
\label{theorem:estimator_consistency}
    Define the random function $\Psi_n(\btheta) = U^{obs}((X_{n:1},\dots,X_{n:n})^t, \btheta).$ If $$X_1,\dots, X_n \sim \text{TN}(\mu_0,\sigma_0,\tau_{l0},\tau_{u0}),$$ and $(\hat{\btheta}_n)$ is a sequence of estimators such that $\Psi(\hat{\btheta}_n) = o_p(1)$ as $n \rightarrow \infty$, then $\hat{\btheta}_n \overset{p}{\rightarrow} \btheta_0$, as $n \rightarrow \infty$.
\end{theorem}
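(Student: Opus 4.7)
The plan is to apply Lemma \ref{lemma:z_est_consistency} directly to $\Psi_n$ and the limit $\Psi$ produced by Lemma \ref{lemma:limiting_system}. This reduces the proof to verifying two conditions: uniform convergence of $\Psi_n$ to $\Psi$ over the parameter space $\Theta$, and that $\btheta_0$ is a well-separated zero of $\Psi$.

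For the uniform convergence, I would first assume $\Theta$ is compact (with $\sigma$ bounded away from $0$ and $\infty$, and $\tau_l + c \le \tau_u$ for some $c>0$). Lemma \ref{lemma:limiting_system} gives the pointwise convergence $\Psi_n(\btheta)\overset{p}{\to}\Psi(\btheta)$ for each $\btheta \in \Theta$. To lift this to uniform convergence I would establish stochastic equicontinuity: the functions $\btheta \mapsto h_l(n,\btheta)/n,\, h_u(n,\btheta)/n,\, \xi(n,\cdot,\cdot,k),\, \phi(W_{n:k}),\, \Phi(W_{n:k})$ are each jointly continuous in $\btheta$ with derivatives that are bounded uniformly in $k$ and $n$ on compact subsets of $\Theta$. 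Combined with the almost-sure convergence $X_{n:1}\to \tau_{l0}$ and $X_{n:n}\to \tau_{u0}$ (standard for a bounded-support density), the remainder terms in the third and fourth components are $o_p(1)$ uniformly in $\btheta$. A standard compactness/equicontinuity argument then upgrades pointwise convergence on a countable dense subset of $\Theta$ to sup-norm convergence.

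For the well-separated zero, one verifies $\Psi(\btheta_0) = \boldsymbol{0}$ componentwise. The third and fourth components vanish immediately. The first vanishes because $\alpha_1(\btheta_0) = \alpha_{01}$ by definition. The second vanishes because $\mu(J_{\btheta_0},F_{\btheta_0})$ is, by construction, the population expectation $\mathbb{E}_{\btheta_0}[X\,J_{\btheta_0}(F_{\btheta_0}(X))] = \mu_0\alpha_1(\btheta_0) + \sigma_0\alpha_2(\btheta_0)$, the population normal equation for the location-scale regression on order-statistic medians. Uniqueness of the zero is then transparent: the third and fourth components decouple the system, forcing $\tau_l = \tau_{l0}$ and $\tau_u = \tau_{u0}$ regardless of $(\mu,\sigma)$; substituting these values, the first two components form the OLS normal equations for the two-parameter location-scale regression with a non-degenerate design (the constant column and $\{\xi(n,h_l(n,\btheta),h_u(n,\btheta),k)\}_k$ are not collinear), which has the unique solution $(\mu_0,\sigma_0)$. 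Continuity of $\Psi$ on the compact $\Theta$ then turns uniqueness into the well-separation $\inf_{d(\btheta,\btheta_0)\ge\epsilon}\|\Psi(\btheta)\|>0$. With both hypotheses of Lemma \ref{lemma:z_est_consistency} verified, any $\hat{\btheta}_n$ with $\Psi_n(\hat{\btheta}_n) = o_p(1)$ satisfies $\hat{\btheta}_n\overset{p}{\to}\btheta_0$.

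The principal obstacle will be the uniform convergence of the first two components, because $\btheta$ enters the weight function $J_{\btheta}$ via both the standardization and the truncation levels, so the $L$-statistic approximation in Lemma \ref{lemma:L_stat_shorackandwellner} must be shown to hold uniformly in $\btheta$. This reduces to producing a Lipschitz modulus for $\btheta \mapsto J_{\btheta}$ that is uniform over compacta, which in turn follows from smoothness of the beta quantile function $G^{-1}_{n,nc_l,nc_u,k}$ in its parameters once $\sigma$ is kept bounded away from $0$ and $\tau_u-\tau_l$ bounded away from $0$; this is the place where the compactness assumption on $\Theta$ is genuinely used.
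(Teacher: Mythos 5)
Your overall strategy --- verify the two hypotheses of Lemma \ref{lemma:z_est_consistency} for the limit $\Psi$ supplied by Lemma \ref{lemma:limiting_system} --- is the same as the paper's, and your treatment of the uniform-convergence hypothesis (compactness of $\Theta$ plus stochastic equicontinuity of the components) is actually more explicit than the paper's, which passes from the pointwise convergence of Lemma \ref{lemma:limiting_system} to an application of Lemma \ref{lemma:z_est_consistency} without addressing the supremum over $\Theta$. The substantive divergence is in how the well-separation condition $\inf_{\btheta: d(\btheta,\btheta_0)\ge\epsilon}\|\Psi(\btheta)\|>0$ is obtained. The paper does not attempt global uniqueness of the zero: it shows in Appendix \ref{app:derivative} that $\partial\Psi/\partial\btheta$ is continuous and nonsingular at $\btheta_0$, invokes the inverse function theorem to obtain a continuous local inverse of $\Psi$ near $0$, and deduces separation from local injectivity (for $\btheta$ with $\Psi(\btheta)$ in the resulting neighborhood $G$).

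Your route instead claims global uniqueness, and the step that fails is the assertion that, once the third and fourth components force $\tau_l=\tau_{l0}$ and $\tau_u=\tau_{u0}$, the first two components ``form the OLS normal equations for the two-parameter location-scale regression with a non-degenerate design'' and hence have the unique solution $(\mu_0,\sigma_0)$. That would be true if the design were fixed, but here the weight function $J_{\btheta}$ and the moments $\alpha_1=\alpha_1(\btheta)$, $\alpha_2=\alpha_2(\btheta)$ all depend on $(\mu,\sigma)$ through the standardized bounds $\tau_l^*=(\tau_l-\mu)/\sigma$ and $\tau_u^*=(\tau_u-\mu)/\sigma$, as does $\mu(J_{\btheta},F_{\btheta_0})$ through $J_{\btheta}$. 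After substituting $\tau_l=\tau_{l0}$, $\tau_u=\tau_{u0}$, the remaining two equations are genuinely nonlinear in $(\mu,\sigma)$ --- the ``regressor'' is being re-chosen as a function of the parameters being solved for --- so uniqueness of the root is exactly the hard part that the linear-regression analogy does not deliver. This is why the paper retreats to a local Jacobian argument. (One can fairly note that the paper's own argument only excludes other zeros of $\Psi$ in a neighborhood of $\btheta_0$, so it too falls short of the global infimum condition in Lemma \ref{lemma:z_est_consistency}; but your proposal, as written, rests the entire separation step on a global uniqueness claim whose justification does not apply to this system.)
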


\begin{proof}
Let $\epsilon > 0$. Lemma \ref{lemma:limiting_system} established that $$\Psi_n(\btheta)\overset{p}{\rightarrow}\Psi(\btheta),$$
as $n \rightarrow \infty,$ where
\begin{align*}
    \Psi(\btheta)&=\begin{pmatrix}
        \mu_0 + \sigma_0\alpha_{01} - \mu -\sigma\alpha_1 \\
        \mu(J_{\btheta},F_{\btheta_0}) - \mu\alpha_1 - \sigma\alpha_2 \\
        \tau_{u0} - \tau_u \\
        \tau_{l0} - \tau_l
    \end{pmatrix}
\end{align*}
where $\mu(J_{\btheta},F_{\btheta_0}) = \int_0^1 (F_{\btheta_0}^{-1}(t))J_{\btheta}(t)dt$. Now in Appendix \ref{app:derivative} it is established that $\frac{\partial\Psi}{\partial \btheta}$ is continuous at $\btheta_0$ and that $\det\left(\frac{\partial\Psi}{\partial \btheta}(\btheta_0)\right) > 0.$
Thus, by the inverse function theorem, there exist neighborhoods $G$ of $\Psi(\btheta_0)=0$ and $H$ of $\btheta_0$ and a continuous function $g:G\rightarrow H$ such that $g$ is a local inverse to $\Psi$. Thus if $\Psi^*({\btheta})\in G$ then $\exists\delta>0$ such that if $||\Psi(\btheta)-\Psi(\btheta_0)||<\delta$ then $||\btheta - \btheta_0|| = ||g(\Psi(\btheta) )- g(\Psi(\btheta_0))|| < \epsilon.$ By contrapositive $||\btheta -\btheta_0|| \ge \epsilon$ implies $||\Psi(\btheta)||\ge \delta > 0= ||\Psi(\btheta_0)||$ as long as $\Psi(\btheta) \in G.$ Note that as $G$ is a neighborhood, this must happen for small enough $\epsilon$. Thus, by Lemma 2 $\hat{\btheta}_n \overset{p}{\rightarrow} \btheta_0$, as $n \rightarrow \infty$.
\end{proof}

\subsection{Asymptotic Distributions of Parameter Estimates}
Having established consistency of the estimators, the asymptotic distribution of the estimators is also of interest. The marginal asymptotic distributions of the bound estimators $\hat{\tau}_{nl}$ and $\hat{\tau}_{nu}$ as well as the joint asymptotic distribution of the estimators of the mean and standard deviation of the parent normal distribution $(\hat{\mu}_n,\hat{\sigma}_n)^t.$

To begin, the asymptotic distributions of  $\hat{\tau}_{nl}$ and $\hat{\tau}_{nu}$ are derived using the theory of extreme order statistics. The first result from Gnedenko, as presented in David \cite{David-orderstats2003} gives the asymptotic distribution of the minimum and maximum order statistics from the truncated normal distribution, which will be used in Theorem \ref{thm:limiting_dist_of_bounds}, establishing the asymptotic distribution of $\hat{\tau}_{nl}$ and $\hat{\tau}_{nu}$. 
\begin{lemma}[Gnedenko as Presented in \cite{David-orderstats2003}]
\label{lemma:extereme_limiting_dist}
    Let $X_{n:n}$ be the largest order statistic from a sample of size $n$ from a population with distribution function $F$. There exists constants $a_n$ and $b_n$ such that $\frac{1}{a_n} (X_{n:n}-b_n)\rightsquigarrow G_\alpha(x)=\exp\{-(-x)^\alpha\}, x\le0,\alpha>0$, where $G_\alpha$ is the distribution function of the $\text{Weibull}(\alpha)$ distribution if and only if the upper bound $\xi_1$ is finite and $$\lim_{x\rightarrow0+}\frac{1-F(\xi_1-tx)}{1-F(\xi_1-x)}=t^\alpha$$ for every $t>0.$ Moreover, the standardizing constant $a_n$, making $$\lim_{n \rightarrow \infty }\text{Pr}(X_{n:n}\le\xi_1+a_nx)=G_\alpha(x)$$ may be chosen as $$a_n = \xi_1 - F^{-1}\left(1-\frac{1}{n}\right).$$
\end{lemma}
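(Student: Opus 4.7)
The plan is to reduce the extreme-value statement to a tail regularity condition on $F$ near the upper endpoint and then invoke standard Karamata-style manipulations. First, using iid-ness, I would write $P(X_{n:n} \le \xi_1 + a_n x) = F(\xi_1 + a_n x)^n$. For $x \le 0$ and $a_n \downarrow 0$ the base tends to $1$, so Taylor-expanding $\log(1-u) = -u + O(u^2)$ shows that convergence of $F(\xi_1 + a_n x)^n$ to $G_\alpha(x) = \exp(-(-x)^\alpha)$ is equivalent to the tail condition
$$n\bigl[1 - F(\xi_1 + a_n x)\bigr] \longrightarrow (-x)^\alpha \qquad \text{for every } x \le 0. \quad (\ast)$$
This is the single object I would analyze for both implications.

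For sufficiency, assume $\xi_1 < \infty$ and the regular-variation limit. I would choose $a_n = \xi_1 - F^{-1}(1 - 1/n)$, which is strictly positive and (for $F$ continuous near $\xi_1$) satisfies $1 - F(\xi_1 - a_n) = 1/n$; any jump can be absorbed into a $1+o(1)$ factor. Writing $x = -y$ with $y \ge 0$, I would then normalize and write
$$n\bigl[1 - F(\xi_1 - a_n y)\bigr] \;=\; \frac{1 - F(\xi_1 - a_n y)}{1 - F(\xi_1 - a_n)},$$
which tends to $y^\alpha$ by the hypothesis applied along the null sequence $x = a_n$. Combined with the reduction to $(\ast)$, this gives the weak convergence to $G_\alpha$.

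For necessity, assume $(X_{n:n} - b_n)/a_n \rightsquigarrow G_\alpha$. Since $G_\alpha$ is supported on $(-\infty, 0]$ with $G_\alpha(0) = 1$, I would first argue $\xi_1 < \infty$: otherwise the event $\{X_{n:n} > b_n + \epsilon a_n\}$ would have positive limiting probability for some $\epsilon > 0$, contradicting $G_\alpha(0) = 1$. The convergence-of-types theorem then permits replacing $b_n$ by $\xi_1$ up to an adjustment of $a_n$, so that $(\ast)$ holds. Forming the ratio of $(\ast)$ at arguments $-y$ and $-1$ recovers the regular-variation limit along the particular null sequence $(a_n)$; monotonicity of $1 - F(\xi_1 - \cdot)$ near $0$ (together with a standard sandwich argument of Karamata) upgrades this to a limit along every sequence $x \downarrow 0^+$, yielding the claimed condition $\lim_{x \to 0^+}(1-F(\xi_1 - tx))/(1-F(\xi_1 - x)) = t^\alpha$. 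The explicit admissible choice $a_n = \xi_1 - F^{-1}(1 - 1/n)$ drops out as the normalization making $n[1 - F(\xi_1 - a_n)] = 1$, i.e., matching $(\ast)$ at $x = -1$.

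The main obstacle is the necessity direction: proving finiteness of $\xi_1$, justifying the reduction to $b_n = \xi_1$ via convergence of types, and promoting convergence along the distinguished sequence $(a_n)$ to convergence along every null sequence. These steps are classical but require the careful monotonicity and uniformity machinery from the theory of regularly varying functions. The sufficiency direction and the identification of the normalizing constant, by contrast, are essentially algebraic once the reduction $(\ast)$ is in hand.
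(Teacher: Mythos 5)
The paper does not prove this lemma: it is imported verbatim as Gnedenko's classical theorem ``as presented in David,'' and the only parts actually used downstream (in Theorem~\ref{thm:limiting_dist_of_bounds}) are the sufficiency direction and the formula $a_n=\xi_1-F^{-1}(1-1/n)$. So there is no in-paper argument to compare against; what can be assessed is whether your sketch would stand on its own. The sufficiency half does: the reduction of $F(\xi_1+a_nx)^n\to G_\alpha(x)$ to $n[1-F(\xi_1+a_nx)]\to(-x)^\alpha$, the choice of $a_n$ so that $n[1-F(\xi_1-a_n)]=1+o(1)$, and the application of the regular-variation hypothesis along the null sequence $x=a_n$ is exactly the standard argument (you should add the one-line check that $a_n\downarrow 0$, which is what licenses evaluating the hypothesis along that sequence).

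The necessity half, however, contains a genuine gap beyond the parts you flag as ``classical machinery.'' Your argument that $\xi_1<\infty$ is circular as written: if $\xi_1=\infty$, the event $\{X_{n:n}>b_n+\epsilon a_n\}$ has \emph{positive probability for each fixed $n$}, but that is perfectly consistent with its probability tending to $0=1-G_\alpha(\epsilon)$, so no contradiction with $G_\alpha(0)=1$ arises. (Distributions in the Gumbel domain with infinite endpoint exhibit exactly this behavior.) Finiteness of $\xi_1$ has to be extracted by first showing $a_n\to 0$ and $b_n\to\xi_1$ from the convergence-of-types theorem and the fact that $G_\alpha$ itself has a finite upper endpoint, and only then does the replacement of $b_n$ by $\xi_1$ go through. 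Likewise, promoting the limit of $n[1-F(\xi_1-a_ny)]$ along the distinguished sequence $(a_n)$ to the continuous limit $x\to 0^+$ for every $t>0$ is the substantive content of Gnedenko's theorem (the monotone sandwich between consecutive $a_n$'s), and your proposal names it without carrying it out. Given that the paper treats the lemma as a citation, the cleanest fix is to do the same; if you insist on a self-contained proof, these two steps are the ones that must be supplied in full.
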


\begin{theorem}
\label{thm:limiting_dist_of_bounds}
    Define the random function $\Psi_n(\btheta) = U^{obs}((X_{n:1},\dots,X_{n:n})^t, \btheta).$ If $$X_1,\dots, X_n \sim \text{TN}(\mu_0,\sigma_0,\tau_{l0},\tau_{u0}),$$ and $\hat{\btheta}_n$ is a sequence of estimators such that $\Psi(\hat{\btheta}_n) = o_p(1)$ as $n \rightarrow \infty$, then 
$$nf_{\btheta_0}(\tau_{u0})(\hat{\tau}_{nu}-\tau_{u0}) \rightsquigarrow W + 1$$ and $$nf_{\btheta_0}(\tau_{l0})(\hat{\tau}_{nl}-\tau_{l0}) \rightsquigarrow -(W + 1),$$ as $n \rightarrow \infty$, where $W \sim \text{Weibull(1)}.$ 
\end{theorem}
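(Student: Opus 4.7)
The plan is to isolate the sample-maximum contribution and to show that the remaining correction behaves deterministically in the limit. Since the fourth component of $\Psi_n(\hat\btheta_n)$ is solved exactly by the S-step of the ES algorithm, one has
\begin{equation*}
\hat\tau_{nu} - \tau_{u0} \;=\; (X_{n:n}-\tau_{u0}) \;+\; \frac{\Phi(\hat W_{n:n})-\Phi(\hat W_{n:1})}{(n-1)\hat\sigma_n^{-1}\phi(\hat W_{n:n})},
\end{equation*}
with an analogous identity for $\hat\tau_{nl}$. Multiplying by $nf_{\btheta_0}(\tau_{u0})$ reduces the theorem to two separate claims: first, $nf_{\btheta_0}(\tau_{u0})(X_{n:n}-\tau_{u0})\rightsquigarrow W$, and second, the scaled correction converges in probability to $1$.

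For the correction, Theorem \ref{theorem:estimator_consistency} supplies $\hat\mu_n\overset{p}{\to}\mu_0$, $\hat\sigma_n\overset{p}{\to}\sigma_0$, and (using the boundedness of the support) $\hat W_{n:1}\overset{p}{\to}\tau_{l0}^*$, $\hat W_{n:n}\overset{p}{\to}\tau_{u0}^*$. Combining this with the identity $f_{\btheta_0}(\tau_{u0}) = \sigma_0^{-1}\phi(\tau_{u0}^*)/[\Phi(\tau_{u0}^*)-\Phi(\tau_{l0}^*)]$ and the continuous mapping theorem, the scaled correction simplifies to $(n/(n-1))(1+o_p(1)) \to 1$.

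For the extreme-value term, apply Lemma \ref{lemma:extereme_limiting_dist} with $F=F_{\btheta_0}$ and upper endpoint $\xi_1=\tau_{u0}$. Continuity and positivity of $f_{\btheta_0}$ at $\tau_{u0}$ give the first-order expansion $1-F_{\btheta_0}(\tau_{u0}-x) = f_{\btheta_0}(\tau_{u0})x + o(x)$ as $x\downarrow 0$, so Gnedenko's tail condition holds with $\alpha=1$. The same expansion applied at $u_n := F_{\btheta_0}^{-1}(1-1/n)$ yields $a_n = \tau_{u0}-u_n \sim 1/[nf_{\btheta_0}(\tau_{u0})]$; hence
\begin{equation*}
nf_{\btheta_0}(\tau_{u0})(X_{n:n}-\tau_{u0}) \;=\; \bigl(nf_{\btheta_0}(\tau_{u0})\,a_n\bigr)\cdot\frac{X_{n:n}-\tau_{u0}}{a_n} \;\rightsquigarrow\; W
\end{equation*}
by Slutsky, where $W$ has the $G_1$ law on $(-\infty,0]$ from Lemma \ref{lemma:extereme_limiting_dist}. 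A further application of Slutsky combining this with the correction gives the first claim. The lower-bound statement follows by the same argument applied to the minimum (equivalently, to the maximum of $-X_i$), producing the sign-flipped limit $-(W+1)$.

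The main obstacle is the rate identification $a_n\cdot nf_{\btheta_0}(\tau_{u0})\to 1$: one needs the local Taylor expansion of $F_{\btheta_0}$ at $\tau_{u0}$ to be sharp enough that it governs $u_n$ at the correct scale, which is delivered by continuity and positivity of the density at the boundary. A secondary subtle point worth flagging is that $\Psi_n(\hat\btheta_n) = o_p(1)$ alone is not fine enough after multiplication by $n$; the argument relies on the S-step solving the bound equations exactly, which removes this difficulty for the third and fourth components.
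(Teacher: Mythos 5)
Your proof follows essentially the same route as the paper's: verify Gnedenko's tail condition with $\alpha=1$ via a local expansion of $F_{\btheta_0}$ at $\tau_{u0}$, identify the normalizing rate $a_n \sim 1/[nf_{\btheta_0}(\tau_{u0})]$, combine with the correction term via Slutsky, and obtain the lower-bound case by reflection. If anything you are more explicit than the paper on two points it glosses over — the computation showing the scaled correction tends to $1$, and the observation that an $o_p(1)$ residual in the bound equations is not negligible after multiplication by $n$ unless those components are solved exactly — so the argument is correct and, on those steps, slightly more careful.
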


\begin{proof}
By Theorem \ref{theorem:estimator_consistency}, we have $\hat{\btheta}_n \overset{p}{\rightarrow} \btheta_0$, as $n \rightarrow \infty$. Now the truncated normal distribution is bounded above by $\tau_{u0}$. Consider 
\begin{align*}
    \frac{1-F(\tau_{u0}-tx)}{1-F(\tau_{u0}-x)} &= \frac{1-\frac{\Phi(\frac{\tau_{u0}-tx-\mu_0)}{\sigma_0})-\Phi(\tau_{l0}^*)}{\Phi(\tau_{u0^*})-\Phi(\tau_{l0}^*)}}{1-\frac{\Phi(\frac{\tau_{u0}-x-\mu_0)}{\sigma_0})-\Phi(\tau_{l0}^*)}{\Phi(\tau_{u0}^*)-\Phi(\tau_{l0}^*)}} \\
    &= \frac{\Phi(\tau_{u0}^*)-\Phi(\tau_{u0}^* - \frac{tx}{\sigma_0})}{\Phi(\tau_{u0}^*)-\Phi(\tau_{u0}^*-\frac{x}{\sigma_0})}
\end{align*}
Thus, by L'Hoptial's Rule, it follows that 
\begin{align*}
    \lim_{x\rightarrow0+}\frac{1-F(\tau_{u0}-tx)}{1-F(\tau_{u0}-x)} &= \lim_{x\rightarrow0+}\frac{\frac{t}{\sigma_0}\phi(\tau_{u0}^* - \frac{tx}{\sigma_0})}{\frac{1}{\sigma_0}\phi(\tau_{u0}^* - \frac{x}{\sigma_0})} \\
    &=t.
\end{align*}
Thus, by Lemma \ref{lemma:extereme_limiting_dist}, there exists a sequence $(a_n)$ such that $\frac{X_{n:n}-\tau_{u0}}{a_n}$ converges in distribution to the $\text{Weibull}(1)$ distribution, where $a_n = \tau_{u0} - F^{-1}\left(1-\frac{1}{n}\right).$ Now consider again using L'Hopital's rule 
\begin{align*}
    \lim_{x\rightarrow0+}\frac{F^{-1}(1)-F^{-1}(1-x)}{x} &= \lim_{x\rightarrow0+}\frac{1}{f(F^{-1}(1-x))} \\
    &= \frac{1}{f(\tau_{u0})}.
\end{align*}
This gives that $\lim_{n\rightarrow \infty} \frac{nf(\tau_{u0})}{\frac{1}{a_n}}=1,$ and so $nf(\tau_{u0})(X_{n:n}-\tau_{u0})\rightsquigarrow W$ as $n\rightarrow\infty$. Then, as $\hat{\tau}_{nu} = X_{n:n} + \frac{\Phi\left( \frac{X_{n:n} - \hat{\mu}_n }{\hat{\sigma}_n} \right) - \Phi\left( \frac{X_{n:n} - \hat{\mu}_n }{\hat{\sigma}_n} \right)}{(n-1) \frac{1}{\hat{\sigma}_n} \phi\left( \frac{X_{n:n} - \hat{\mu}_n }{\hat{\sigma}_n} \right)} + o_P(1)$, as $n \rightarrow \infty$ and by Slutsky's Theorem,
$nf_{\btheta_0}(\tau_{u0})(\hat{\tau}_{nu}-\tau_{u0}) \rightsquigarrow W + 1$, as $n \rightarrow \infty.$

Finally, for $Y \sim \text{TN}(\mu_0,\sigma_0,\tau_{l0},\tau_{u0})$ it follows that $-Y \sim \text{TN}(-\mu_0,\sigma_0,-\tau_{u0},-\tau_{l0}).$ Thus, repeating the above argument gives $-nf(\tau_{l0})(X_{n:1}-\tau_{l0}) \rightsquigarrow W$ as $n\rightarrow\infty$. Then, as $\hat{\tau}_{nl} = X_{n:1} - \frac{\Phi\left( \frac{X_{n:n} - \hat{\mu}_n }{\hat{\sigma}_n} \right) - \Phi\left( \frac{X_{n:n} - \hat{\mu}_n }{\hat{\sigma}_n} \right)}{(n-1) \frac{1}{\hat{\sigma}_n} \phi\left( \frac{X_{n:1} - \hat{\mu}_n }{\hat{\sigma}_n} \right)} + o_P(1)$, as $n\rightarrow\infty$ again using Slutsky's Theorem and the Continuous Mapping Theorem, $nf_{\btheta_0}(\tau_{l0})(\hat{\tau}_{nl}-\tau_{l0}) \rightsquigarrow -(W + 1)$, as $n \rightarrow \infty.$
\end{proof}

Next, to derive the joint asymptotic distribution of $(\hat{\mu}_n, \hat{\sigma}_n)^t$, consider the distribution of $\Psi_n(\btheta)$ for $\btheta \in \Theta$. Using the same result on $L$-estimators, it can be shown that 
$\Psi_n(\btheta)$ is asympotically bivariate normal.

\begin{lemma}
\label{lemma:L_stat_mvn_dist}
    If $X_1,\dots,X_n\sim TN(\mu_0,\sigma_0,\tau_{l0},\tau_{u0})$, then
    $$\sqrt{n}\begin{pmatrix}
        \frac{1}{n}\sum_{i=1}^nX_i - \mathbb{E}(X_i)\\
        \frac{1}{n}\sum_{i=1}^n\xi(n, h_l(n,\btheta), h_u(n,\btheta),i)X_i - \mu(J_{\btheta}, F_{\btheta_0})
    \end{pmatrix}\rightsquigarrow N\left(0,\bSigma(\btheta) \right)$$ and $n\rightarrow\infty$ for any $\btheta \in \Theta,$ where $\Sigma = \begin{pmatrix}
        \sigma_{11}(\btheta) & \sigma_{12}(\btheta) \\  
        \sigma_{12}(\btheta) & \sigma_{22}(\btheta)
    \end{pmatrix}$
for 
$$\sigma_{11}(\btheta)=\int_{\tau_{l0}}^{\tau_{u0}}\int_{\tau_{l0}}^{\tau_{u0}} [ F_{\btheta_0}(\min(x,y)) -F_{\btheta_0}(x)F_{\btheta_0}(y) ] dx dy $$
$$ \sigma_{12}(\btheta)=\int_{\tau_{l0}}^{\tau_{u0}}\int_{\tau_{l0}}^{\tau_{u0}} J_{\btheta}(F_{\btheta_0}(y)) [ F_{\btheta_0}(\min(x,y)) -F_{\btheta_0}(x)F_{\btheta_0}(y) ] dx dy$$
$$ \sigma_{22}(\btheta)=\int_{\tau_{l0}}^{\tau_{u0}}\int_{\tau_{l0}}^{\tau_{u0}} J_{\btheta}(F_{\btheta_0}(x))J_{\btheta}(F_{\btheta_0}(y)) [F_{\btheta_0}(\min(x,y)) -F_{\btheta_0}(x)F_{\btheta_0}(y) ] dx dy$$
\end{lemma}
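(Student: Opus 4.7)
My plan is to deduce this bivariate CLT from the scalar L-statistic limit result Lemma \ref{lemma:L_stat_shorackandwellner} via the Cramér-Wold device. The first step is to recognize that each of the two centered coordinates is an L-statistic with transformation $h(x) = x$. For the first coordinate, since $\sum_{i=1}^n X_i = \sum_{i=1}^n X_{n:i}$, the weights are trivial, $c_{ni}^{(1)} \equiv 1$, with limiting weight function $J^{(1)} \equiv 1$. For the second coordinate, $c_{ni}^{(2)} = \xi(n, h_l(n,\btheta), h_u(n,\btheta), i)$, which by construction equals $J_{\btheta}$ evaluated at a point of the $n$th partition interval $[(i-1)/n, i/n]$ (essentially the content of Property 2 in Appendix \ref{app:standard_truncated} that was already invoked in the proof of Lemma \ref{lemma:limiting_system}).

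Next, for any $(a,b) \in \mathbb{R}^2$, I would study the linear combination
$$a T_{1n} + b T_{2n} = \frac{1}{n}\sum_{i=1}^n \bigl(a + b\,\xi(n, h_l(n,\btheta), h_u(n,\btheta), i)\bigr) X_{n:i},$$
which is itself an L-statistic with limiting weight function $J^*(t) := a + b\,J_{\btheta}(t)$. The hypotheses of Lemma \ref{lemma:L_stat_shorackandwellner} hold here: the identity $h$ is increasing and left-continuous and $|h(F_{\btheta_0}^{-1}(t))|$ is bounded because the truncated normal has compact support $[\tau_{l0}, \tau_{u0}]$, while $J^*$ inherits boundedness and Lipschitz continuity on $[0,1]$ from $J_{\btheta}$. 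Applying the lemma yields
$$\sqrt{n}\bigl(a T_{1n} + b T_{2n} - \mu(J^*, F_{\btheta_0})\bigr) \rightsquigarrow N\bigl(0, \sigma^2(J^*, F_{\btheta_0})\bigr).$$

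To finish, I would match the asymptotic mean and variance to the targets in the lemma statement. Linearity of $\mu(\cdot, F_{\btheta_0})$ in its first argument gives $\mu(J^*, F_{\btheta_0}) = a\,\mathbb{E}(X_i) + b\,\mu(J_{\btheta}, F_{\btheta_0})$. Expanding the product $J^*(F_{\btheta_0}(x))J^*(F_{\btheta_0}(y))$ inside $\sigma^2(J^*, F_{\btheta_0})$ and exploiting the symmetry of the kernel $F_{\btheta_0}(\min(x,y)) - F_{\btheta_0}(x)F_{\btheta_0}(y)$ in $(x,y)$ yields $\sigma^2(J^*, F_{\btheta_0}) = a^2 \sigma_{11}(\btheta) + 2ab\,\sigma_{12}(\btheta) + b^2 \sigma_{22}(\btheta)$, after restricting the integrals to $[\tau_{l0}, \tau_{u0}]$ (where the kernel is supported). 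The Cramér-Wold device then delivers the bivariate normal limit with covariance matrix $\bSigma(\btheta)$.

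The step I anticipate as the main obstacle is the verification of the Lipschitz and uniform boundedness hypotheses on $J_{\btheta}$, because $J_{\btheta}$ is a composition with $\Phi^{-1}$, whose derivative blows up near $0$ and $1$. I would address this by using the structure of the problem: the beta shape parameters defining $\xi$, namely $h_l(n,\btheta) + k$ and $h_u(n,\btheta) + n + 1 - k$, grow linearly in $n$ with strictly positive asymptotic proportions $c_l$ and $c_u$ (guaranteed by $\tau_l < \tau_u$), so the relevant beta quantiles remain in a compact subinterval of $(0,1)$ bounded away from the endpoints, on which $\Phi^{-1}$ is smooth and Lipschitz.
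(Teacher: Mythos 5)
Your proposal is correct and follows essentially the same route as the paper: both apply the Shorack--Wellner $L$-statistic lemma to the linear combination with weight function $a+bJ_{\btheta}$, match the mean by linearity and the variance by expanding the product against the symmetric kernel, and conclude via the Cram\'{e}r--Wold device (the paper additionally derives the two marginals separately first, but that is subsumed by your treatment of arbitrary $(a,b)$). Your closing concern about the Lipschitz property of $J_{\btheta}$ is resolved exactly as you suggest and as the paper's Property 1 shows: the argument of $\Phi^{-1}$ stays in the compact subinterval $[\Phi(\tau_l^*),\Phi(\tau_u^*)]$ of $(0,1)$, so the derivative is bounded.
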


\begin{proof}
    Let $X_1,\dots,X_n\sim TN(\mu_0,\sigma_0,\tau_{l0},\tau_{u0})$. 
    
    By Lemma \ref{lemma:L_stat_shorackandwellner}, $\sqrt{n}(\frac{1}{n}\sum_{i=1}^n\xi(n, h_l(n,\btheta), h_u(n,\btheta),i)X_i - \mu(J_{\btheta}, F_{\btheta_0})) \rightsquigarrow N(0,\sigma^2(J_{\btheta,F_{\btheta_0}})),$ where $$\sigma^2(J_{\btheta},F_{\btheta_0}) = \int_{\tau_{l0}}^{\tau_{u0}}\int_{\tau_{l0}}^{\tau_{u0}} J_{\btheta}(F_{\btheta_0}(x))J_{\btheta}(F_{\btheta_0}(y)) [F_{\btheta_0}(\min(x,y)) -F_{\btheta_0}(x)F_{\btheta_0}(y) ] dx dy.$$ By the Central Limit Theorem,
    $\sqrt{n}(\frac{1}{n}\sum_{i=1}^nX_i - \mathbb{E}(X_i)) \rightsquigarrow N(0, \mathbb{V}(X_i)).$ Note that in the appendix, it is shown that 
    $$\mathbb{V}(X_i) = \int_{\tau_{l0}}^{\tau_{u0}}\int_{\tau_{l0}}^{\tau_{u0}} [ F_{\btheta_0}(\min(x,y)) -F_{\btheta_0}(x)F_{\btheta_0}(y) ] dx dy.$$
    Finally, consider 
    \begin{align*}
        \bv^t \begin{pmatrix}
        \frac{1}{n}\sum_{i=1}^nX_i )\\
        \frac{1}{n}\sum_{i=1}^n\xi(n, h_l(n,\btheta), h_u(n,\btheta),i)X_i)
    \end{pmatrix} &= v_1 \frac{1}{n}\sum_{i=1}^nX_i + v_2 \frac{1}{n}\sum_{i=1}^n\xi(n, h_l(n,\btheta), h_u(n,\btheta),i)X_i \\
    &= \frac{1}{n}\sum_{i=1}^n(v_1 + v_2 \xi(n, h_l(n,\btheta), h_u(n,\btheta),i))X_i 
    \end{align*}
    Define $H(p;\bv,\btheta) = v_1 + v_2J_{\btheta}(p).$ For fixed $\bv$ we have that $H(p;\bv,\btheta)$ is bounded, Lipschitz, and that $v_1 + v_2 \xi(n, h_l(n,\btheta), h_u(n,\btheta),i) = H(p_i;\bv,\btheta)$ for $p_i \in (\frac{i-1}{n}, \frac{i}{n}].$
    Thus, again by Lemma \ref{lemma:L_stat_shorackandwellner},
    $$\sqrt{n} (\frac{1}{n}\sum_{i=1}^n(v_1 + v_2 \xi(n, h_l(n,\btheta), h_u(n,\btheta),i))X_i - \mu(H(\cdot;\bv,\btheta),F_{\btheta_0})) \rightsquigarrow N(0, \sigma^2(H(\cdot;\bv,\btheta),F_{\btheta_0}))$$

    Now,
    \begin{align*}
       \mu(H_{\bv,\btheta},F_{\btheta_0})) &= \int_0^1 F_{\btheta_0}^{-1}(p)H_{\bv,\btheta}(p) dp \\
       &=\int_0^1 F_{\btheta_0}^{-1}(p)(v_1 + v_2 J_{\btheta}(p)) dp \\
       &= v_1 \mathbb{E}(X_i) + v_2 \mu(J_{\btheta},F_{\btheta_0}),
    \end{align*}
    and,
    \begin{alignat*}{2}
    \sigma^2(H_{\bv,\btheta},F_{\btheta_0}))  &= \int_{\tau_{l0}}^{\tau_{u0}}\int_{\tau_{l0}}^{\tau_{u0}} && H_{\bv,\btheta}(F_{\btheta_0}(x))H_{\bv,\btheta}(F_{\btheta_0}(y)) \\ 
    & && \times [F_{\btheta_0}(x \wedge y) -F_{\btheta_0}(x)F_{\btheta_0}(y) ] dx dy \\
    &= \int_{\tau_{l0}}^{\tau_{u0}}\int_{\tau_{l0}}^{\tau_{u0}} &&(v_1 + v_2 J_{\btheta}(F_{\btheta_0}(x)))(v_1 + v_2 J_{\btheta}(F_{\btheta_0}(y))) \\
   & && \times [F_{\btheta_0}(x \wedge y) -F_{\btheta_0}(x)F_{\btheta_0}(y) ] dx dy \\
    &= v_1^2 \sigma_{11} + v_2^2 \sigma_{22} + 2v_1 v_2 \sigma_{12} \hspace{-5in}&&
    \end{alignat*}
    Thus, by the Cram\'{e}r-Wold device 
     $$\sqrt{n}\begin{pmatrix}
        \frac{1}{n}\sum_{i=1}^nX_i - \mathbb{E}(X_i)\\
        \frac{1}{n}\sum_{i=1}^n\xi(n, h_l(n,\btheta), h_u(n,\btheta),i)X_i - \mu(J_{\btheta}, F_{\btheta_0})
    \end{pmatrix}\rightsquigarrow N\left(0,\bSigma(\btheta) \right)$$ 
\end{proof}

Now, it has been shown that $\btheta_n \overset{P}{\rightarrow}\btheta_0$ and that $\sqrt{n}(\Psi_n(\btheta_0) - \Psi(\btheta_0)$ $\Psi(\btheta_0)) \rightsquigarrow (\bZ,0,0)^t$, where $\bZ\sim MVN_2(0, \bSigma(\btheta_0))$, the following conjecture will be utilized to show $(\hat{\mu}_n, \hat{\sigma}_n)$ is asymptotically bivariate normal.
\begin{conjecture} \label{conjecture}
    Define the random function $\Psi_n(\btheta) = U^{obs}((X_{n:1},\dots,X_{n:n})^t, \btheta).$ If $$X_1,\dots, X_n \sim \text{TN}(\mu_0,\sigma_0,\tau_{l0},\tau_{u0}),$$ and $\hat{\btheta}_n$ is a sequence of estimators such that $\Psi(\hat{\btheta}_n) = o_p(n^{-1/2})$ as $n \rightarrow \infty$,
    then $\sqrt{n} (\Psi_n(\hat{\btheta}_n) - \Psi(\hat{\btheta_n}) - \sqrt{n} (\Psi_n(\btheta_0) - \Psi(\btheta_0) = o_P(1)$. 
\end{conjecture}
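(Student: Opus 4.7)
The plan is to establish the stochastic equicontinuity statement by writing $G_n(\btheta) := \sqrt{n}(\Psi_n(\btheta) - \Psi(\btheta))$ and proving $G_n(\hat{\btheta}_n) - G_n(\btheta_0) = o_P(1)$ coordinate by coordinate. The hypothesis $\Psi(\hat{\btheta}_n) = o_P(n^{-1/2})$, combined with invertibility of $\partial\Psi/\partial\btheta$ at $\btheta_0$ (established in Appendix \ref{app:derivative} and used in Theorem \ref{theorem:estimator_consistency}), yields via the inverse function theorem the quantitative rate $\|\hat{\btheta}_n - \btheta_0\| = o_P(n^{-1/2})$, which is the key input I intend to exploit throughout.

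For the first two (location--scale) coordinates of $\Psi_n - \Psi$, the only genuinely random $\btheta$-dependent quantity is the L-statistic $L_n(\btheta) := n^{-1}\sum_{k=1}^n c_{n:k}(\btheta) X_{n:k}$ appearing in the second coordinate; the deterministic Riemann sums $n^{-1}\sum c_{n:k}(\btheta)^j$ differ from $\alpha_j(\btheta)$ by $O(n^{-1})$ uniformly on a neighborhood of $\btheta_0$ (Property~2 of Appendix \ref{app:standard_truncated}), while $n^{-1}\sum X_{n:k}$ does not depend on $\btheta$ at all and cancels in $G_n(\hat{\btheta}_n) - G_n(\btheta_0)$. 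Using the approximation $c_{n:k}(\btheta) = J_{\btheta}(k/n) + O(n^{-1})$ together with a first-order Taylor expansion of $J_{\btheta}(p) = \Phi^{-1}\bigl((c_l(\btheta) + p)/(c_l(\btheta) + c_u(\btheta) + 1)\bigr)$ in $\btheta$ about $\btheta_0$ (which is smooth with bounded $p$-derivatives on a neighborhood of $\btheta_0$, since its argument lies in a compact subinterval of $(0,1)$), the L-statistic contribution reduces to
\[
\sqrt{n}\,(\hat{\btheta}_n - \btheta_0)^{\top}\bigl(\bar V_n - V_{\infty}\bigr) + \sqrt n \cdot O_P\bigl(\|\hat{\btheta}_n - \btheta_0\|^{2}\bigr),
\]
where $\bar V_n = n^{-1}\sum_{k=1}^n \nabla_{\btheta} J_{\btheta_0}(k/n)\,X_{n:k}$ and $V_{\infty} = \int_0^1 F_{\btheta_0}^{-1}(p)\nabla_{\btheta} J_{\btheta_0}(p)\,dp$. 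Applying Lemma \ref{lemma:L_stat_shorackandwellner} coordinate by coordinate to the bounded, Lipschitz weight $\nabla_{\btheta} J_{\btheta_0}$ yields $\bar V_n - V_{\infty} = O_P(n^{-1/2})$, so the first term is $\sqrt n \cdot o_P(n^{-1/2}) \cdot O_P(n^{-1/2}) = o_P(1)$ and the second is $\sqrt n \cdot o_P(n^{-1}) = o_P(1)$.

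For the third and fourth coordinates the only $\btheta$-dependent pieces are the correction terms
\[
C_l(\btheta) = \frac{\Phi(W_{n:n})-\Phi(W_{n:1})}{(n-1)\sigma^{-1}\phi(W_{n:1})}, \qquad C_u(\btheta) = \frac{\Phi(W_{n:n})-\Phi(W_{n:1})}{(n-1)\sigma^{-1}\phi(W_{n:n})},
\]
with $W_{n:i} = (X_{n:i}-\mu)/\sigma$, because $X_{n:1}$ and $X_{n:n}$ are $\btheta$-independent and the $-\tau_l$, $-\tau_u$ cancel between $\Psi_n$ and $\Psi$. The proof of Theorem \ref{thm:limiting_dist_of_bounds} already gives $C_l(\btheta_0), C_u(\btheta_0) = O_P(n^{-1})$, and a direct calculation using boundedness of $\phi$ together with boundedness of $W_{n:1}, W_{n:n}$ on a neighborhood of $\btheta_0$ yields $\nabla_{(\mu,\sigma)} C_l(\btheta) = O(n^{-1})$ there. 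Therefore $\sqrt n \,(C_l(\hat{\btheta}_n) - C_l(\btheta_0)) = O_P(\sqrt n\,\|\hat{\btheta}_n - \btheta_0\|/n) = o_P(n^{-1}) = o_P(1)$, and the same for $C_u$.

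The main obstacle will be the rigorous uniform control of the second-order Taylor remainder in $\btheta$ inside the L-statistic $L_n$. Since $J_{\btheta}$ arises through the quantile function of a beta distribution composed with $\Phi^{-1}$, one has to verify that $\nabla_{\btheta}^{2} J_{\btheta}(p)$ is jointly bounded on $[0,1]\times\mathcal{N}$ and that the resulting remainder L-statistic inherits an $O_P(\|\btheta - \btheta_0\|^{2})$ bound uniformly on $\mathcal{N}$, which is delicate because the hypotheses of Lemma \ref{lemma:L_stat_shorackandwellner} are stated for a fixed weight. A cleaner if heavier alternative that I would fall back on is to recast the argument in empirical process terms by showing that the class $\{x \mapsto J_{\btheta}(F_{\btheta_0}(x))\,x : \btheta \in \mathcal{N}\}$ is $F_{\btheta_0}$-Donsker via bracketing-number bounds for a smooth parametric family, after which van der Vaart's Lemma~19.24 delivers the stochastic equicontinuity of $G_n$ at $\btheta_0$ directly.
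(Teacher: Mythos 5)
The paper does not prove this statement at all: it is deliberately labeled a conjecture and is invoked as an unproved hypothesis in the proof of Theorem \ref{thm:limiting_dist_of_musigma}. So you are not reproducing or diverging from an argument in the paper; you are supplying one where none exists, and your plan is essentially sound. The coordinate-by-coordinate decomposition is the right one. For the third and fourth coordinates, the correction terms $C_l(\btheta),C_u(\btheta)$ are uniformly $O_P(n^{-1})$ over a compact neighborhood of $\btheta_0$ (bounded numerator, denominator of order $n$ with $\phi(W_{n:1})$ bounded away from zero), so even $\sqrt{n}\sup_{\btheta}|C_l(\btheta)|=o_P(1)$ and your gradient bound is more than is needed. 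For the second coordinate, the obstacle you flag --- uniform boundedness of $\nabla^2_{\btheta}J_{\btheta}(p)$ on $[0,1]\times\mathcal{N}$ --- is not actually delicate: the argument of $\Phi^{-1}$ in $J_{\btheta}(p)=\Phi^{-1}\bigl(\Phi(\tau_l^*)+p(\Phi(\tau_u^*)-\Phi(\tau_l^*))\bigr)$ stays in a compact subinterval of $(0,1)$ uniformly over a compact $\mathcal{N}$ with $\sigma$ bounded away from zero, and $\Phi^{-1}$ is smooth there, so all mixed partials of $J$ are jointly bounded and the quadratic remainder is uniformly $O(\|\btheta-\btheta_0\|^2)$; the Donsker fallback is unnecessary. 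Applying Lemma \ref{lemma:L_stat_shorackandwellner} to the fixed, bounded, Lipschitz weight $\nabla_{\btheta}J_{\btheta_0}$ to get $\bar V_n - V_\infty=O_P(n^{-1/2})$ is legitimate.

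The one point you must be explicit about is the provenance of your key rate $\|\hat{\btheta}_n-\btheta_0\|=o_P(n^{-1/2})$. You derive it from the hypothesis as literally written, $\Psi(\hat{\btheta}_n)=o_P(n^{-1/2})$ with the \emph{population} map $\Psi$, via the local Lipschitz inverse; that inference is valid, but this hypothesis is almost certainly a typo for $\Psi_n(\hat{\btheta}_n)=o_P(n^{-1/2})$, since under the literal reading Theorem \ref{thm:limiting_dist_of_musigma} would yield $\sqrt{n}(\hat{\btheta}_n-\btheta_0)=o_P(1)$, contradicting the nondegenerate normal limit asserted there. Under the intended empirical hypothesis your decomposition still closes, but for different reasons: the cross term needs only consistency, because $\bar V_n - V_\infty=O_P(n^{-1/2})$ already absorbs the $\sqrt{n}$, whereas the quadratic remainder needs $\|\hat{\btheta}_n-\btheta_0\|=o_P(n^{-1/4})$, which must then be obtained separately (e.g., via a preliminary $O_P(n^{-1/2})$ rate argument in the style of van der Vaart and Wellner's Theorem 3.3.1). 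State which version of the hypothesis you are using and route the rate accordingly; as written, your proof is correct for the literal statement but needs this one adjustment to serve the theorem it is meant to support.
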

With this conjecture, another result from the theory on $Z$-estimators can be utilized to get the asymptotic distribution of $(\hat{\mu}_n,\hat{\sigma}_n)^t.$ The proof of this theorem will follow closely to Theorem 3.3.1 from Van der Vaart and Wellner \cite{vdvandwellner}.


Finally, with this lemma, we can state the theorem for the asymptotic distribution of the estimators for $\mu$ and $\sigma.$

\begin{theorem}
\label{thm:limiting_dist_of_musigma}
Define the random function $\Psi_n(\btheta) = U^{obs}((X_{n:1},\dots,X_{n:n})^t, \btheta).$ If $$X_1,\dots, X_n \sim \text{TN}(\mu_0,\sigma_0,\tau_{l0},\tau_{u0}),$$ and $\hat{\btheta}_n$ is a sequence of estimators such that $\Psi_n(\hat{\btheta}_n) = o_p(n^{-1/2})$ as $n \rightarrow \infty$,
then
$$\sqrt n
   (\hat{\btheta}_n - \btheta_0)\rightsquigarrow \overset{\cdot}{\Psi}(\btheta_0)^{-1}(\bZ,0,0)^t,$$ as $n\rightarrow \infty,$ where $\bZ\sim MVN_2(0, \bSigma(\btheta_0))$.
    
\end{theorem}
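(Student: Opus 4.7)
My plan is to follow the classical $Z$-estimator master argument (Theorem 3.3.1 of Van der Vaart and Wellner), using Conjecture \ref{conjecture} as the asymptotic equicontinuity hypothesis and combining Lemma \ref{lemma:L_stat_mvn_dist} with Theorem \ref{thm:limiting_dist_of_bounds} to identify the limit of $\sqrt n\,\Psi_n(\btheta_0)$.

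First, I would combine the hypothesis $\Psi_n(\hat{\btheta}_n) = o_P(n^{-1/2})$ with the conjecture (applied along the consistent sequence $\hat{\btheta}_n \overset{P}{\rightarrow}\btheta_0$ from Theorem \ref{theorem:estimator_consistency}) and with $\Psi(\btheta_0) = 0$ to obtain
\[
\sqrt n\,\Psi(\hat{\btheta}_n) \;=\; -\sqrt n\,(\Psi_n(\btheta_0) - \Psi(\btheta_0)) + o_P(1).
\]
For the right-hand side, Lemma \ref{lemma:L_stat_mvn_dist} supplies the bivariate normal limit of the first two coordinates of $\sqrt n\,\Psi_n(\btheta_0)$. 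For the third and fourth coordinates evaluated at $\btheta_0$, Theorem \ref{thm:limiting_dist_of_bounds} yields $X_{n:1} - \tau_{l0} = O_P(n^{-1})$ and $X_{n:n} - \tau_{u0} = O_P(n^{-1})$, while the correction terms $(\Phi(W_{n:n})-\Phi(W_{n:1}))/((n-1)\sigma_0^{-1}\phi(W_{n:\cdot}))$ are also $O_P(n^{-1})$. Both these contributions are $o_P(n^{-1/2})$, so multiplying by $\sqrt n$ sends them to zero, and
\[
\sqrt n\,(\Psi_n(\btheta_0) - \Psi(\btheta_0)) \rightsquigarrow (\bZ, 0, 0)^t, \qquad \bZ \sim MVN_2(0, \bSigma(\btheta_0)).
\]

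Next, I would Taylor expand $\Psi$ around $\btheta_0$. The Jacobian computation in Appendix \ref{app:derivative} (already invoked for the inverse function theorem step in Theorem \ref{theorem:estimator_consistency}) shows that $\dot{\Psi}(\btheta_0)$ is continuous at $\btheta_0$ and satisfies $\det\dot{\Psi}(\btheta_0) > 0$, hence is invertible. Taylor's theorem then gives
\[
\Psi(\hat{\btheta}_n) = \dot{\Psi}(\btheta_0)(\hat{\btheta}_n - \btheta_0) + o_P(\|\hat{\btheta}_n - \btheta_0\|),
\]
and inverting, together with a standard bootstrap argument (the linear term dominates in a neighborhood of $\btheta_0$, forcing $\sqrt n\,(\hat{\btheta}_n - \btheta_0) = O_P(1)$, so the remainder is $o_P(1)$), produces
\[
\sqrt n\,(\hat{\btheta}_n - \btheta_0) = -\dot{\Psi}(\btheta_0)^{-1} \sqrt n\,(\Psi_n(\btheta_0) - \Psi(\btheta_0)) + o_P(1) \rightsquigarrow \dot{\Psi}(\btheta_0)^{-1}(\bZ,0,0)^t,
\]
with any overall sign being absorbed into the definition of $\dot{\Psi}$.

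The main obstacle is Conjecture \ref{conjecture} itself, i.e.\ showing
\[
\sqrt n\,(\Psi_n(\hat{\btheta}_n) - \Psi(\hat{\btheta}_n)) - \sqrt n\,(\Psi_n(\btheta_0) - \Psi(\btheta_0)) = o_P(1).
\]
This would typically be established by verifying that the class $\{\psi(\cdot;\btheta): \btheta \in \mathcal{N}(\btheta_0)\}$ is Donsker over a neighborhood of $\btheta_0$ and that the $L^2$-seminorm of $\psi(\cdot;\btheta) - \psi(\cdot;\btheta_0)$ vanishes as $\btheta\to\btheta_0$. The complication here is that the estimating function is not a straightforward empirical process: the second coordinate involves the $L$-statistic with weights $\xi(n, h_l(n,\btheta), h_u(n,\btheta), k)$ that depend on $\btheta$ through the whole vector of order statistics, so one would likely need a uniform-in-$\btheta$ refinement of the Shorack–Wellner weak convergence theorem for $L$-estimators. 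A further subtlety is that the last two coordinates are nonsmooth functions of the extreme order statistics, so the equicontinuity there must be argued directly (e.g.\ from Theorem \ref{thm:limiting_dist_of_bounds}) rather than from standard Donsker machinery.
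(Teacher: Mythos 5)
Your proposal follows essentially the same route as the paper's proof: both invoke Conjecture \ref{conjecture} to transfer the limit of $\sqrt{n}(\Psi_n-\Psi)(\btheta_0)$ to $\sqrt{n}\,\Psi(\hat{\btheta}_n)$, identify that limit as $(\bZ,0,0)^t$ via Lemma \ref{lemma:L_stat_mvn_dist} and the $O_P(n^{-1})$ rate from Theorem \ref{thm:limiting_dist_of_bounds}, and then invert the Jacobian from Appendix \ref{app:derivative}. Your version is, if anything, slightly more careful about the Taylor-expansion and tightness step and about what a proof of the conjecture would actually require, and the sign discrepancy you note is immaterial since $\bZ$ is symmetric.
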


\begin{proof}
As $\Psi_n(\hat{\btheta}_n) = o_p(\sqrt{n})$ as $n \rightarrow \infty$ , by Lemma \ref{lemma:z_est_consistency} $\hat{\btheta}_n \overset{P}{\rightarrow} \btheta_0$ and $n \rightarrow \infty.$ Also, as $\Psi(\btheta_0)=0,$ it follows that the first two elements of $\sqrt{n}(\Psi_n -\Psi)(\btheta_0) = \sqrt{n}\Psi_n(\btheta_0) \rightsquigarrow Z,$ as $n \rightarrow \infty$ where $Z \sim MVN_2(0, \bSigma(\btheta_0))$ by Lemma \ref{lemma:L_stat_mvn_dist}. Also, by Theorem \ref{thm:limiting_dist_of_bounds}, $(\hat{\tau}_{nl}-\tau_{l0}) = O_p(n^{-1})$ and $(\hat{\tau}_{nu}-\tau_{u0}) = O_p(n^{-1}),$ as $n\rightarrow \infty,$ giving the last two elements of $\sqrt{n}(\Psi_n -\Psi)(\btheta_0) = o_p(1)$, as $n\rightarrow \infty.$ Putting these results together it follows that  $\sqrt{n}(\Psi_n -\Psi)(\btheta_0) \rightsquigarrow (\bZ,0,0)^t$. 

Now, by Conjecture \ref{conjecture},  $\Psi(\hat{\btheta}_n) = o_p(n^{-1/2})$ as $n \rightarrow \infty$,
    which gives $\sqrt{n} (\Psi_n(\hat{\btheta}_n) - \Psi(\hat{\btheta_n}) - \sqrt{n} (\Psi_n(\btheta_0) - \Psi(\btheta_0) = o_P(1)$, so
    \begin{align*}
        \sqrt{n}\left(\Psi(\hat{\btheta_n}) - \Psi(\btheta_0)\right) &= -\sqrt{n} \left( \Psi_n(\btheta_0) - \Psi(\btheta_0) \right) + o_P(1)
    \end{align*}
    
    Thus, by the derivation in Appendix \ref{app:derivative}, $$\sqrt{n} 
   \begin{pmatrix}
       \hat{\mu}_n - \mu_0 \\
       \hat{\sigma}_n - \sigma_0
   \end{pmatrix} \rightsquigarrow MVN(0, \Gamma \Sigma\Gamma^t),$$ where 
   $\Gamma = \begin{bmatrix}
       \alpha_{2_0} - \alpha_{1_0}^2 & \alpha_{3_0} - \alpha_{1_0}\alpha_{2_0} \\
       \frac{1}{2}(\alpha_{3_0} - \alpha_{1_0}\alpha_{2_0}) & \frac{1}{2}(\alpha_{4_0} - \alpha_{2_0}^2)
   \end{bmatrix}^{-1}.$
\end{proof}

\section{Simulation} 
\label{sec:sims}

Monte Carlo simulation will be used to study the rates of convergence related to and visualize the large sample behavior of the proposed estimators. 

\subsection{Setup}

As the theory outlined in Section \ref{sec:fixed_convergence} only guaranteed convergence to a solution of the system of equations in \eqref{eqn:obsest} when the starting point is in a neighborhood of the solution. The simulation will use $\btheta^{(0)}= \btheta_0$. Different cases will be considered for the ground truth models, fixing $\mu_0=0$ and $\sigma_0=1$ with nine different cases for the values $\tau_{l0}$ and $\tau_{u0}$ shown in Table \ref{tab:cases}, which allow for varying to allow for different distribution shapes. Figure \ref{fig:cdf_cases} visualizes the shapes of the different truncated normal distributions considered.

\begin{table}[htb]
    \centering
    \caption{The different combinations of $\tau_{l0}$ and $\tau_{u0}$ used for the simulations.}
    \begin{tabular}{|c|c|c|c|c|c|c|}
        \hline
        Case & 1 & 2 & 3 & 4 & 5 & 6 \\
        \hline
        $\tau_{l0}$ & -3 & -2 & -2 & -1 & -1 & 1\\
        $\tau_{u0}$ & -1 & 1 & 2 & 1 & 2 & 3 \\
        \hline
    \end{tabular}
    \label{tab:cases}
\end{table}

\begin{figure}
    \centering

    \begin{subfigure}[b]{0.5\textwidth}
        \centering
        \includegraphics[width=\textwidth]{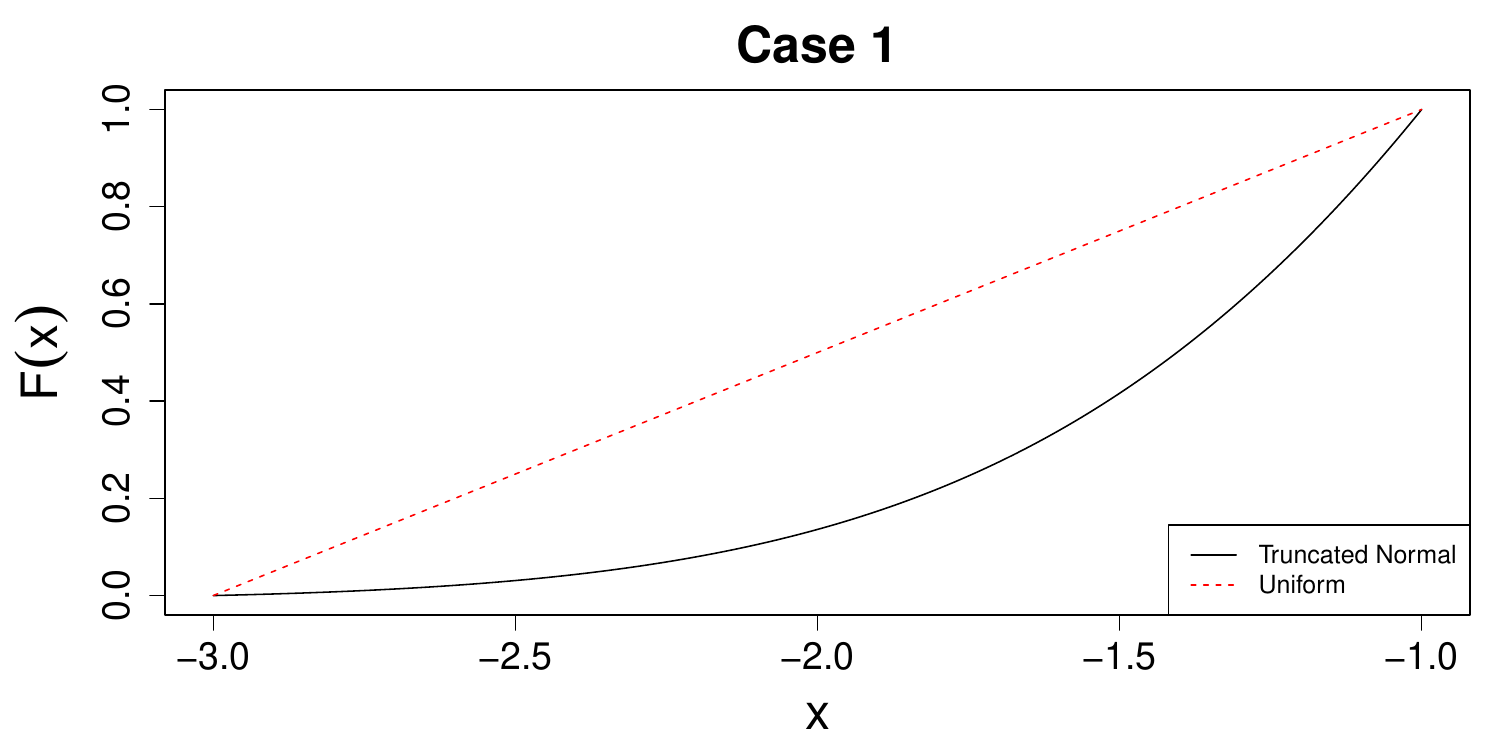}
    \end{subfigure}%
    \begin{subfigure}[b]{0.5\textwidth}
        \centering
        \includegraphics[width=\textwidth]{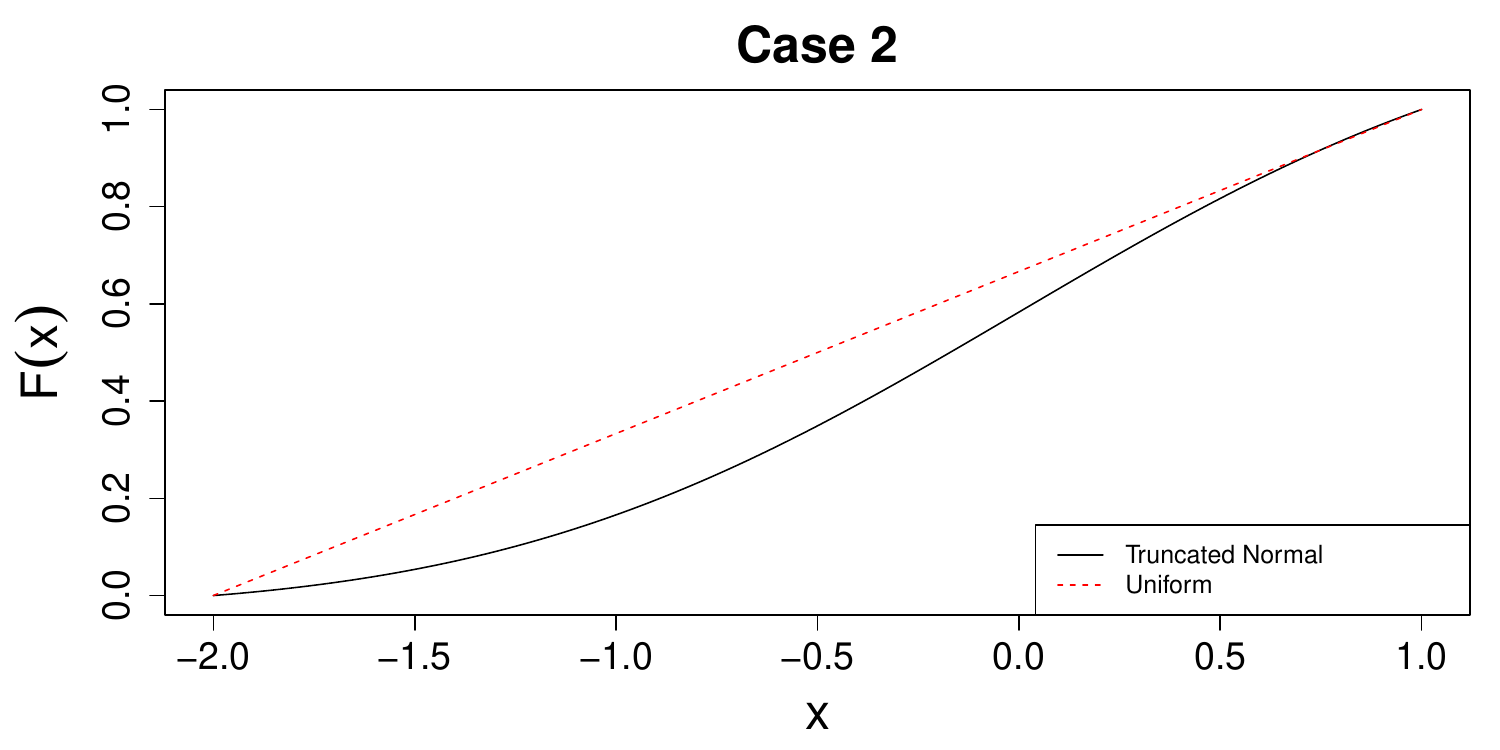}
    \end{subfigure}

    \begin{subfigure}[b]{0.5\textwidth}
        \centering
        \includegraphics[width=\textwidth]{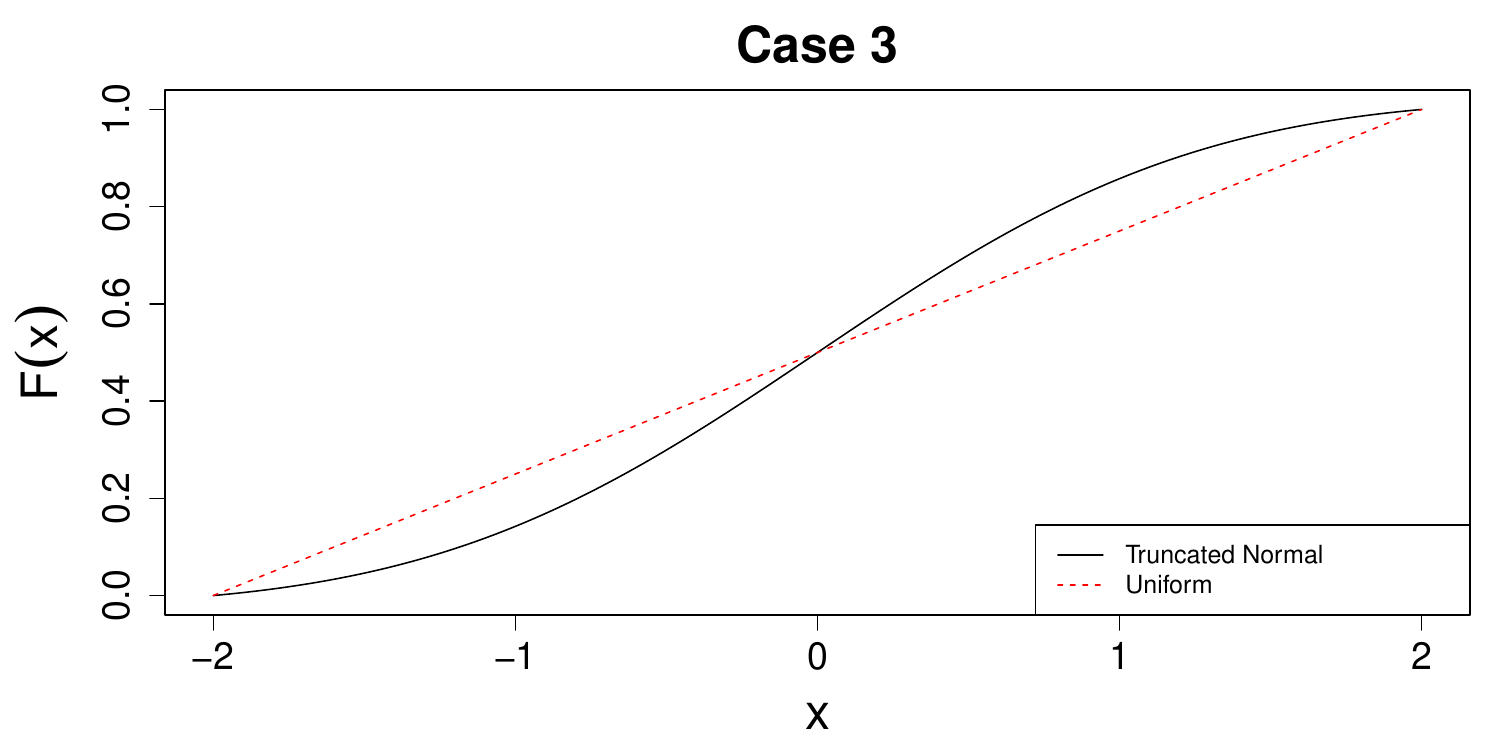}
    \end{subfigure}%
    \begin{subfigure}[b]{0.5\textwidth}
        \centering
        \includegraphics[width=\textwidth]{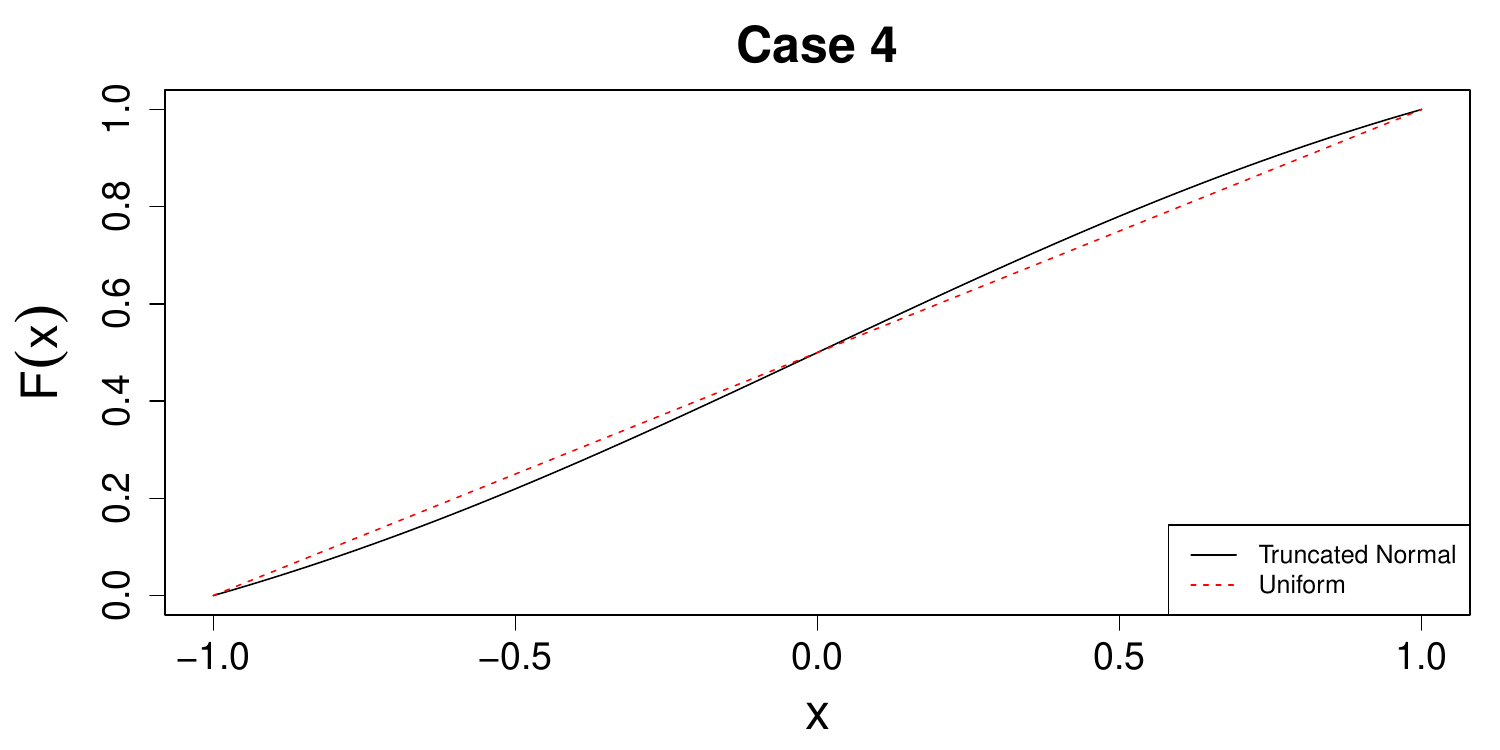}
    \end{subfigure}

    \begin{subfigure}[b]{0.5\textwidth}
        \centering
        \includegraphics[width=\textwidth]{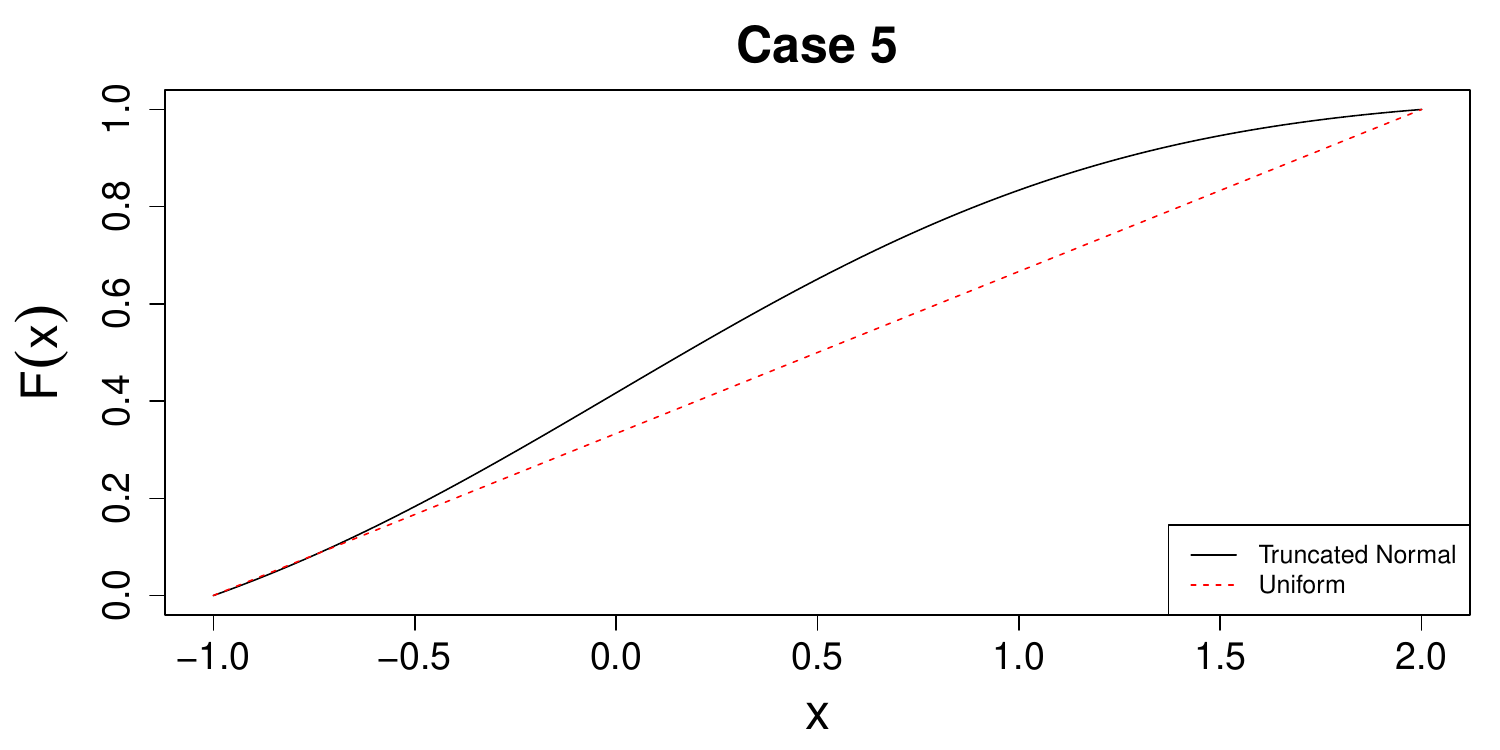}
    \end{subfigure}%
    \begin{subfigure}[b]{0.5\textwidth}
        \centering
        \includegraphics[width=\textwidth]{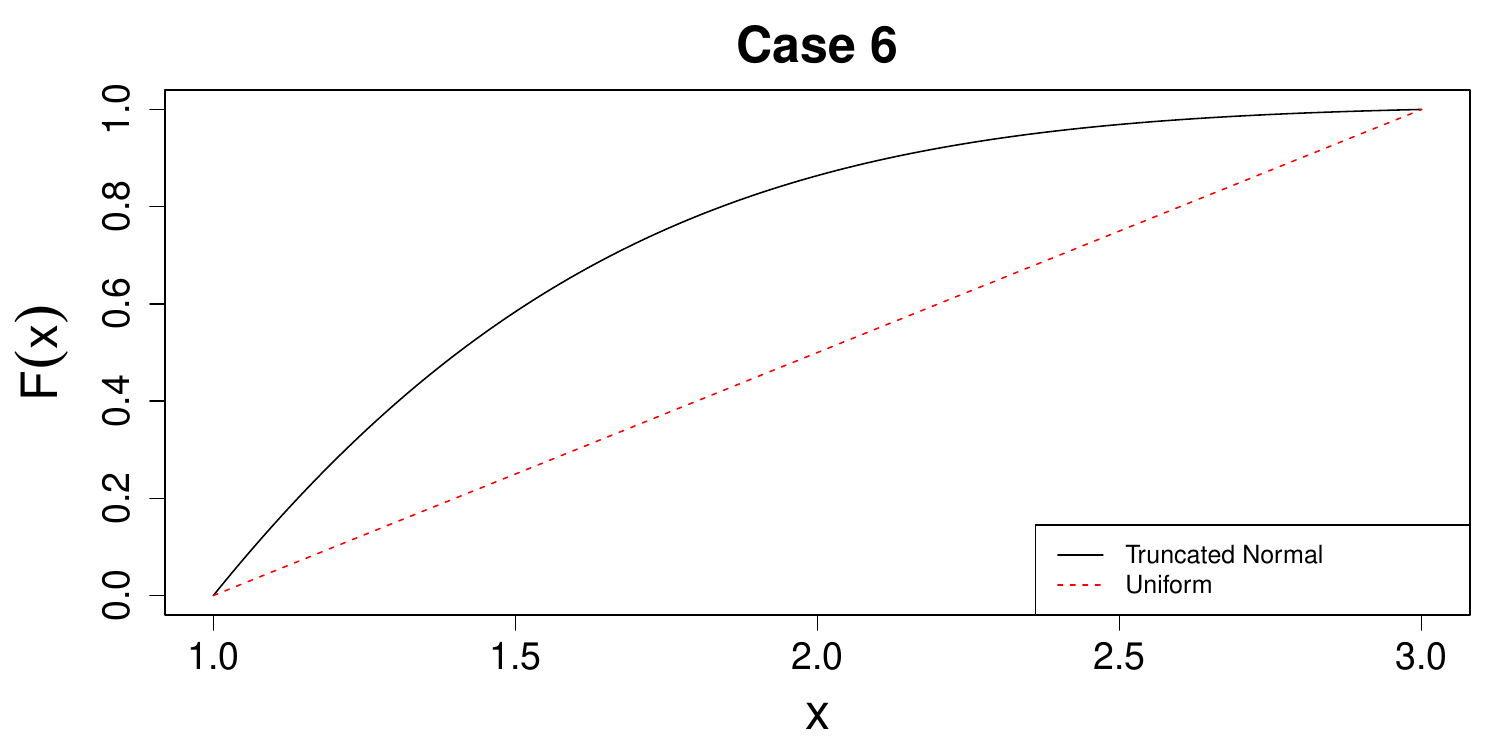}
    \end{subfigure}

    \caption{Visualization of the different cases used for simulation, comparing the distribution function for the truncated normal distribution (black solid line) to the distribution function of the uniform distribution with the same bounds (red dashed line).}
    \label{fig:cdf_cases}
\end{figure}

\begin{algorithm}[H]
\caption{Simulation to Study Large Sample Behavior}
\begin{algorithmic}

\STATE 1. Let $\btheta = (\mu_0 =0, \sigma_0 = 1, \tau_{l0}, \tau_{u0})^t$.
\FOR{Increasing $n$}
\FOR{$b \in {1,2,\dots,B}$}

\STATE 2. Sample $X_1,\dots, X_n \sim \text{TN}(\mu_0, \sigma_0, \tau_{l0}, \tau_{u0})$.

\STATE 3. Set $\btheta^{(0)}=(\bar{x},s, x_{n:1}, x_{n:n})^t$. 

\STATE 4. Run the proposed ES Algorithm until the relative change in log-likelihood is less than $10^{-6}$ to obtain  $\hat{\btheta}_n$.

\STATE 5. Calculate $||U^{obs}(\hat{\btheta}_n, x^{obs})||$.

\STATE 6. Store $\hat{\btheta}_n$, $||U^{obs}(\tilde{\btheta}, x^{obs})||$, and $n$.

\ENDFOR
\ENDFOR

\end{algorithmic}
\label{alg:cons_simulation}
\end{algorithm}

Algorithm \ref{alg:cons_simulation} gives the simulation that will be used to study the large sample behavior. To study the consistency of the estimators, the values of $n$ considered are an evenly spaced sequence of length from 10 to 1000 on the $\log_{10}$ scale with $B=500$. For the study of the asymptotic distributions, the values of $n$ considered will be $n \in \{30, 50, 100 \}$ with $B=10,000$.

\subsection{Results}

Figures \ref{fig:mu_cons}, \ref{fig:sigma_cons}, \ref{fig:tau_l_cons}, and \ref{fig:tau_u_cons} show the results of the Monte Carlo simulation studying the consistency of the proposed estimators. In each of the plots, the distribution of the estimator given the sample size $n$ is visualized with $n$ on the $\log_{10}$ scale. Within the plots, the blue horizontal dashed line is the true value of the parameter. The median, first quartile, and third quartile of the distribution are modeled using a quantile regression utilizing cubic B-splines with 6 degrees of freedom. The plots read similarly to a boxplot for a fixed value of $\log_{10}(n)$, where the length of the shaded region is the interquartile range (IQR) with the median curve plotted as a black line within the shaded region. The lower curve is the point-wise maximum of the lower bound on the parameter induced by the bounding of the parameter space and $1.5*IQR$ subtracted from the first quartile. The upper curve is the point-wise minimum of the upper bound on the parameter and $1.5*IQR$ added to the third quartile. If any of the estimator value lies outside the upper and lower curves, it is deemed an outlier and is plotted as a red point.

Looking at Figure \ref{fig:mu_cons}, which shows the distribution of $\hat{\mu}_n|\log_{10}(n)$, it can be seen that in Cases 1 and 2, the distribution of the parameter is quite wide, with the min and max curves taking the upper and lower bounds put on the $\mu$ when bounding the parameter space. The distribution does begin to collapse around the true value. For Cases 2,3,4, and 5, the distribution collapses around the true value, with quite a few outliers present. It can also be noted that Cases 1 and 2 are near reflections of Cases 3 and 4 about $\mu_0$.

Figure \ref{fig:sigma_cons}, shows the distribution of $\log_{10}(\hat{\sigma}_n)|\log_{10}(n).$ Again, it can be seen in Cases 1 and 2, now with the addition of Case 4, the min and max curves take the values of the upper and lower bounds on $\log_{10}(\sigma)$ induced by the bounding of the parameter space. In all cases, the distribution does begin to collapse around $\log_{10}(\sigma_0)$, with many outliers present, especially in Cases 2,3,4, and 5. An interesting behavior is now seen, that there is similarity in the distributions seen in Cases 1 and 6, as well as a similarity in Cases 2 and 5.

\begin{figure}[htbp]
    \centering

    \begin{subfigure}[b]{0.5\textwidth}
        \centering
        \includegraphics[width=\textwidth]{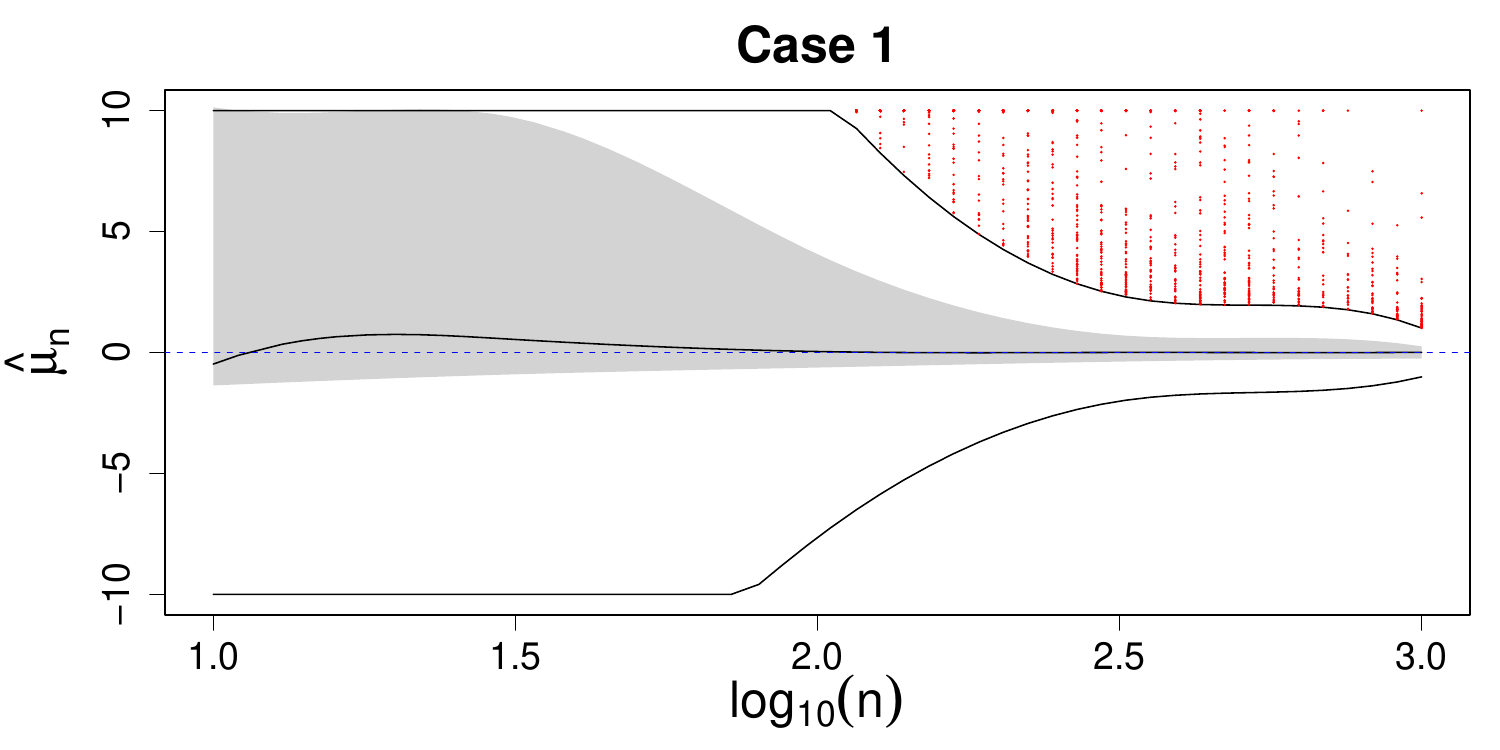}
    \end{subfigure}%
    \begin{subfigure}[b]{0.5\textwidth}
        \centering
        \includegraphics[width=\textwidth]{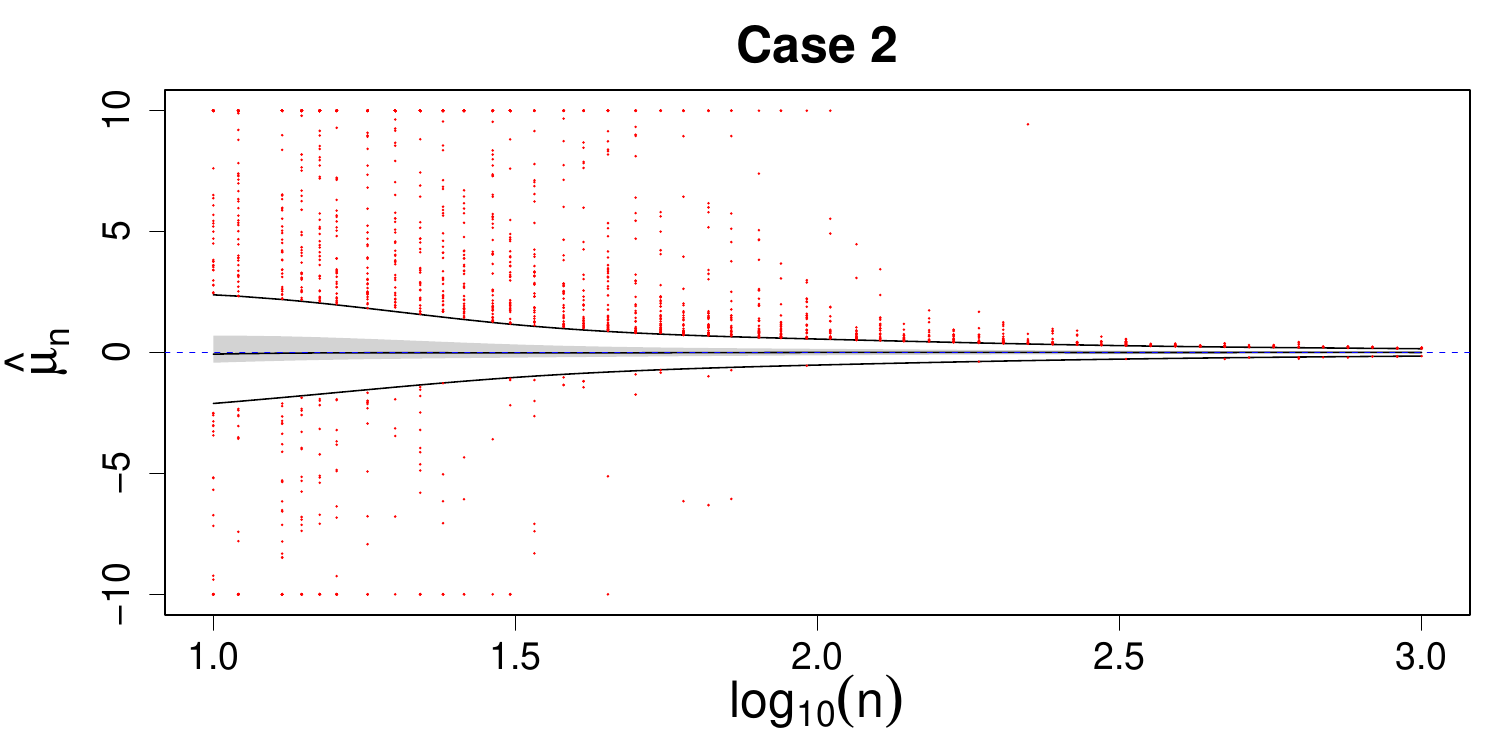}
    \end{subfigure}

    \begin{subfigure}[b]{0.5\textwidth}
        \centering
        \includegraphics[width=\textwidth]{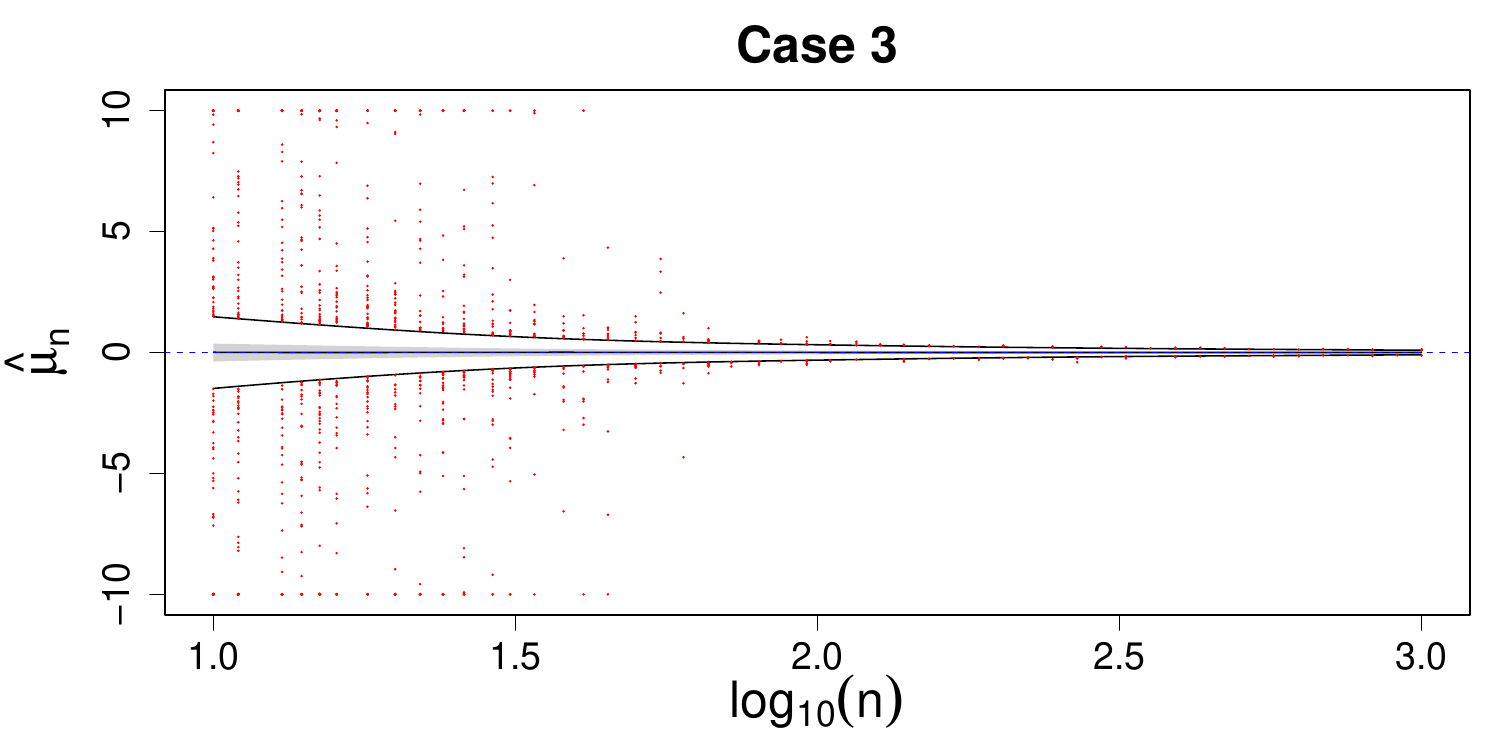}
    \end{subfigure}%
    \begin{subfigure}[b]{0.5\textwidth}
        \centering
        \includegraphics[width=\textwidth]{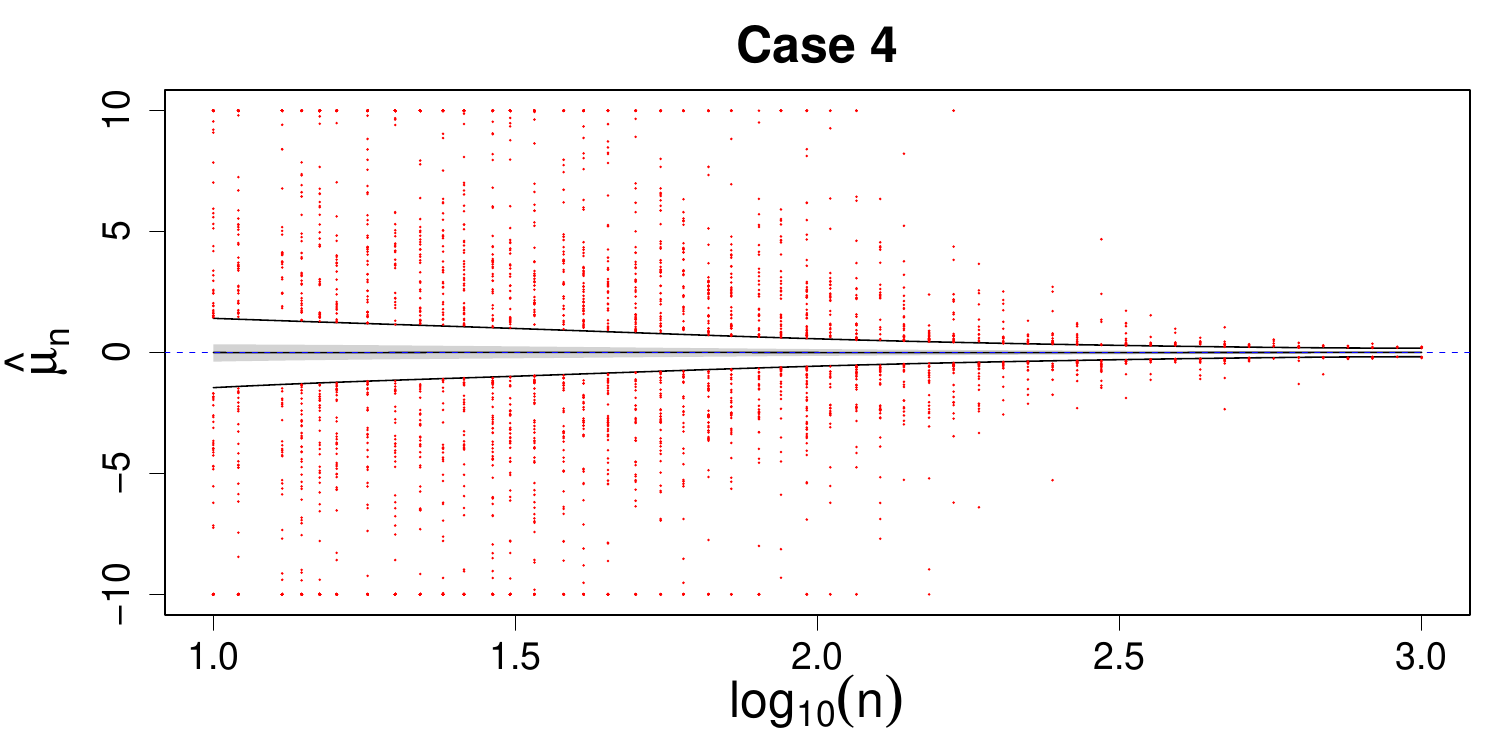}
    \end{subfigure}

    \begin{subfigure}[b]{0.5\textwidth}
        \centering
        \includegraphics[width=\textwidth]{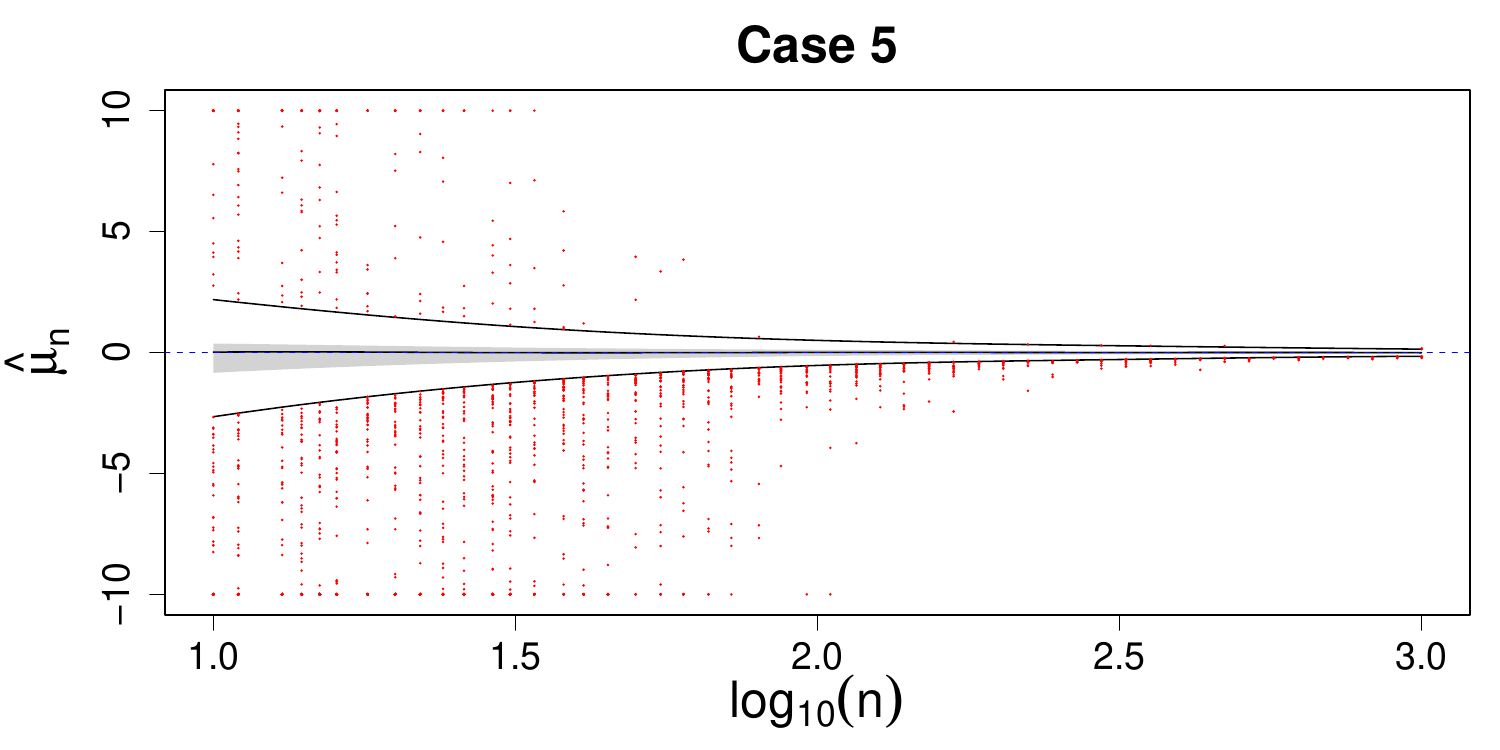}
    \end{subfigure}%
    \begin{subfigure}[b]{0.5\textwidth}
        \centering
        \includegraphics[width=\textwidth]{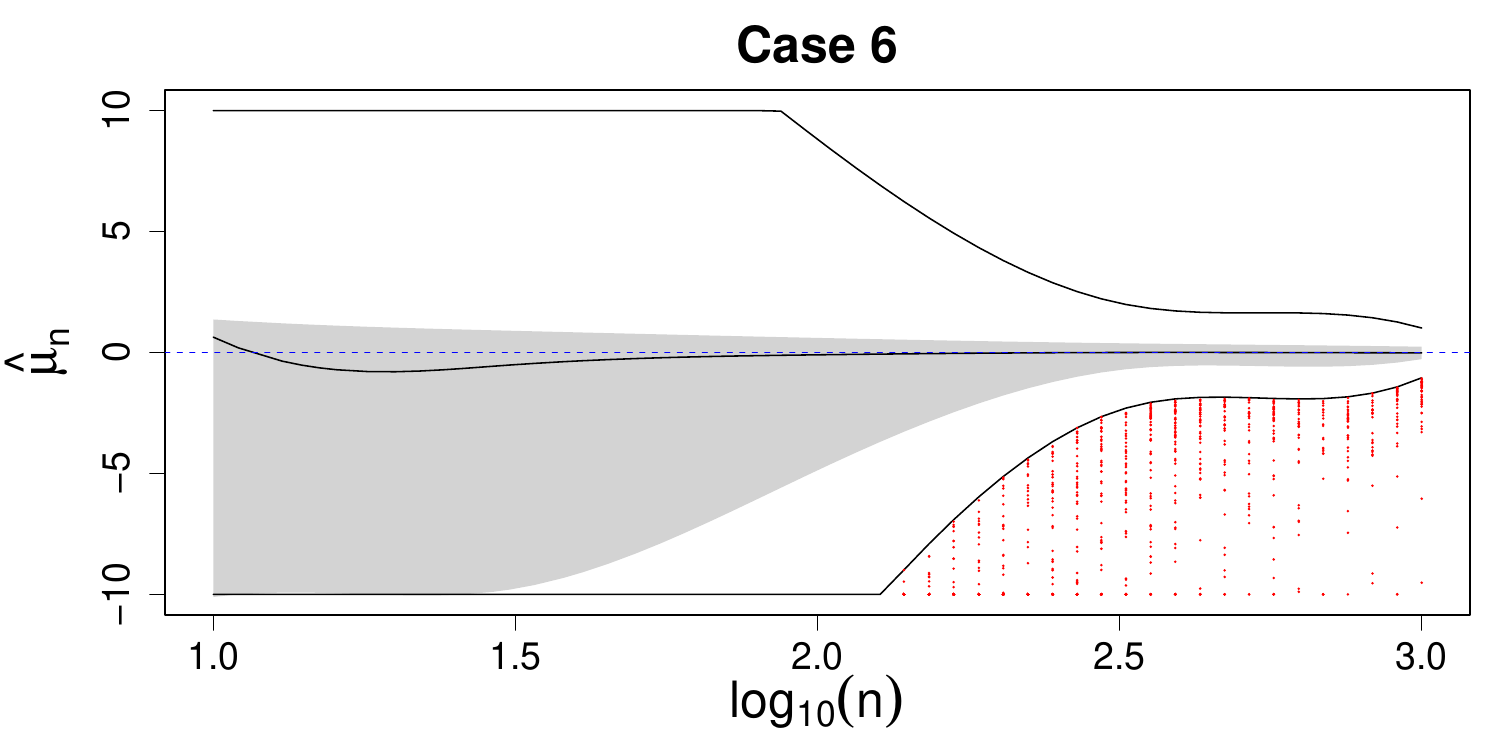}
    \end{subfigure}

    \caption{Results of the simulations illustrating the distribution of $\hat{\mu}_n$ as $n$ increases.}
    \label{fig:mu_cons}
\end{figure}

Finally, Figures \ref{fig:tau_l_cons} and \ref{fig:tau_u_cons} visualize the distribution of $\hat{\tau}_{nl}|\log_{10}(n)$ and $\hat{\tau}_{nu}|\log_{10}(n)$, respectively. In all cases, the distribution of the truncation bound estimators collapses quickly around the true value of the truncation bound with a few outliers present. A notable relation here is that the distributions in Cases 1,2,3,4,5, and 6 from Figure \ref{fig:tau_l_cons} are reflections of Cases 6,5,4,3,2 and 1 from Figure \ref{fig:tau_u_cons}, respectively.

\begin{figure}[htbp]
    \centering

    \begin{subfigure}[b]{0.5\textwidth}
        \centering
        \includegraphics[width=\textwidth]{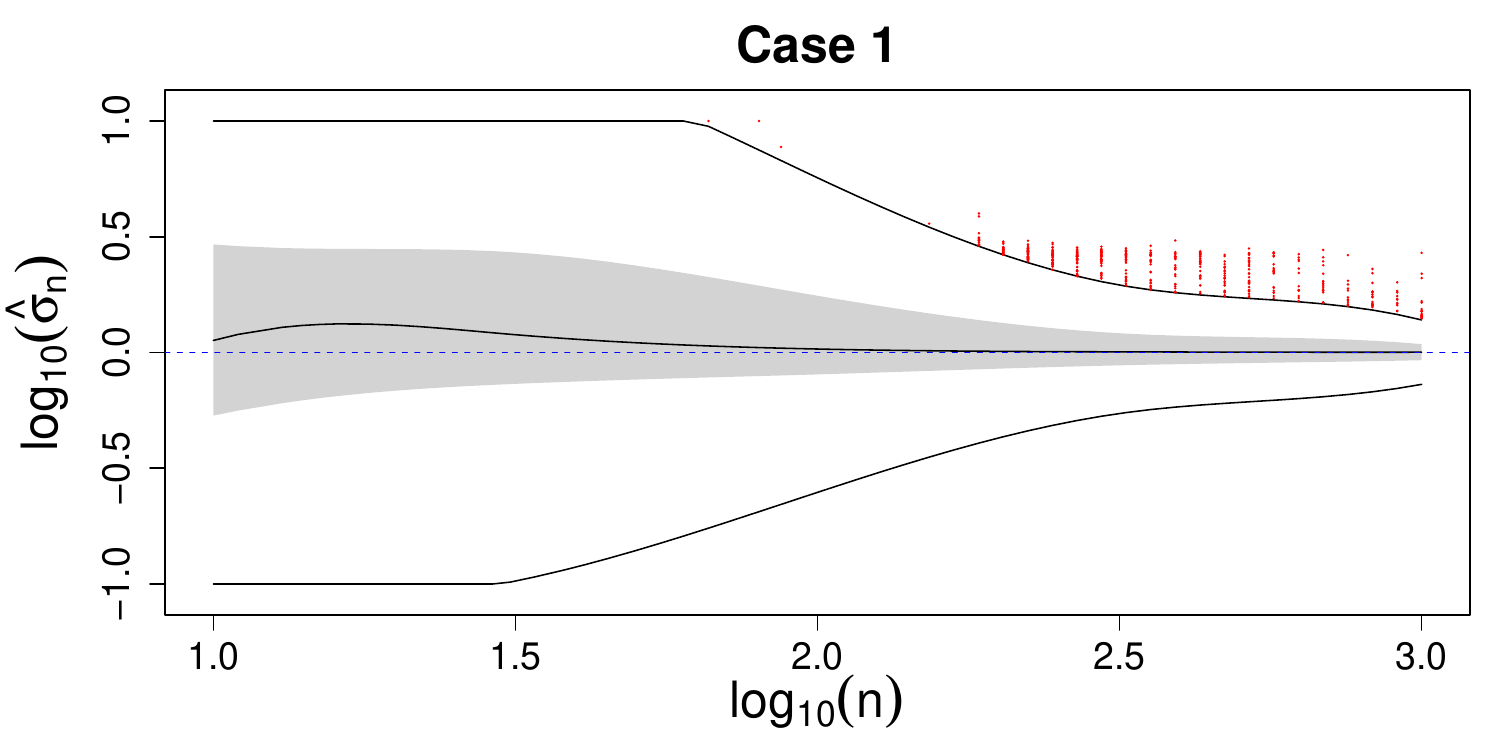}
    \end{subfigure}%
    \begin{subfigure}[b]{0.5\textwidth}
        \centering
        \includegraphics[width=\textwidth]{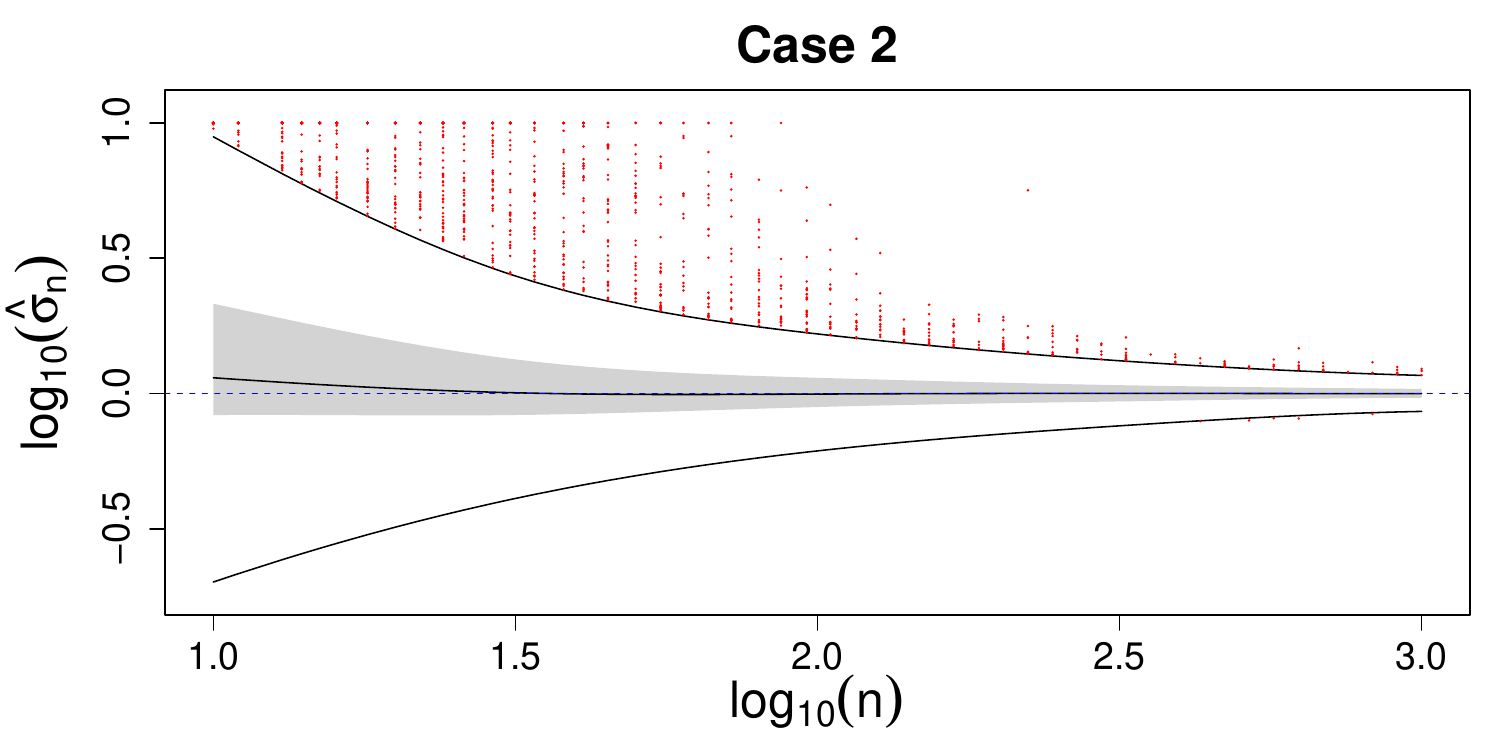}
    \end{subfigure}

    \begin{subfigure}[b]{0.5\textwidth}
        \centering
        \includegraphics[width=\textwidth]{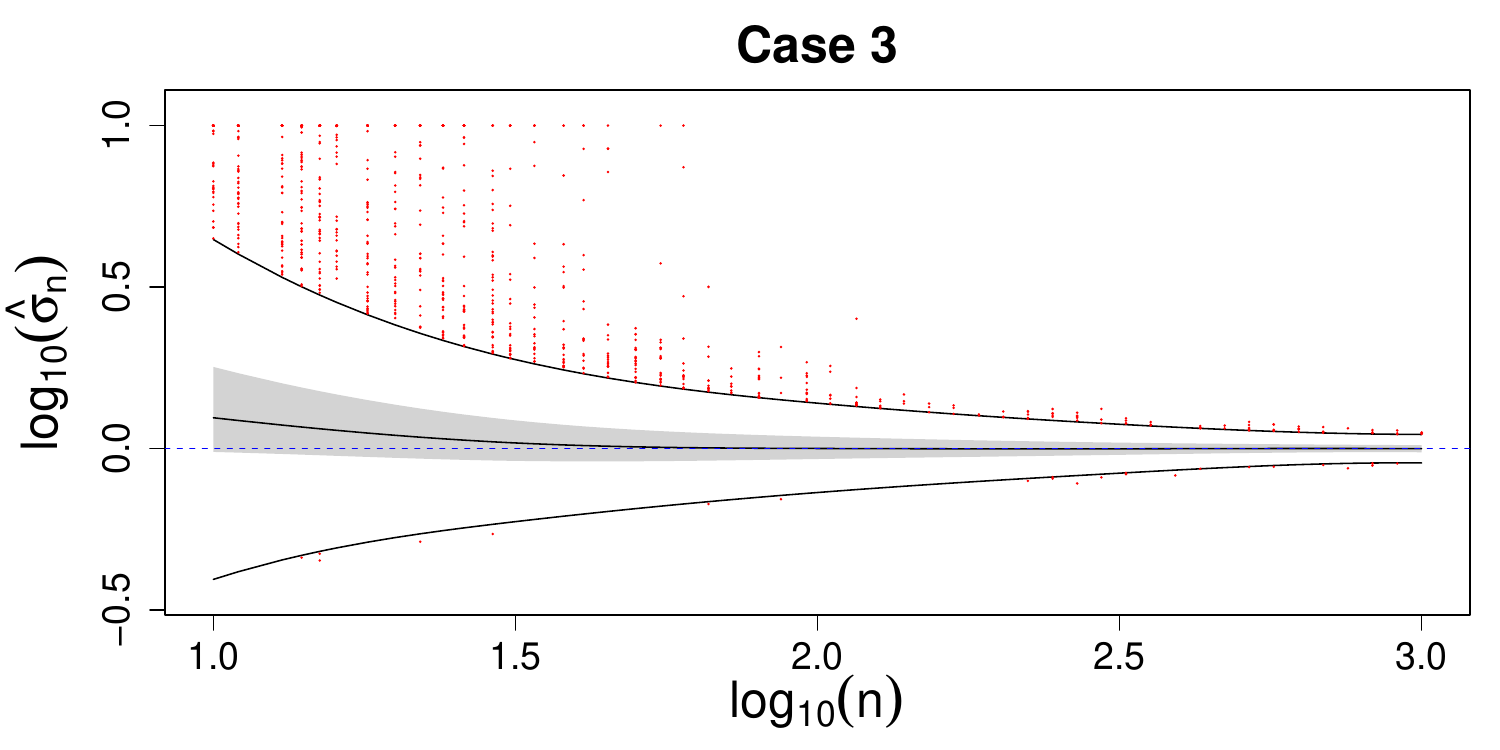}
    \end{subfigure}%
    \begin{subfigure}[b]{0.5\textwidth}
        \centering
        \includegraphics[width=\textwidth]{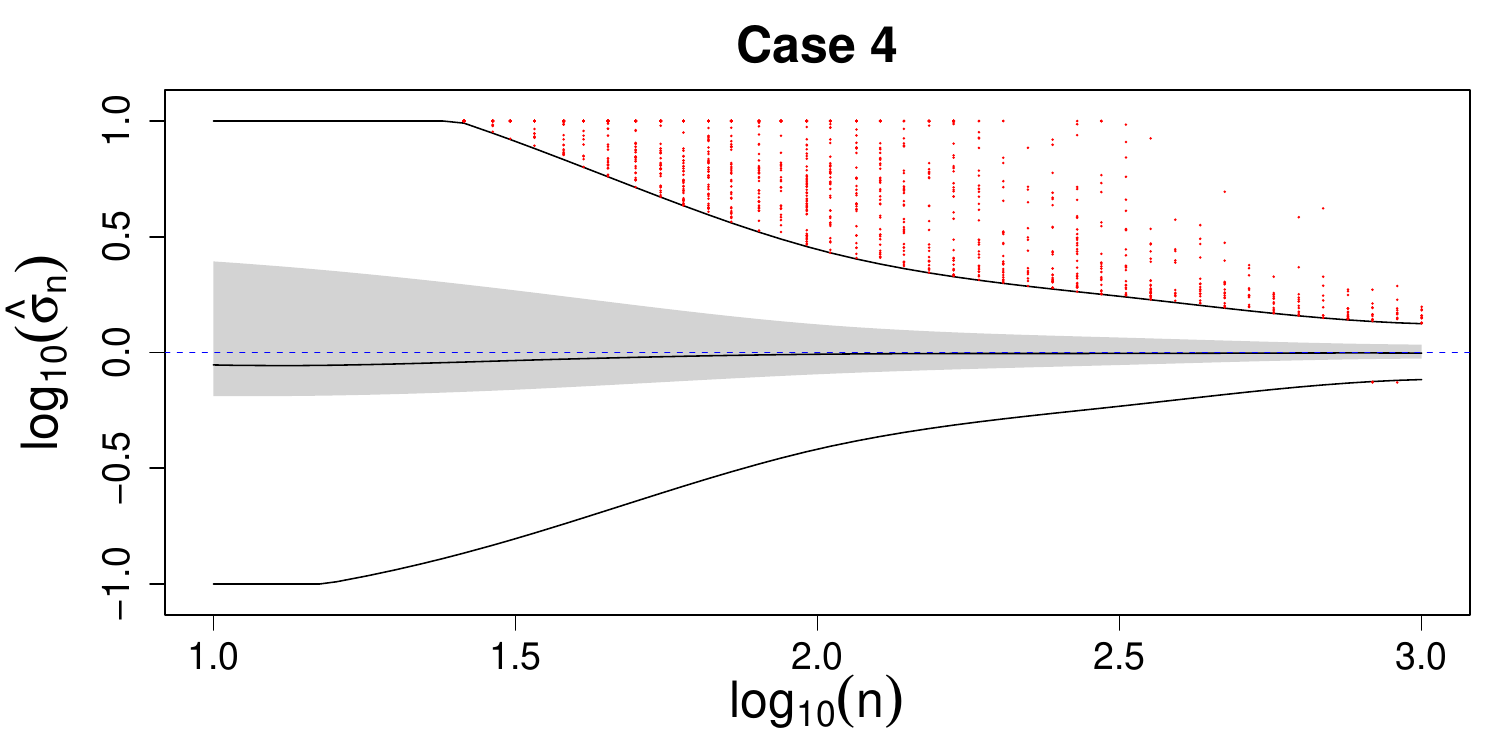}
    \end{subfigure}

    \begin{subfigure}[b]{0.5\textwidth}
        \centering
        \includegraphics[width=\textwidth]{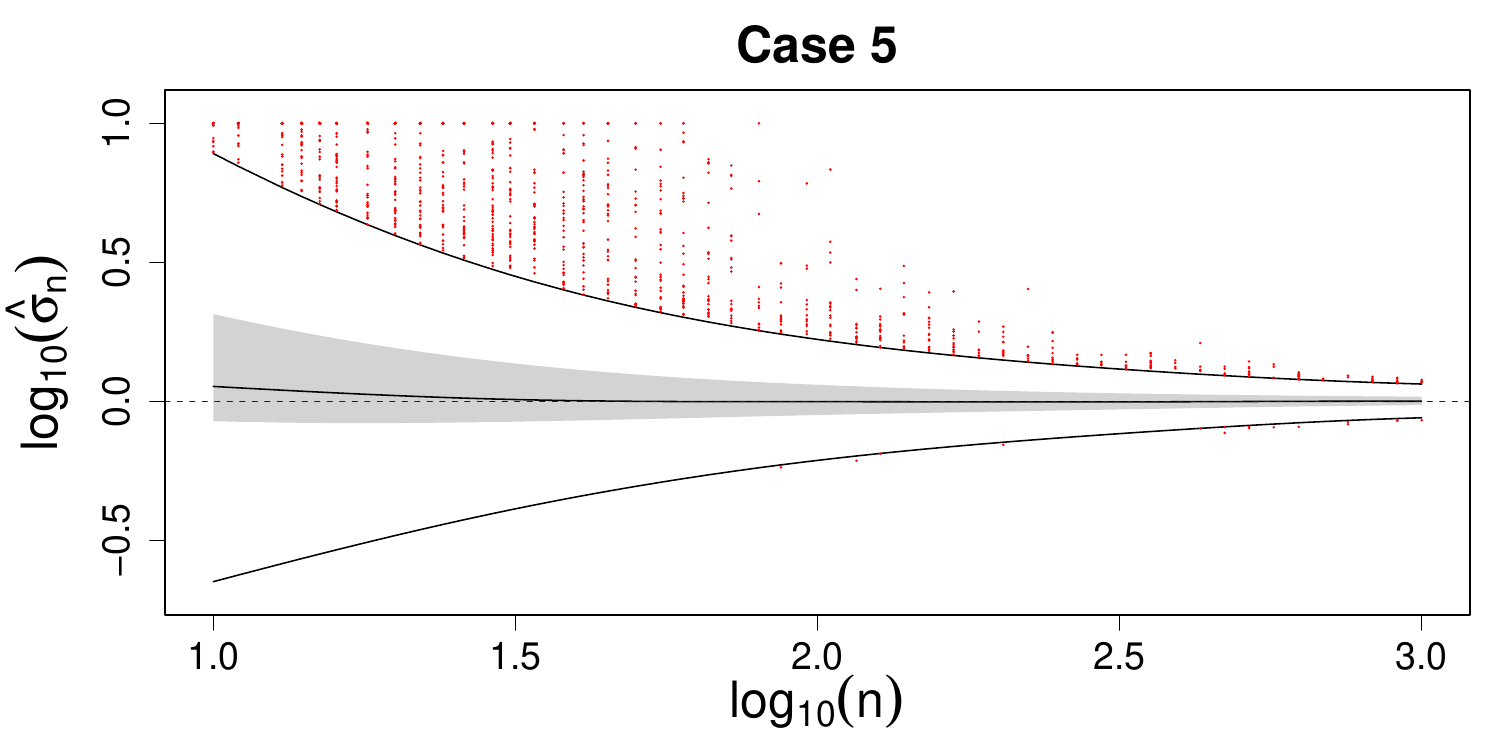}
    \end{subfigure}%
    \begin{subfigure}[b]{0.5\textwidth}
        \centering
        \includegraphics[width=\textwidth]{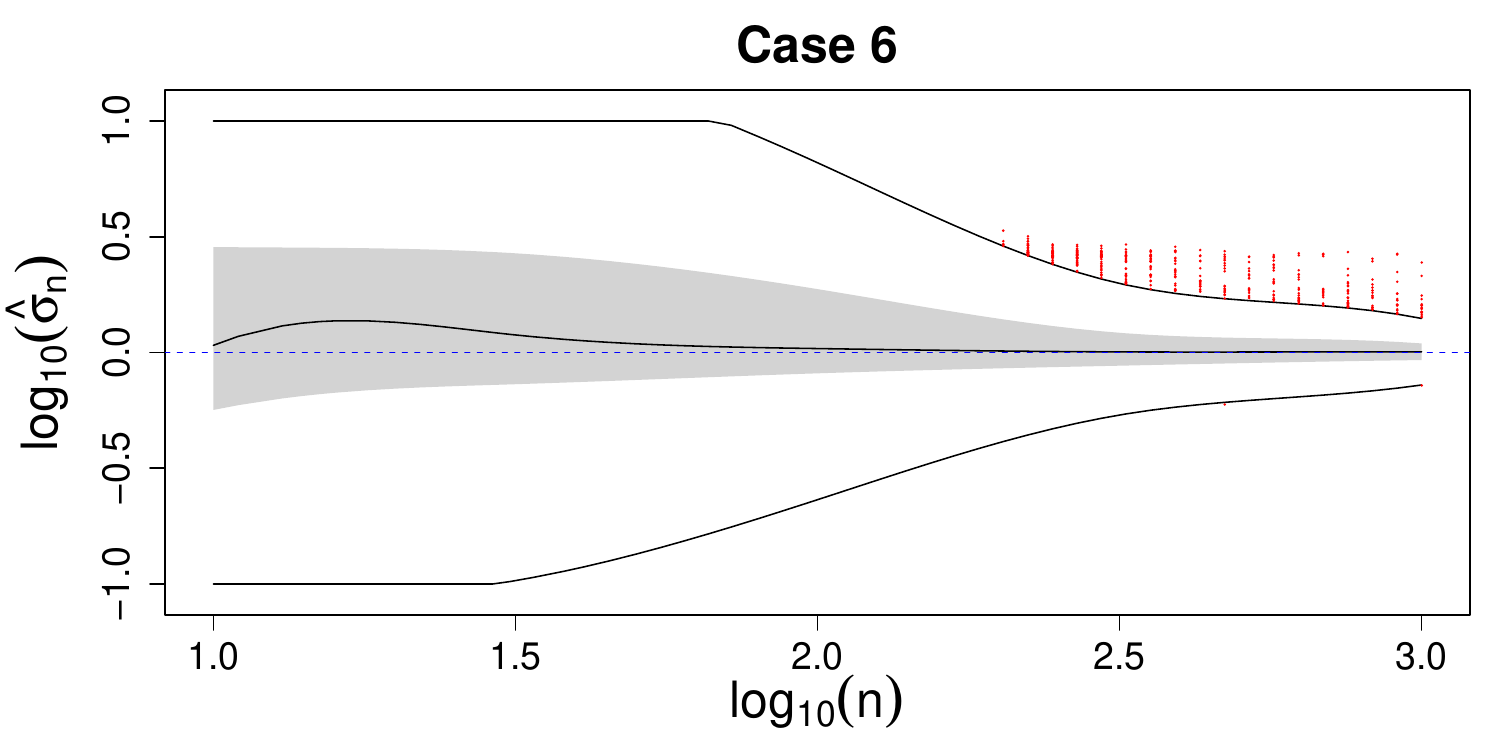}
    \end{subfigure}

    \caption{Results of the simulations illustrating the distribution of $\hat{\sigma}_n$ as $n$ increases.}
    \label{fig:sigma_cons}
\end{figure}

\begin{figure}[htbp]
    \centering

    \begin{subfigure}[b]{0.5\textwidth}
        \centering
        \includegraphics[width=\textwidth]{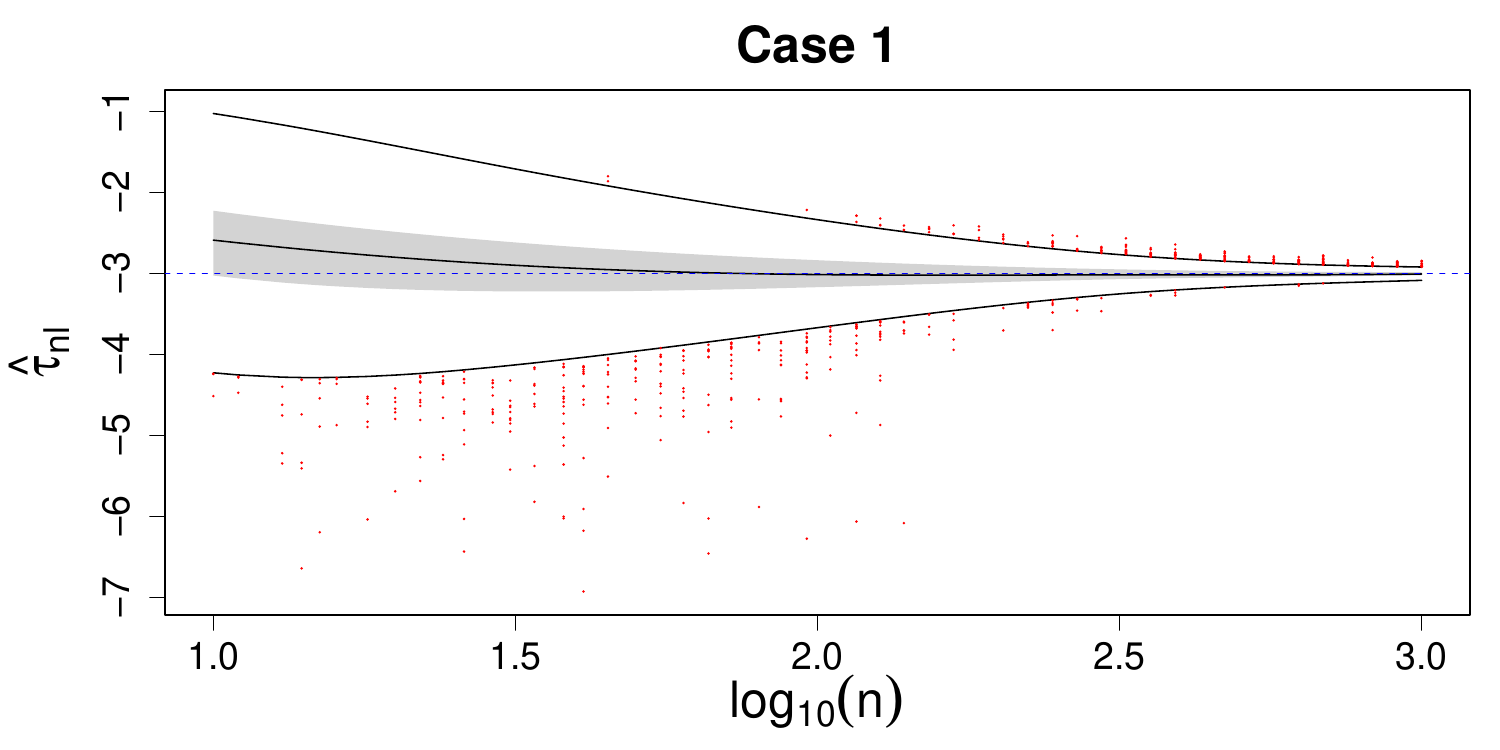}
    \end{subfigure}%
    \begin{subfigure}[b]{0.5\textwidth}
        \centering
        \includegraphics[width=\textwidth]{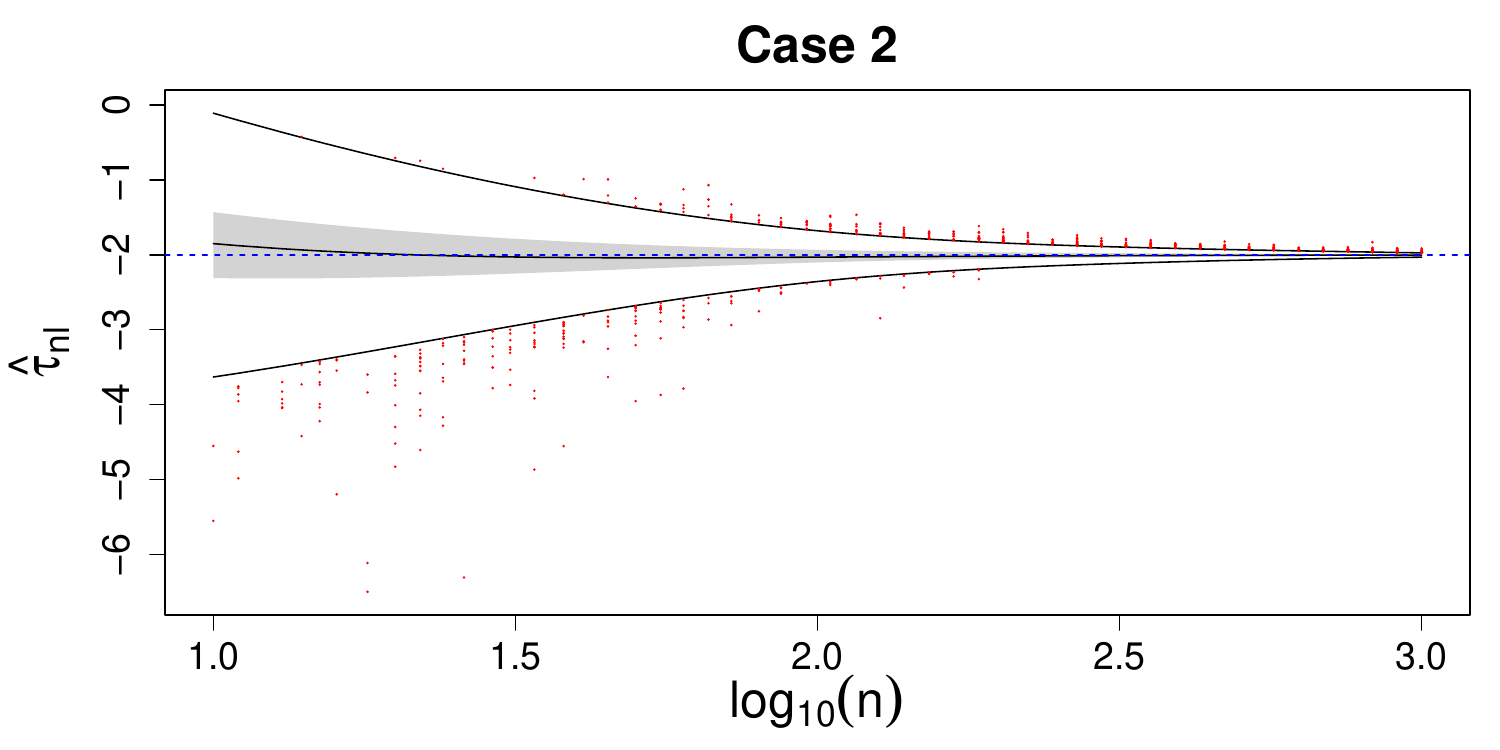}
    \end{subfigure}

    \begin{subfigure}[b]{0.5\textwidth}
        \centering
        \includegraphics[width=\textwidth]{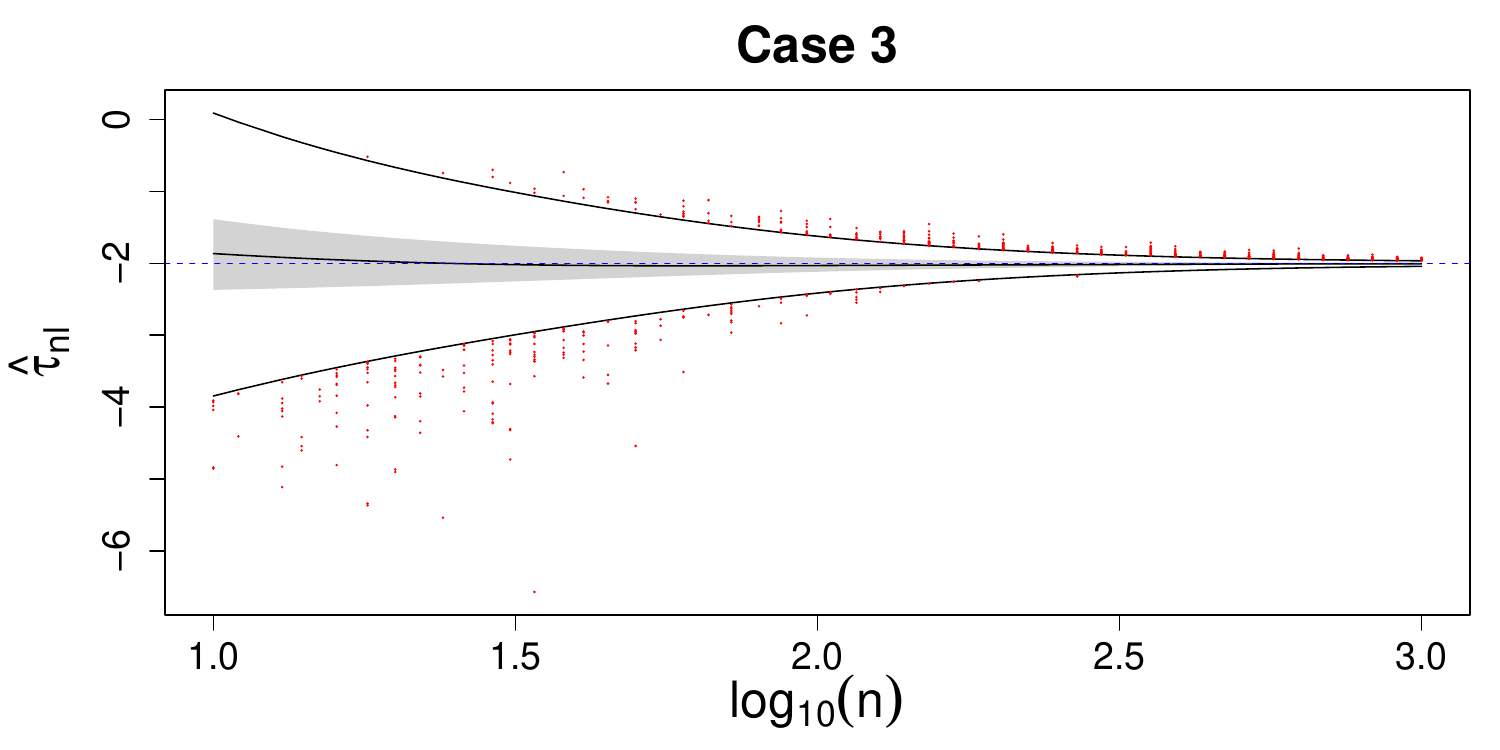}
    \end{subfigure}%
    \begin{subfigure}[b]{0.5\textwidth}
        \centering
        \includegraphics[width=\textwidth]{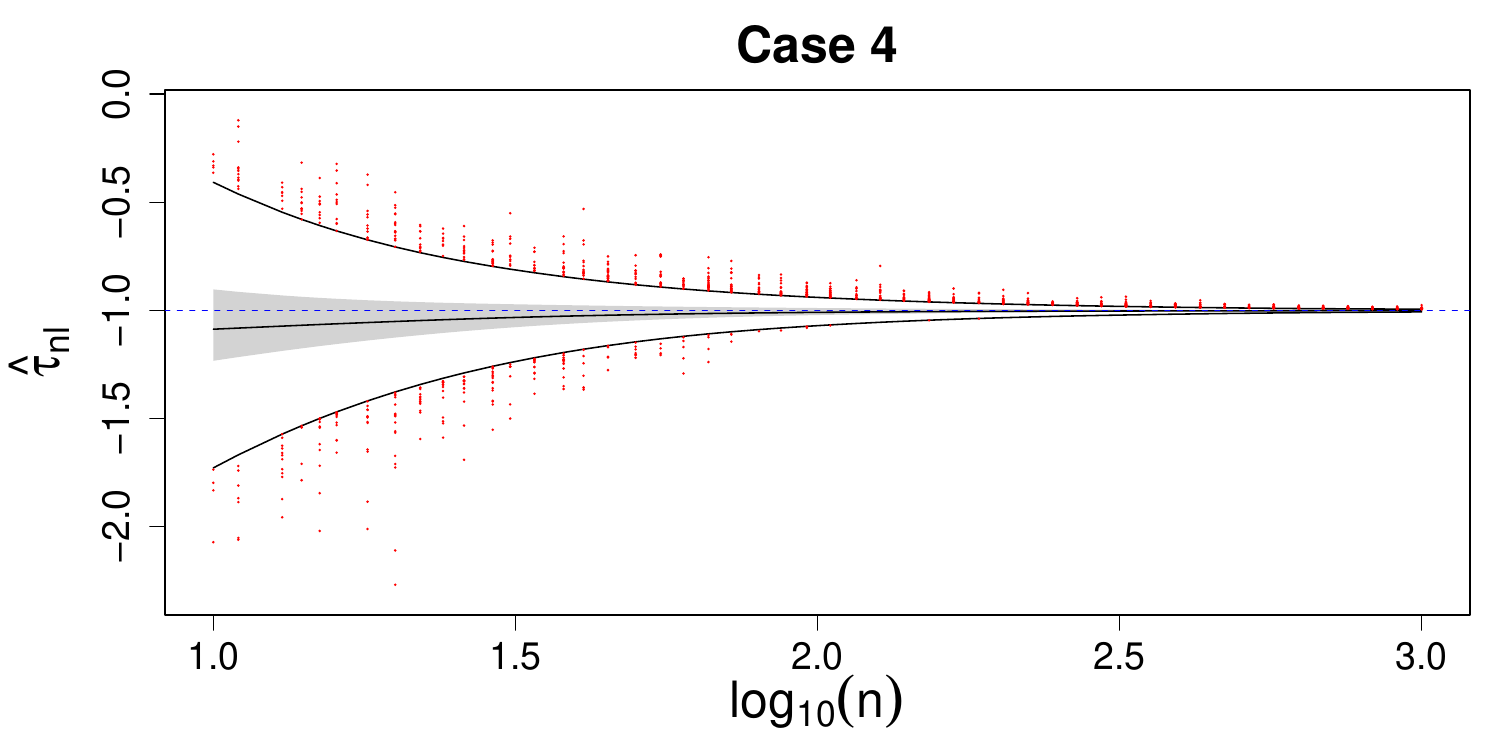}
    \end{subfigure}

    \begin{subfigure}[b]{0.5\textwidth}
        \centering
        \includegraphics[width=\textwidth]{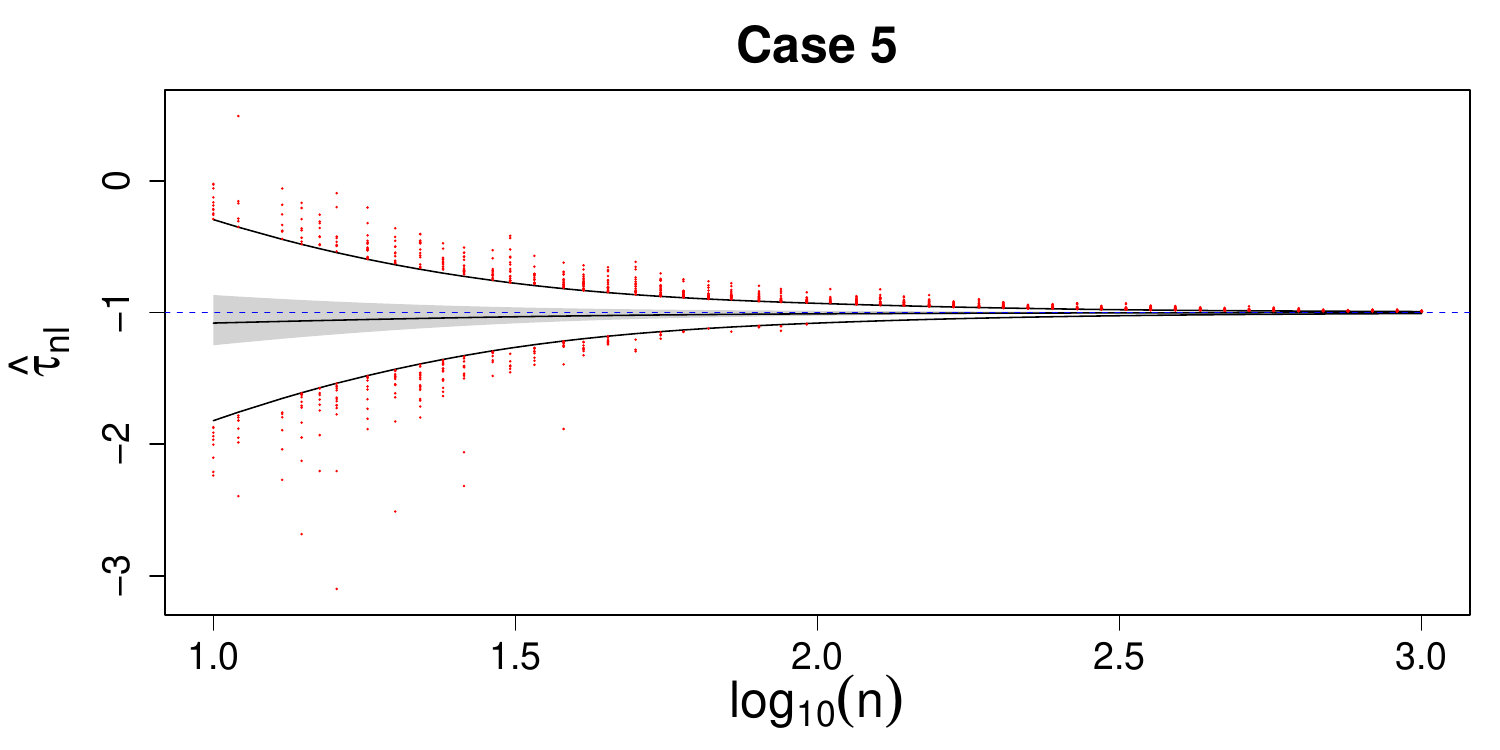}
    \end{subfigure}%
    \begin{subfigure}[b]{0.5\textwidth}
        \centering
        \includegraphics[width=\textwidth]{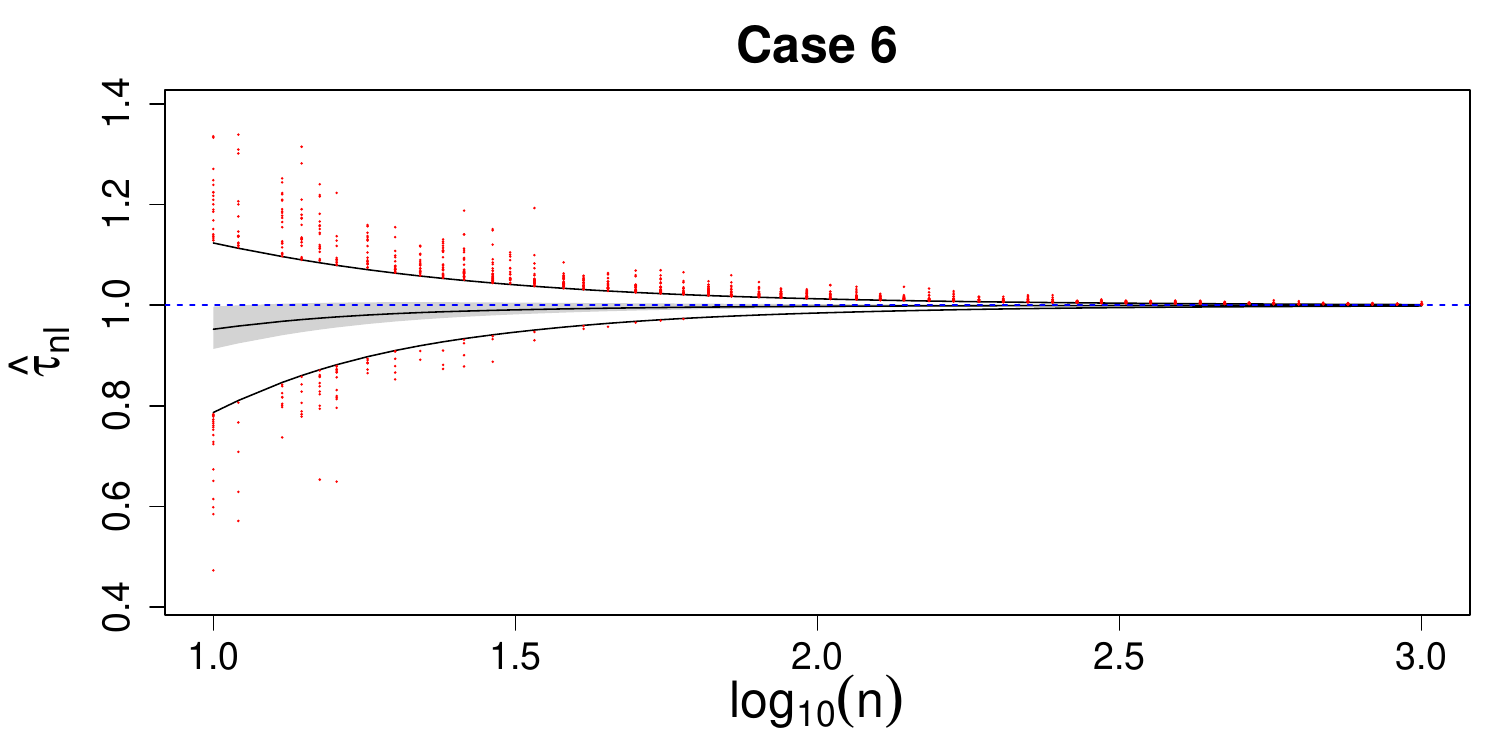}
    \end{subfigure}

    \caption{Results of the simulations illustrating the distribution of $\hat{\tau}_{nl}$ as $n$ increases.}
    \label{fig:tau_l_cons}
\end{figure}

\begin{figure}[htbp]
    \centering

    \begin{subfigure}[b]{0.5\textwidth}
        \centering
        \includegraphics[width=\textwidth]{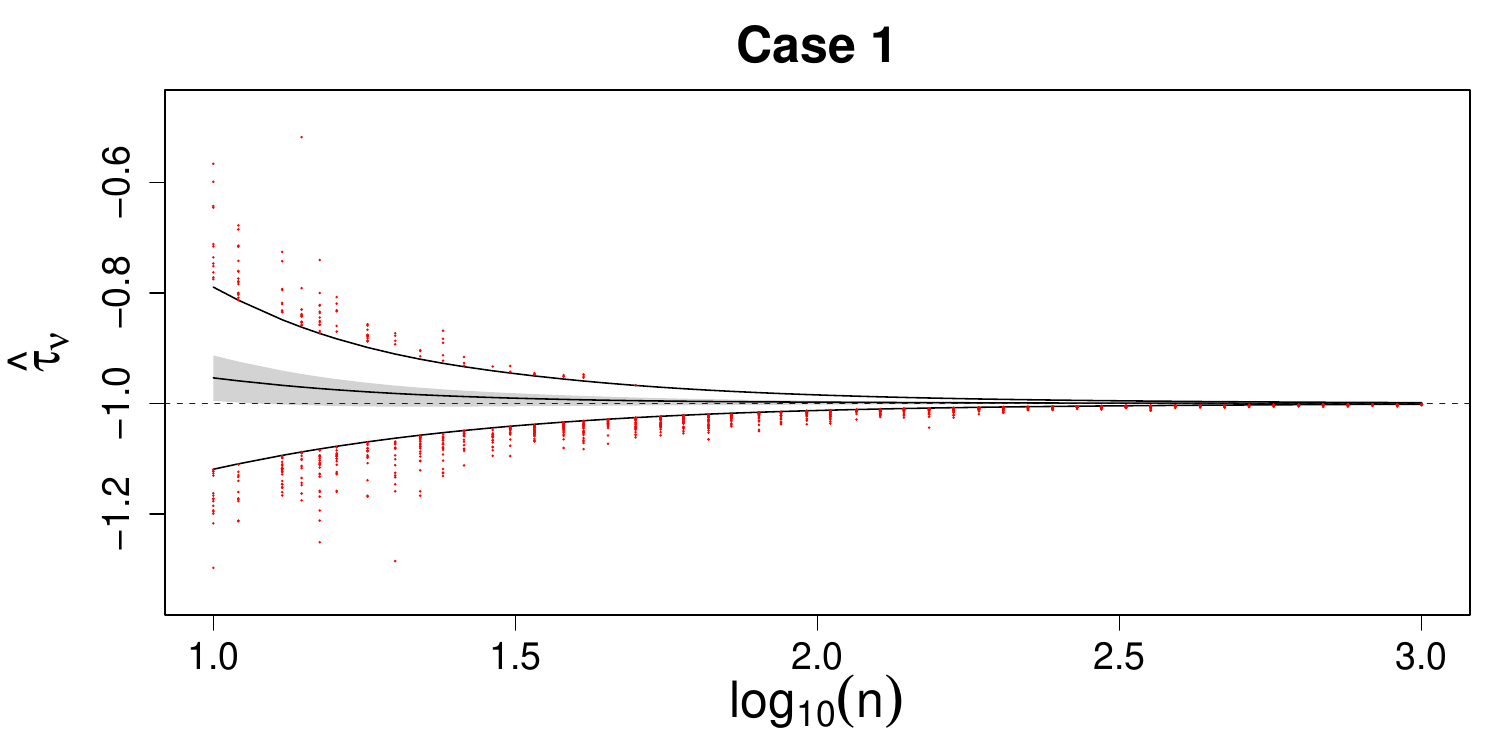}
    \end{subfigure}%
    \begin{subfigure}[b]{0.5\textwidth}
        \centering
        \includegraphics[width=\textwidth]{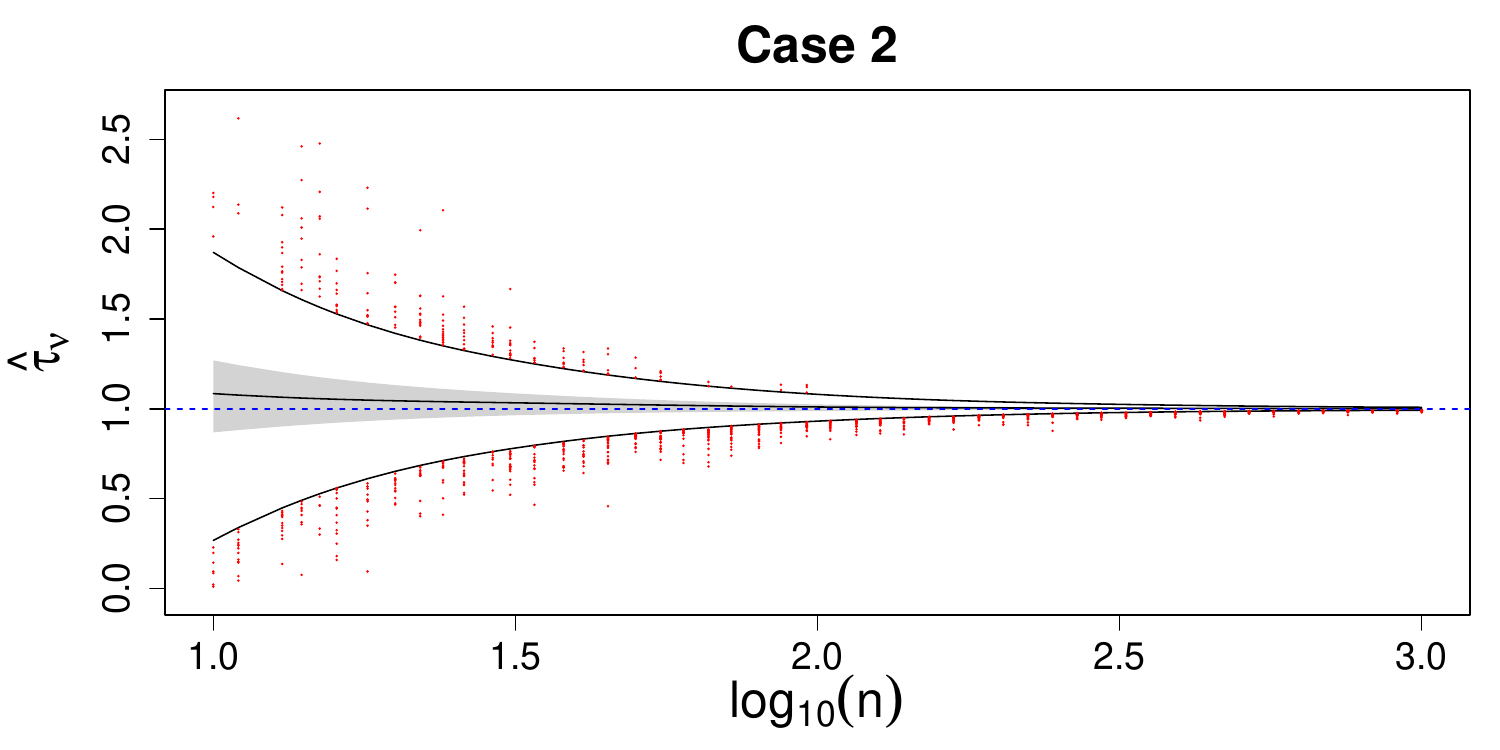}
    \end{subfigure}

    \begin{subfigure}[b]{0.5\textwidth}
        \centering
        \includegraphics[width=\textwidth]{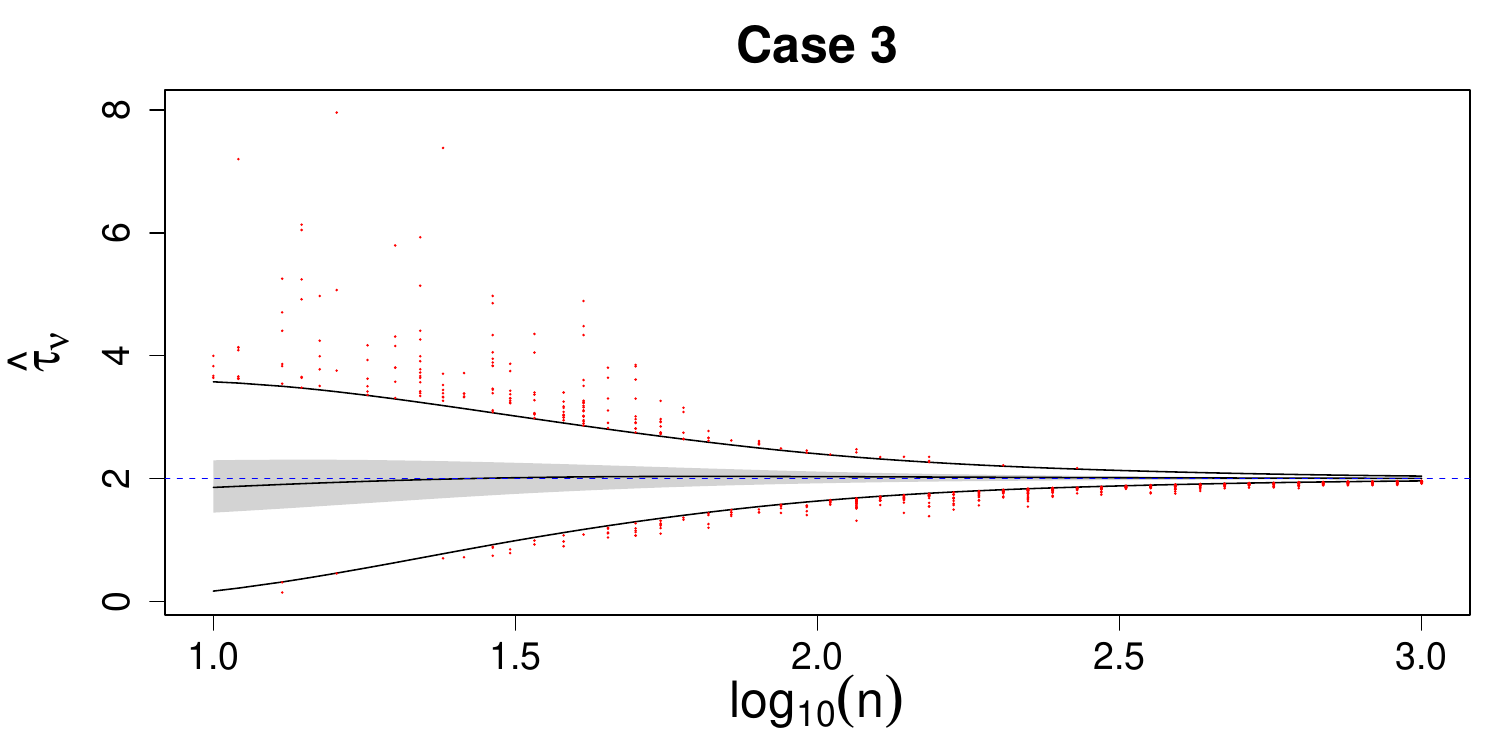}
    \end{subfigure}%
    \begin{subfigure}[b]{0.5\textwidth}
        \centering
        \includegraphics[width=\textwidth]{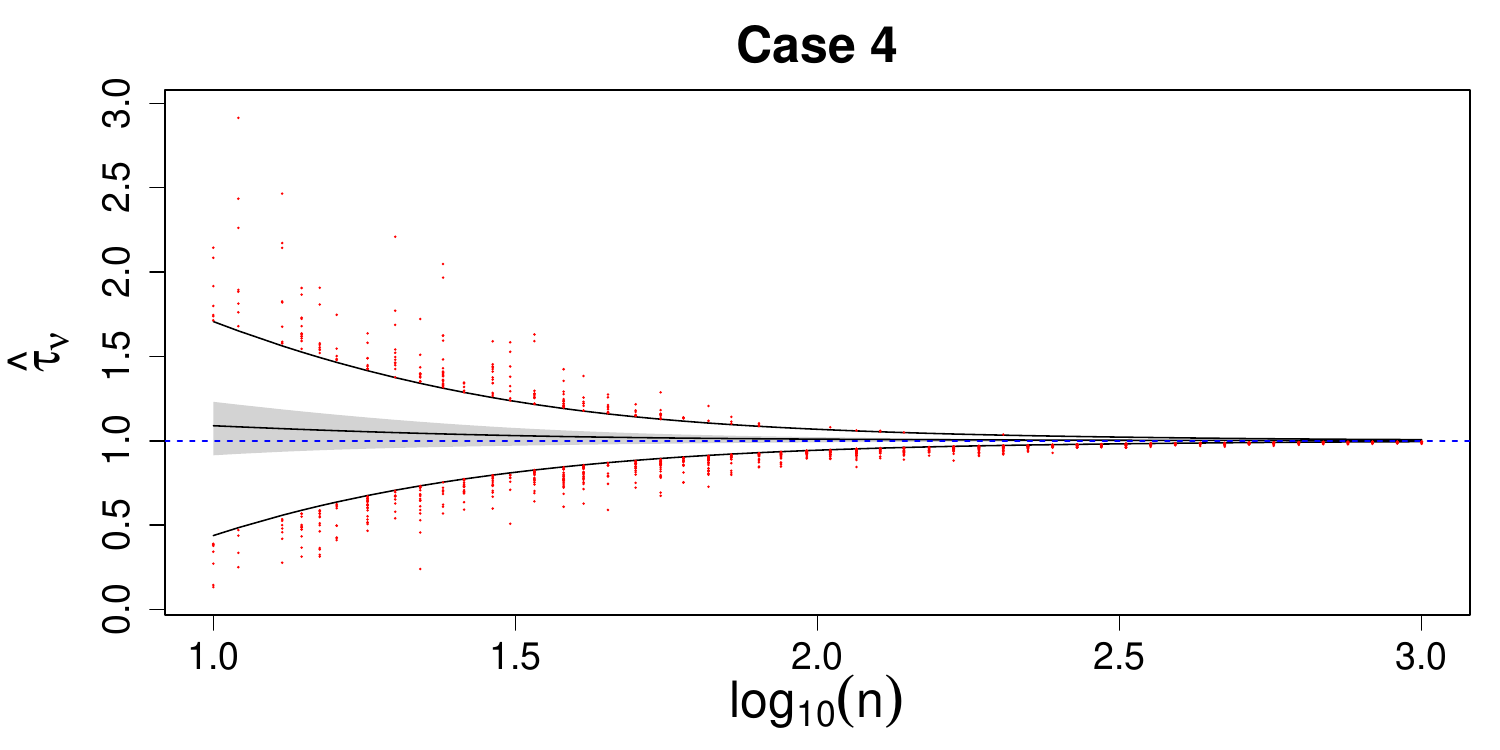}
    \end{subfigure}

    \begin{subfigure}[b]{0.5\textwidth}
        \centering
        \includegraphics[width=\textwidth]{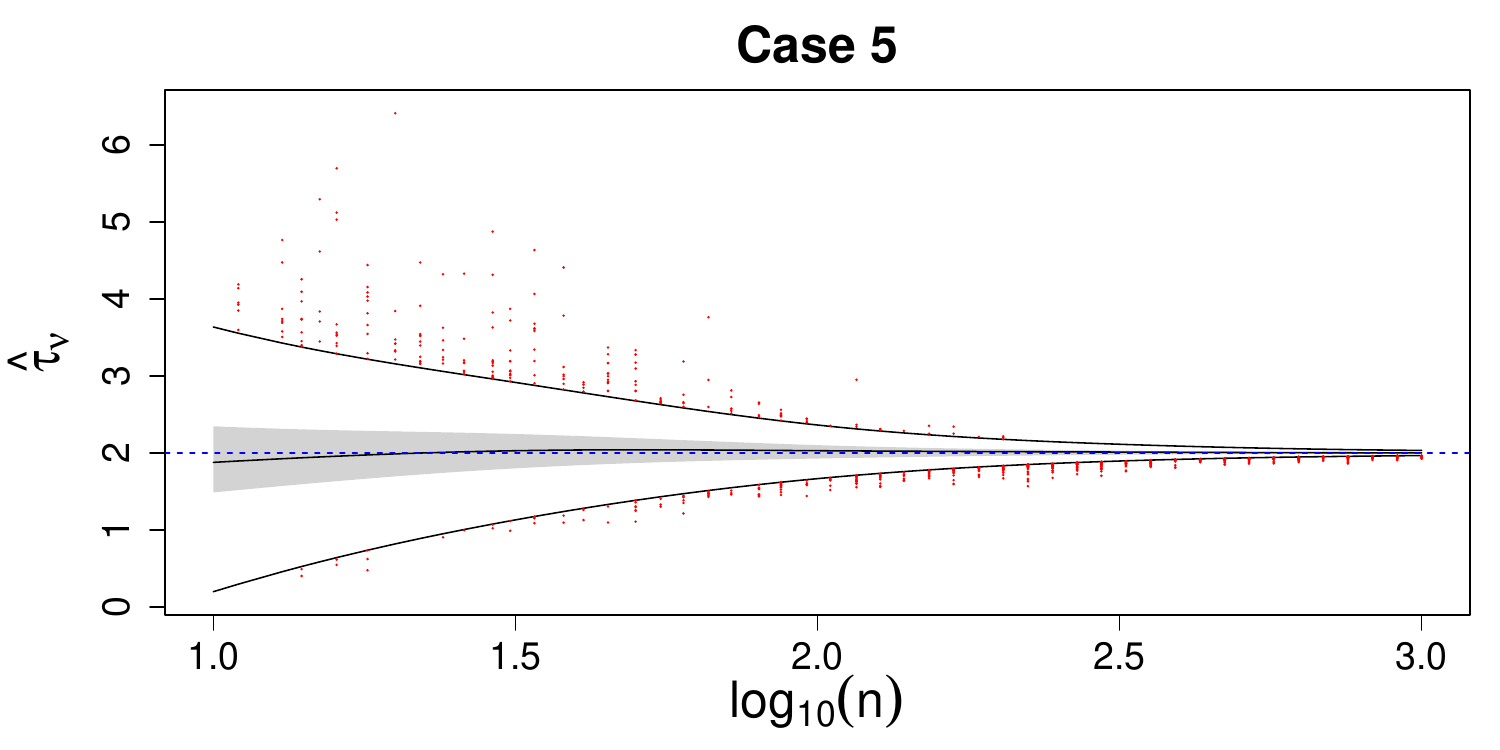}
    \end{subfigure}%
    \begin{subfigure}[b]{0.5\textwidth}
        \centering
        \includegraphics[width=\textwidth]{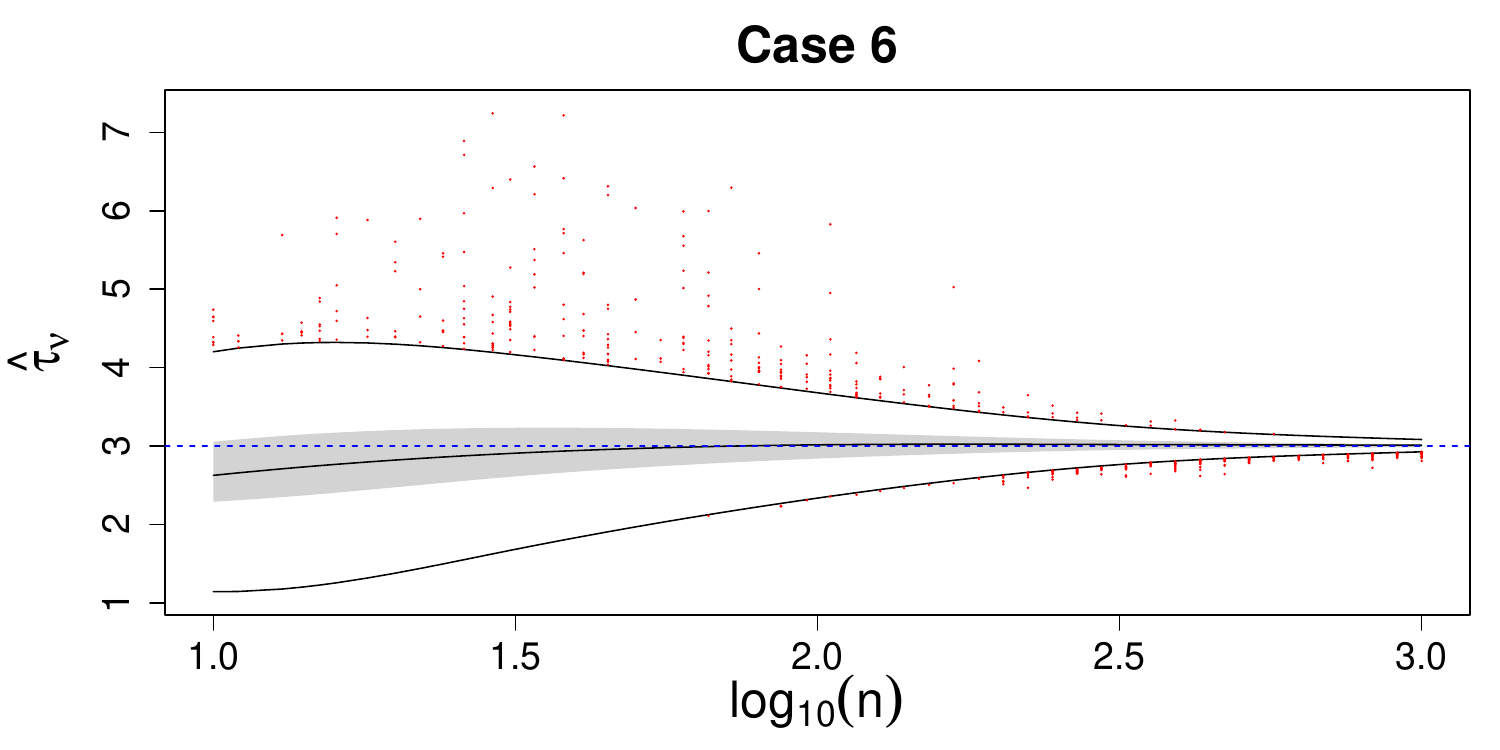}
    \end{subfigure}

    \caption{Results of the simulations illustrating the distribution of $\hat{\tau}_{nu}$ as $n$ increases.}
    \label{fig:tau_u_cons}
\end{figure}

Investigating the scenarios where the estimators lie at the boundary imposed on the parameter space. Figure \ref{fig:weirdcase} shows the comparison of the fitted CDF and True CDF to the ECDF of the sample used to get the estimators in two of these scenarios, illustrating two different boundary behaviors discussed by Mittal \cite{mittal-dissertation}. Where two distributions seen at the boundary of the parameter space are a truncated exponential type distribution with $F(x) \propto \exp\{cx\}$ where $\mu \rightarrow \pm\infty$ and $\sigma \rightarrow \infty$ such that $\frac{\mu}{\sigma^2}\rightarrow c$ and a uniform distribution when $\mu \rightarrow \infty$ and $\sigma \rightarrow \infty$. The left plot shows when the estimators take values $\hat{\mu}_n = 10$, $\hat{\sigma}_n = 2.37$, $\hat{\tau}_{nl}=-2.82$, and $\hat{\tau}_{nu}=-1.01$ and the right plot illustrates when $\hat{\mu}_n = 10$, $\hat{\sigma}_n = 10$, $\hat{\tau}_{nl}=-1.01$, and $\hat{\tau}_{nu}=0.94$. Overall, the fitted CDFs do fit the ECDFs quite well. As a note, there were instances in the simulations within Cases 1 and 6, where the fitted distribution had the correct shape, moving towards a truncated exponential distribution, but was able to capture the exact shape that smoothed the ECDF, likely due to the bounding of the parameter space being too conservative. Figure \ref{fig:weirdcasepoorfit} illustrates two of these scenarios with the truncated exponential type distributions, one with $\hat{\mu}_n = 10$ and one with $\hat{\mu}_n = -10$.

Moving to the results of the Monte Carlo simulations used to study the asymptotic distribution of the estimators, Figures \ref{fig:tau_l_asym_dist} and \ref{fig:tau_u_asym_dist} show the results for sample sizes of 30, 50, and 100. The plots compare the CDF of the limiting distribution given in Theorem \ref{thm:limiting_dist_of_bounds} to the ECDF of the  10,000 replications at each sample size. Looking at Figure \ref{fig:tau_l_asym_dist}, the ECDFs are not close to the limiting CDF in Cases 1 and 2, but are close to the limiting CDF for Cases 5 and 6, which is expected as there is less mass near the lower bound in Cases 1 and 2 and more mass near the lower bounds in these Cases 5 and 6. Figure \ref{fig:tau_u_asym_dist} shows the opposite, which is expected for the same reason. Both of the Figures show similar behavior in Cases 3 and 4, which is again expected as these cases have symmetric truncation with the $\mu$ in the center of the truncation interval.

\begin{figure}[htbp]
    \centering
    
    \begin{subfigure}[b]{0.5\textwidth}
        \centering
        \includegraphics[width=\textwidth]{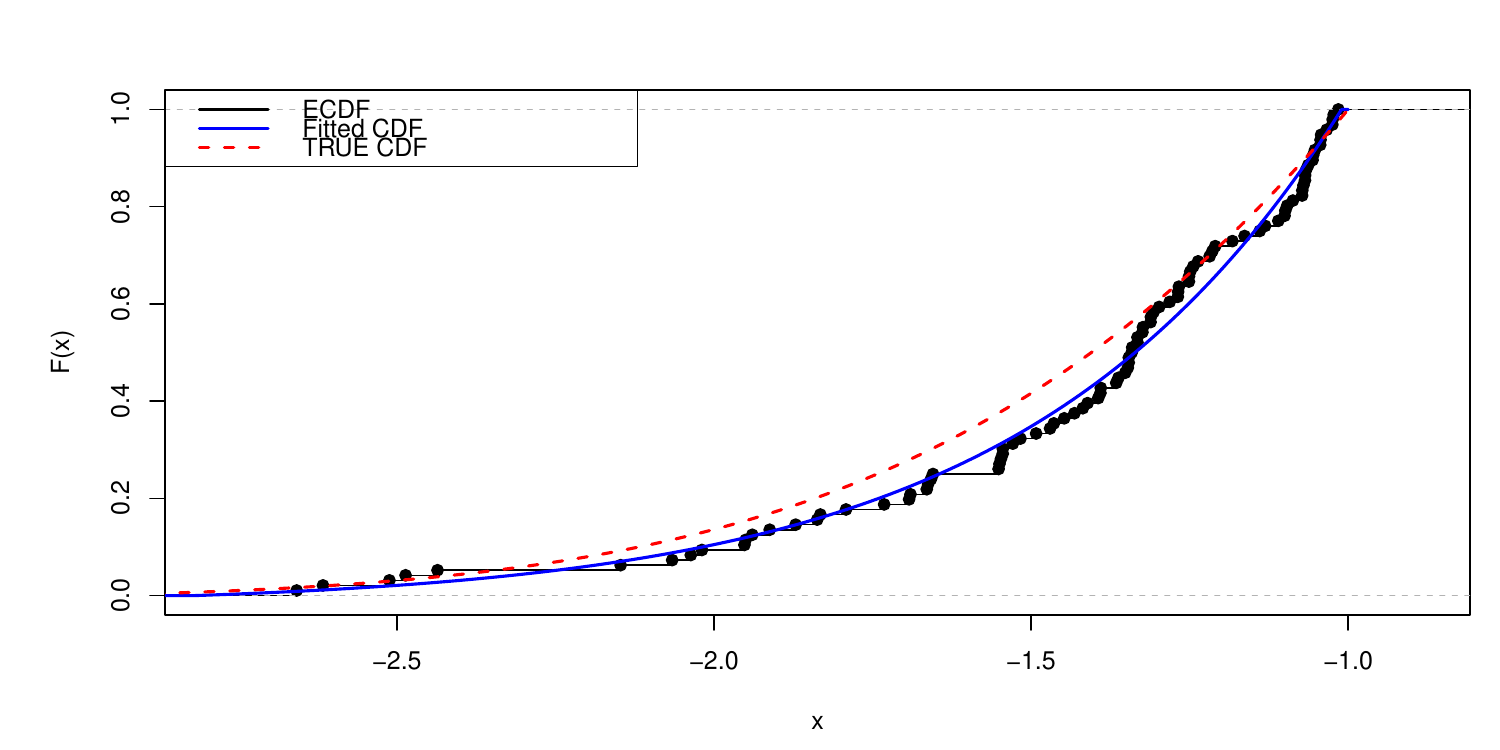}
    \end{subfigure}%
    \begin{subfigure}[b]{0.5\textwidth}
        \centering
        \includegraphics[width=\textwidth]{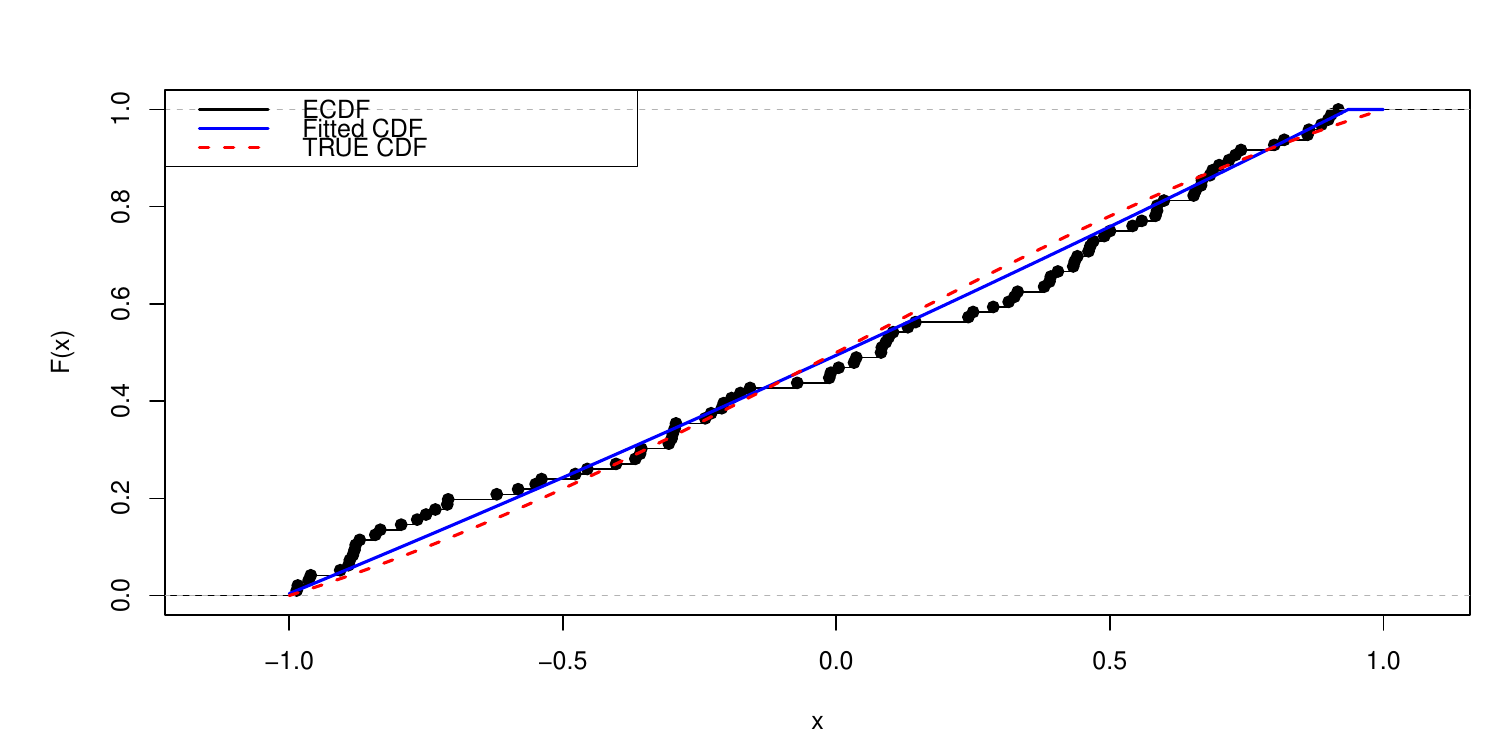}
    \end{subfigure}
    \caption{Visualization of cases where the estimator lies on the boundary imposed on the parameter space, the ECDF, fitted CDF, and true generative CDF are shown in solid black, solid blue, and dashed red lines, respectively.}
    \label{fig:weirdcase}
\end{figure}

\begin{figure}[htbp]
    \centering
    
    \begin{subfigure}[b]{0.5\textwidth}
        \centering
        \includegraphics[width=\textwidth]{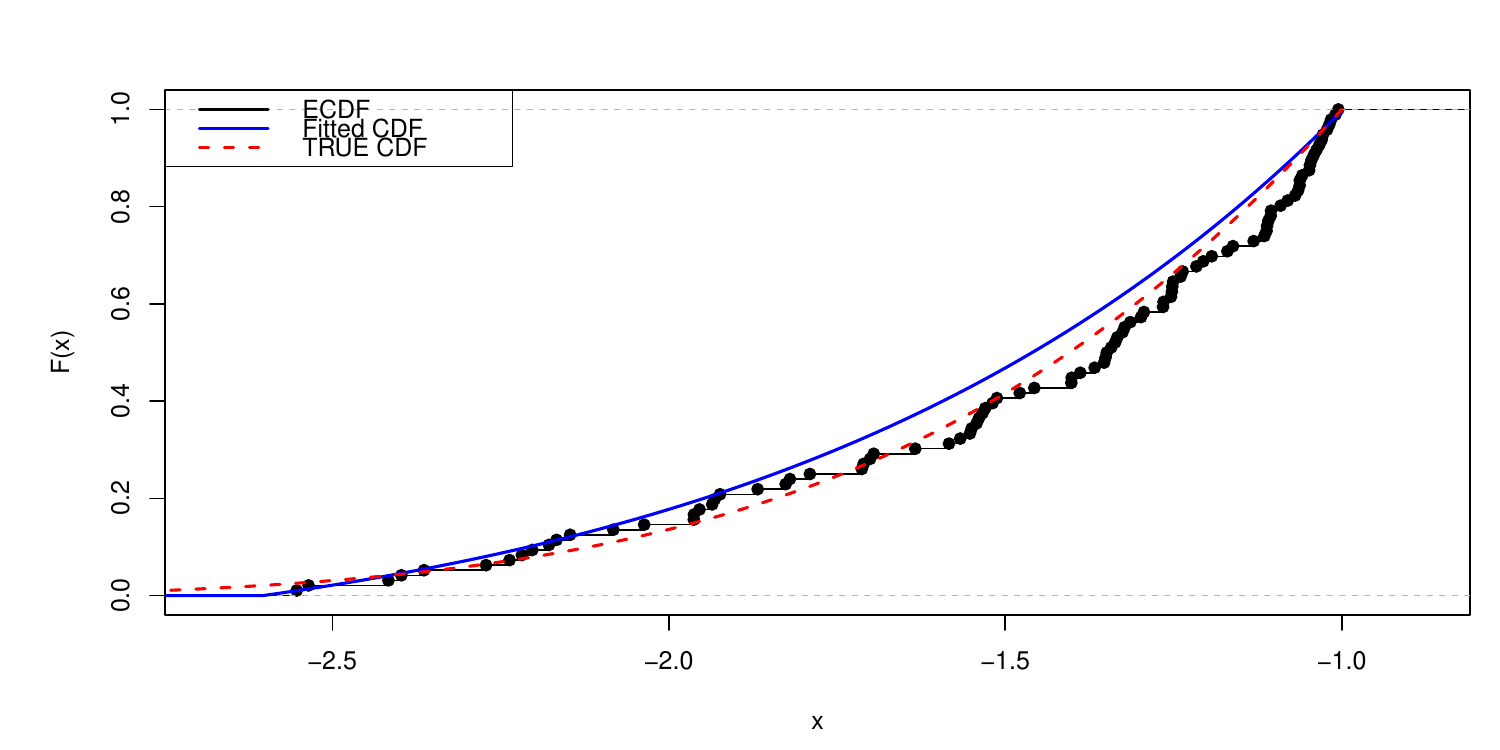}
    \end{subfigure}%
    \begin{subfigure}[b]{0.5\textwidth}
        \centering
        \includegraphics[width=\textwidth]{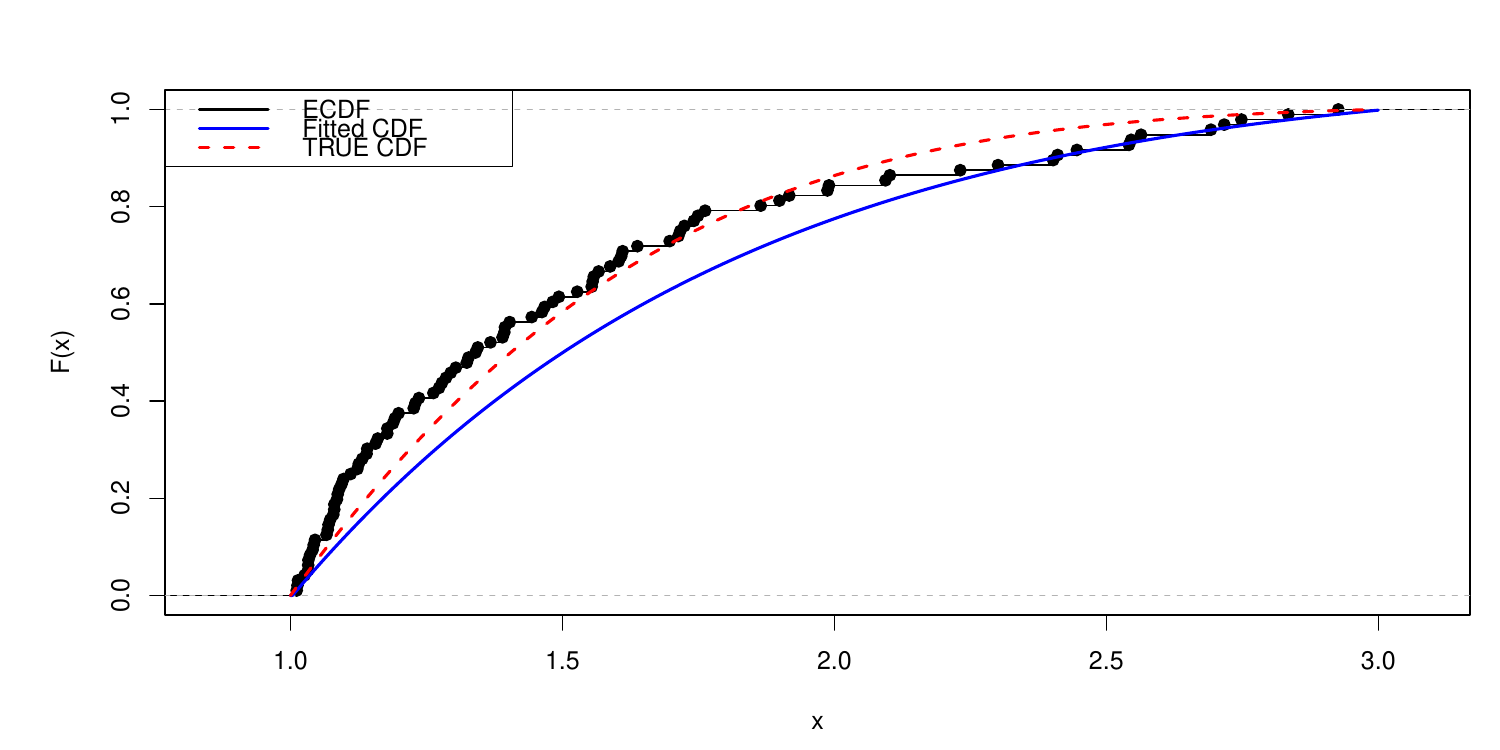}
    \end{subfigure}
    \caption{Visualization of cases where the estimator lies on the boundary imposed on the parameter space, but the fitted distribution visually does not fit well, the ECDF, fitted CDF, and true generative CDF are shown in solid black, solid blue, and dashed red lines, respectively.}
    \label{fig:weirdcasepoorfit}
\end{figure}

\begin{figure}[htbp]
    \centering

    \begin{subfigure}[b]{0.5\textwidth}
        \centering
        \includegraphics[width=\textwidth]{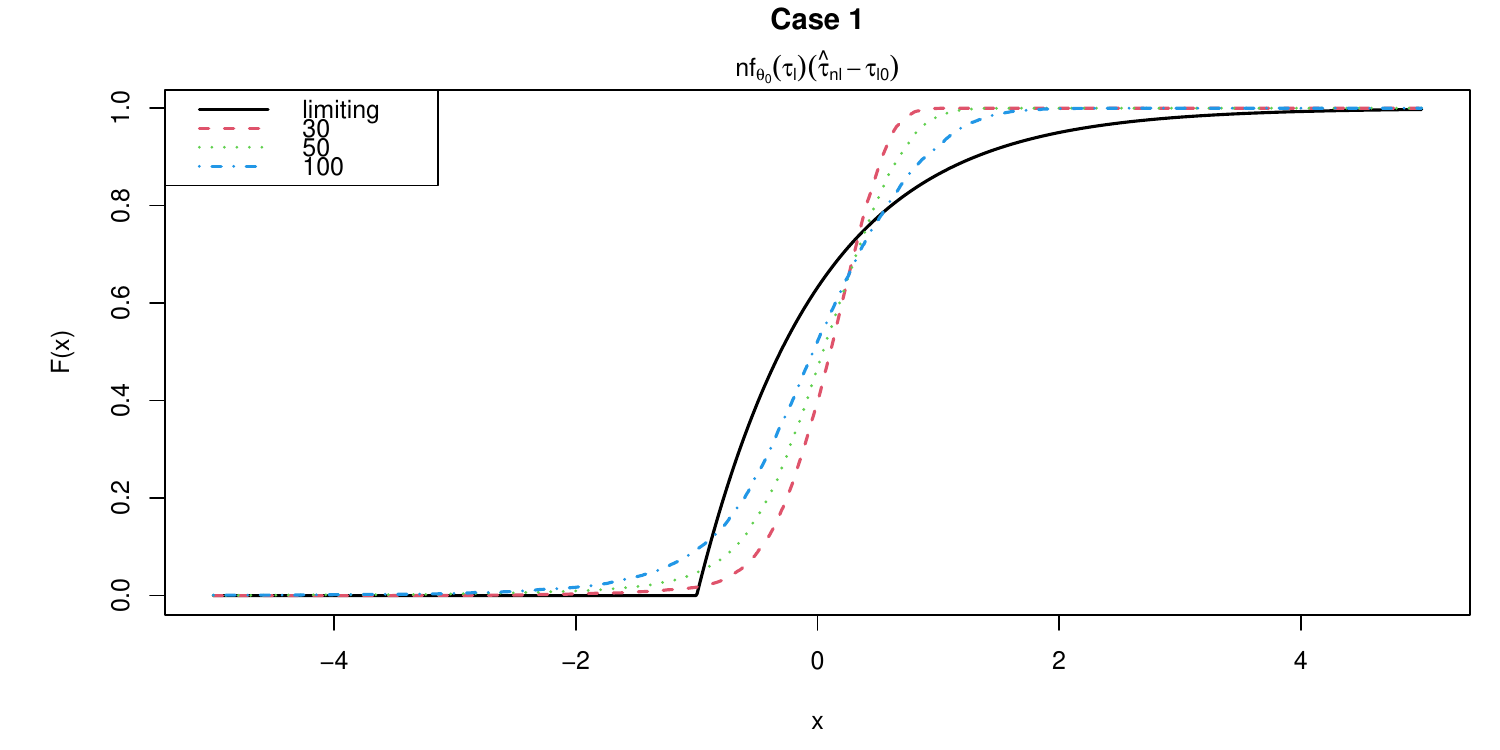}
    \end{subfigure}%
    \begin{subfigure}[b]{0.5\textwidth}
        \centering
        \includegraphics[width=\textwidth]{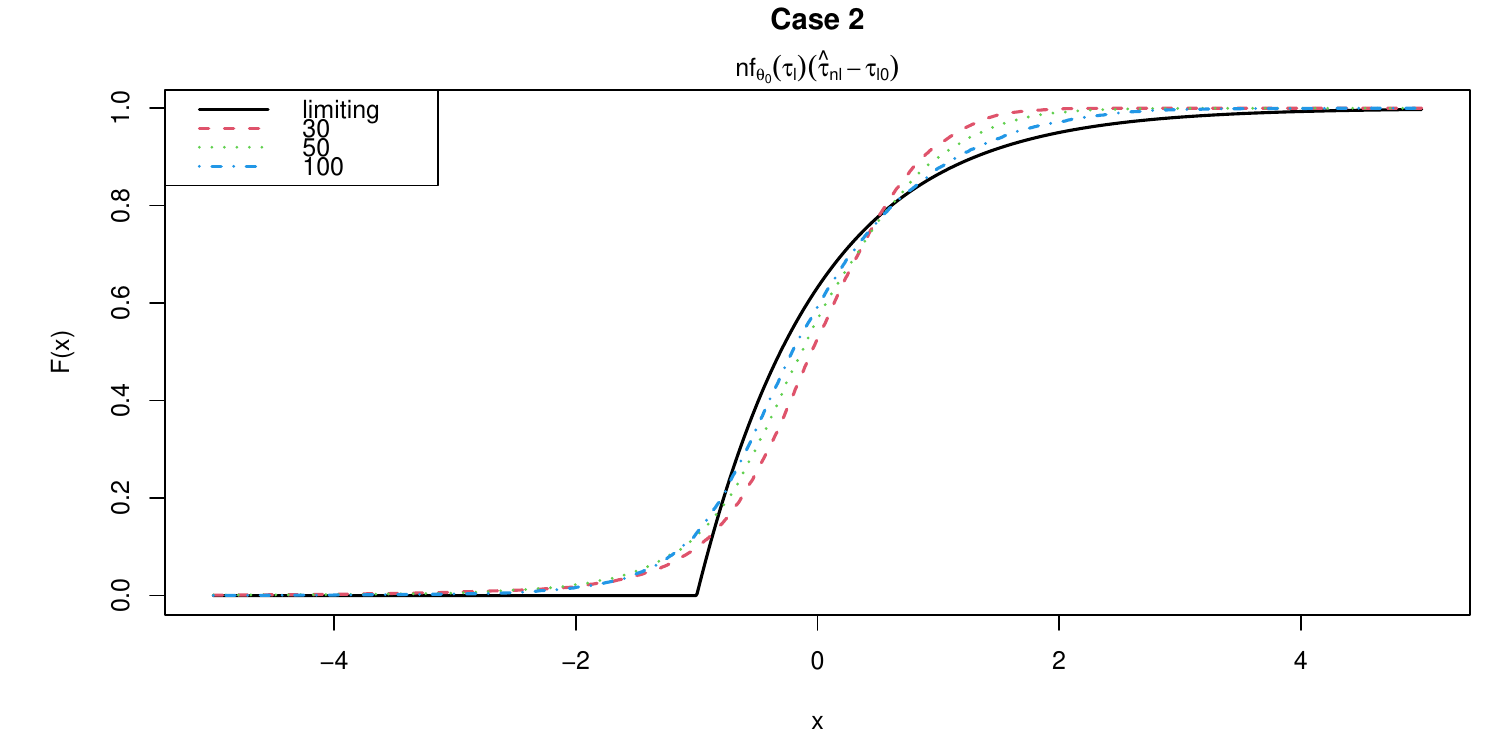}
    \end{subfigure}

    \begin{subfigure}[b]{0.5\textwidth}
        \centering
        \includegraphics[width=\textwidth]{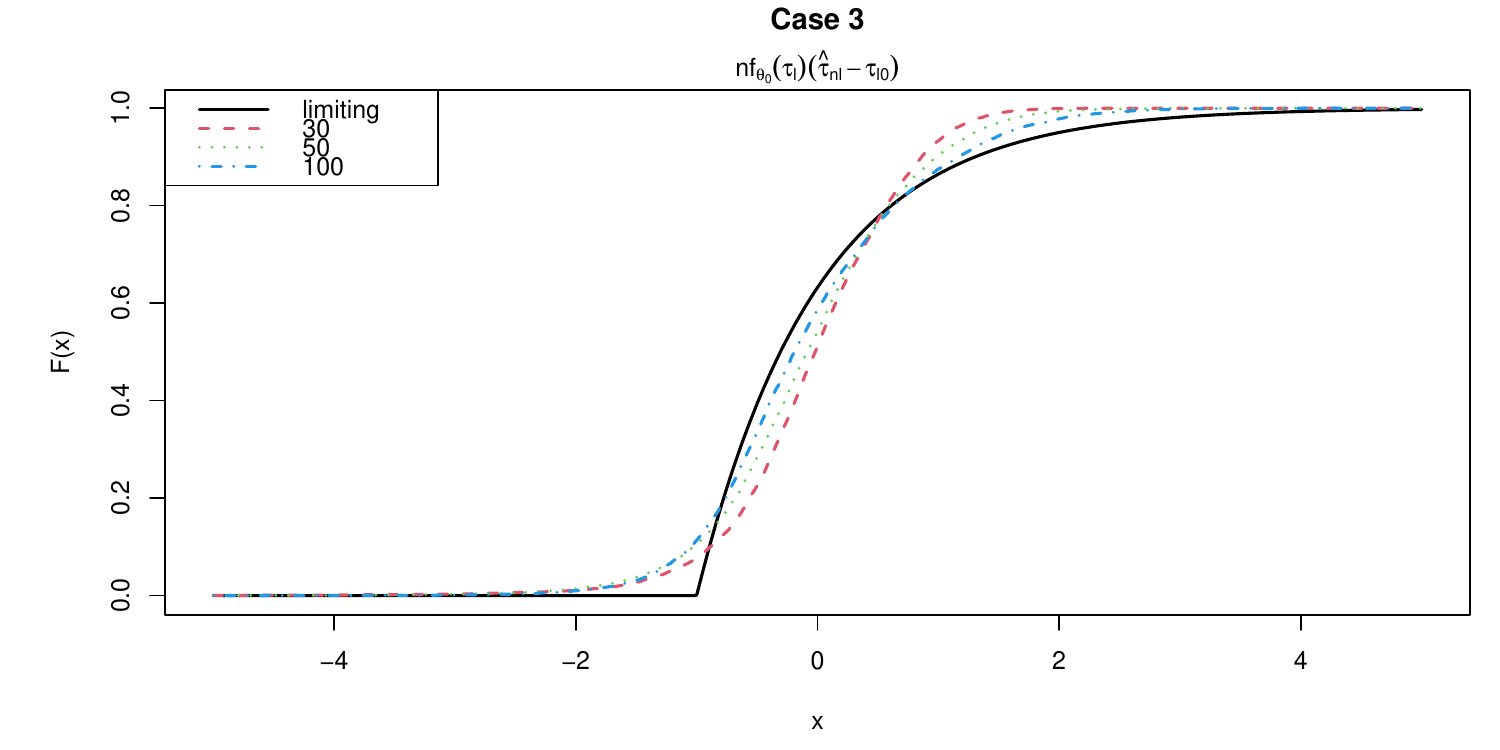}
    \end{subfigure}%
    \begin{subfigure}[b]{0.5\textwidth}
        \centering
        \includegraphics[width=\textwidth]{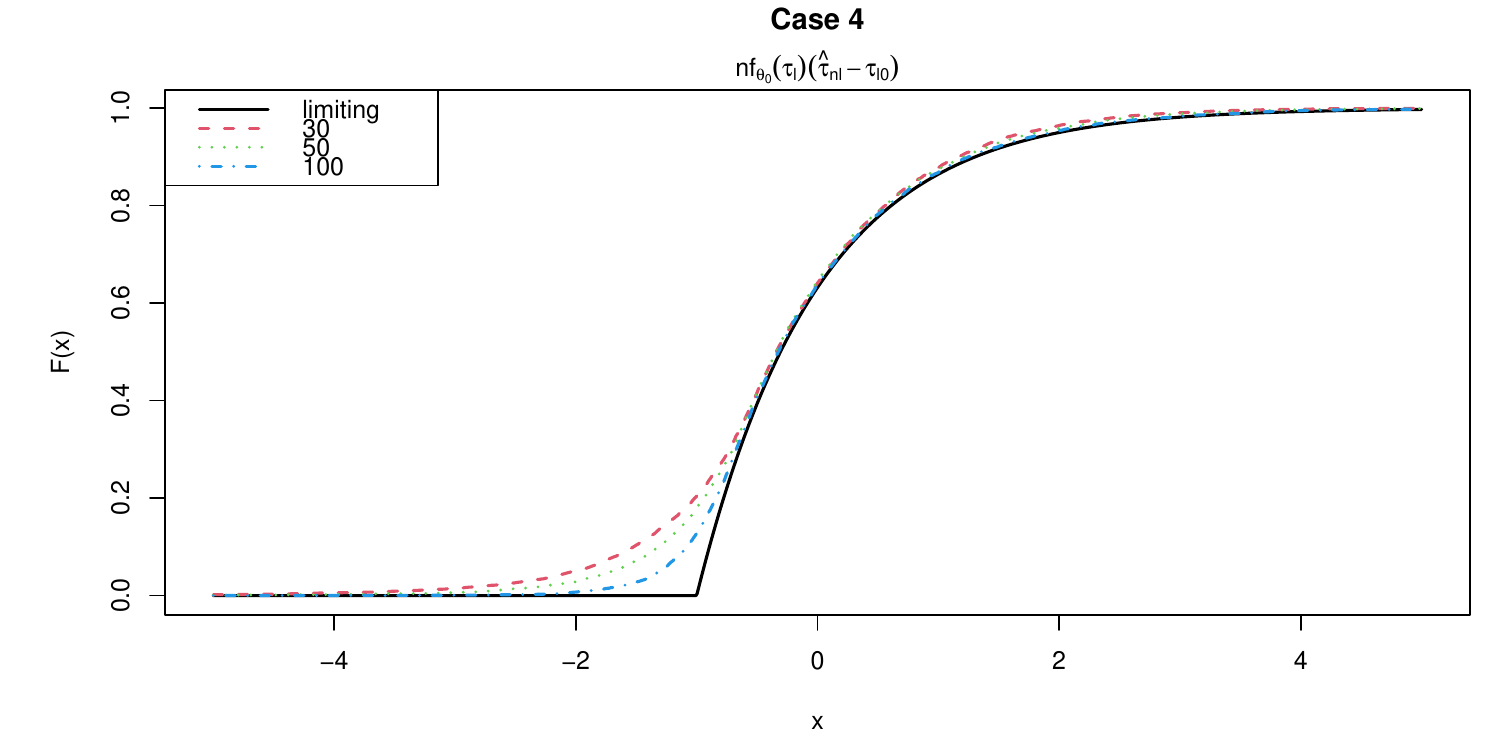}
    \end{subfigure}

    \begin{subfigure}[b]{0.5\textwidth}
        \centering
        \includegraphics[width=\textwidth]{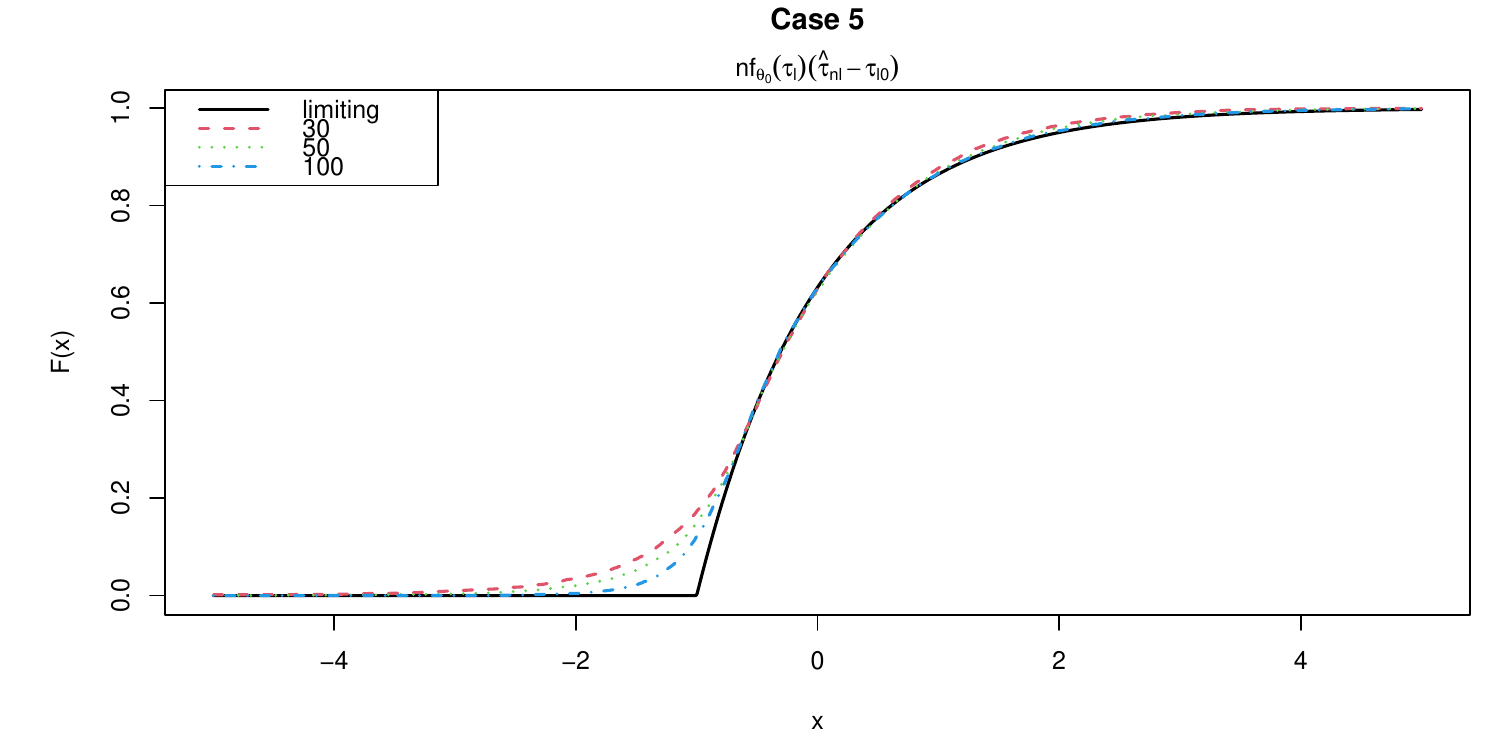}
    \end{subfigure}%
    \begin{subfigure}[b]{0.5\textwidth}
        \centering
        \includegraphics[width=\textwidth]{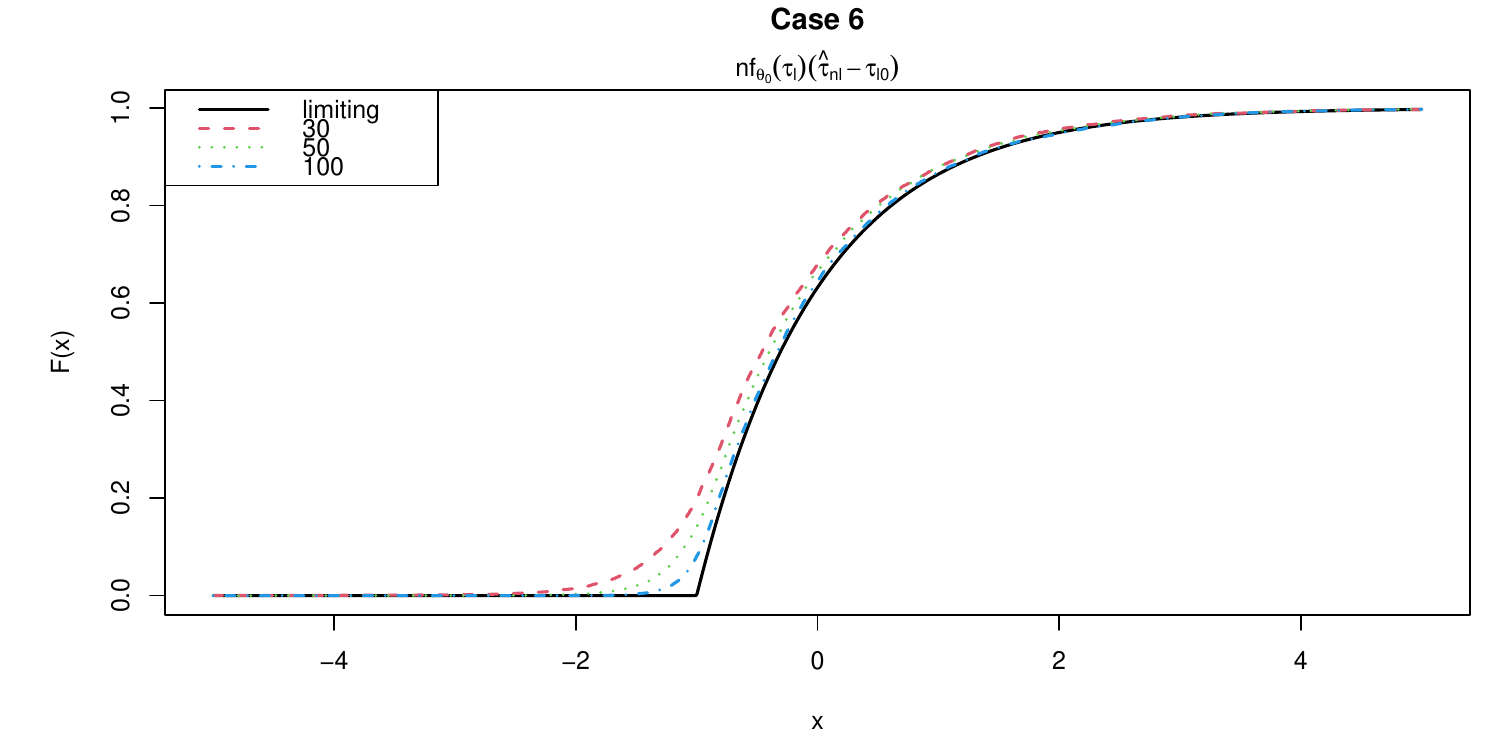}
    \end{subfigure}

    \caption{Results of the simulations illustrating the asymptotic distribution of $nf_{\btheta_0}(\tau_{l0})(\hat{\tau}_{nl}-\tau_{l0})$ as $n\rightarrow \infty$. The black solid, red dashed, green dotted, and blue dot-dashed lines show the limiting CDF, and ECDF from 10,000 repetitions of the sample sizes of 30, 50, and 100, respectively.}
    \label{fig:tau_l_asym_dist}
\end{figure}

\begin{figure}[htbp]
    \centering

    \begin{subfigure}[b]{0.5\textwidth}
        \centering
        \includegraphics[width=\textwidth]{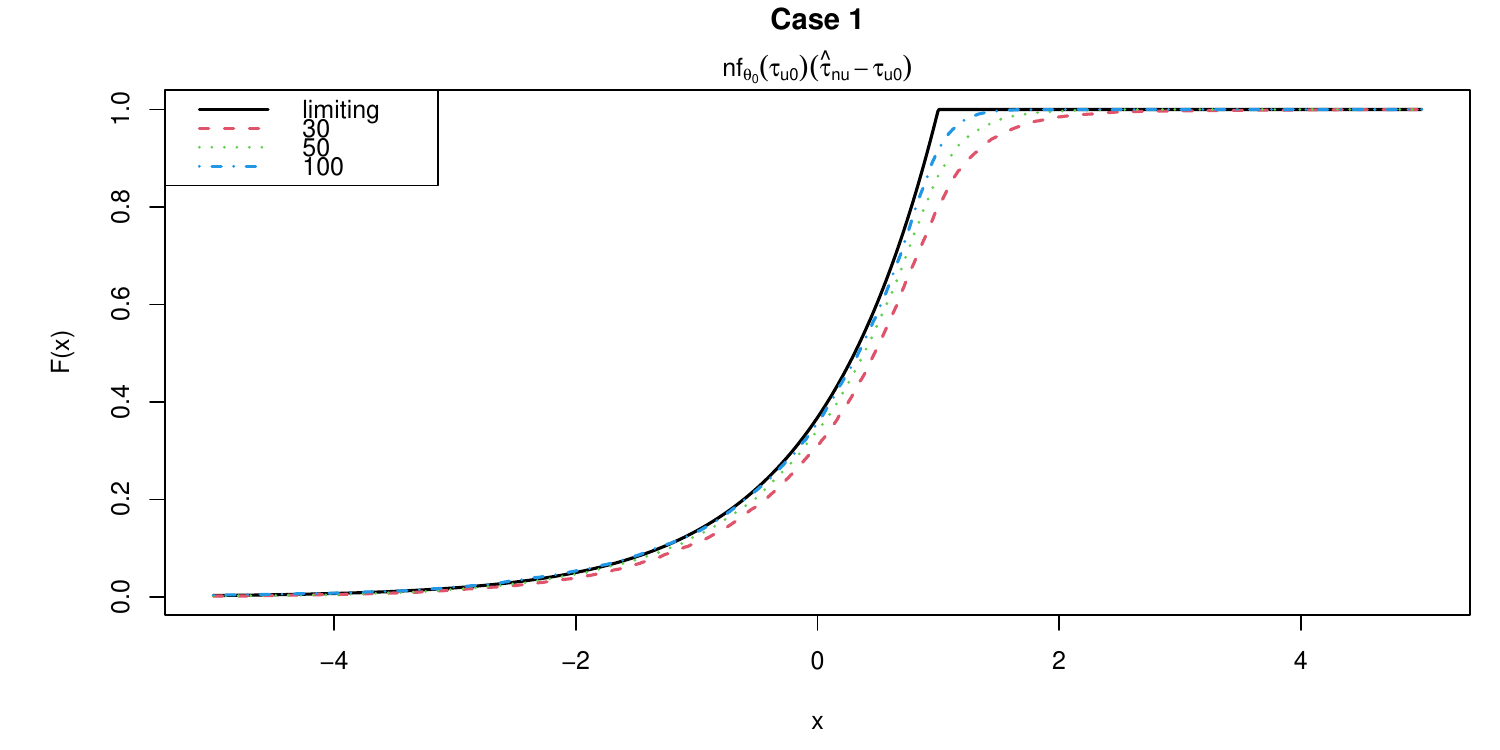}
    \end{subfigure}%
    \begin{subfigure}[b]{0.5\textwidth}
        \centering
        \includegraphics[width=\textwidth]{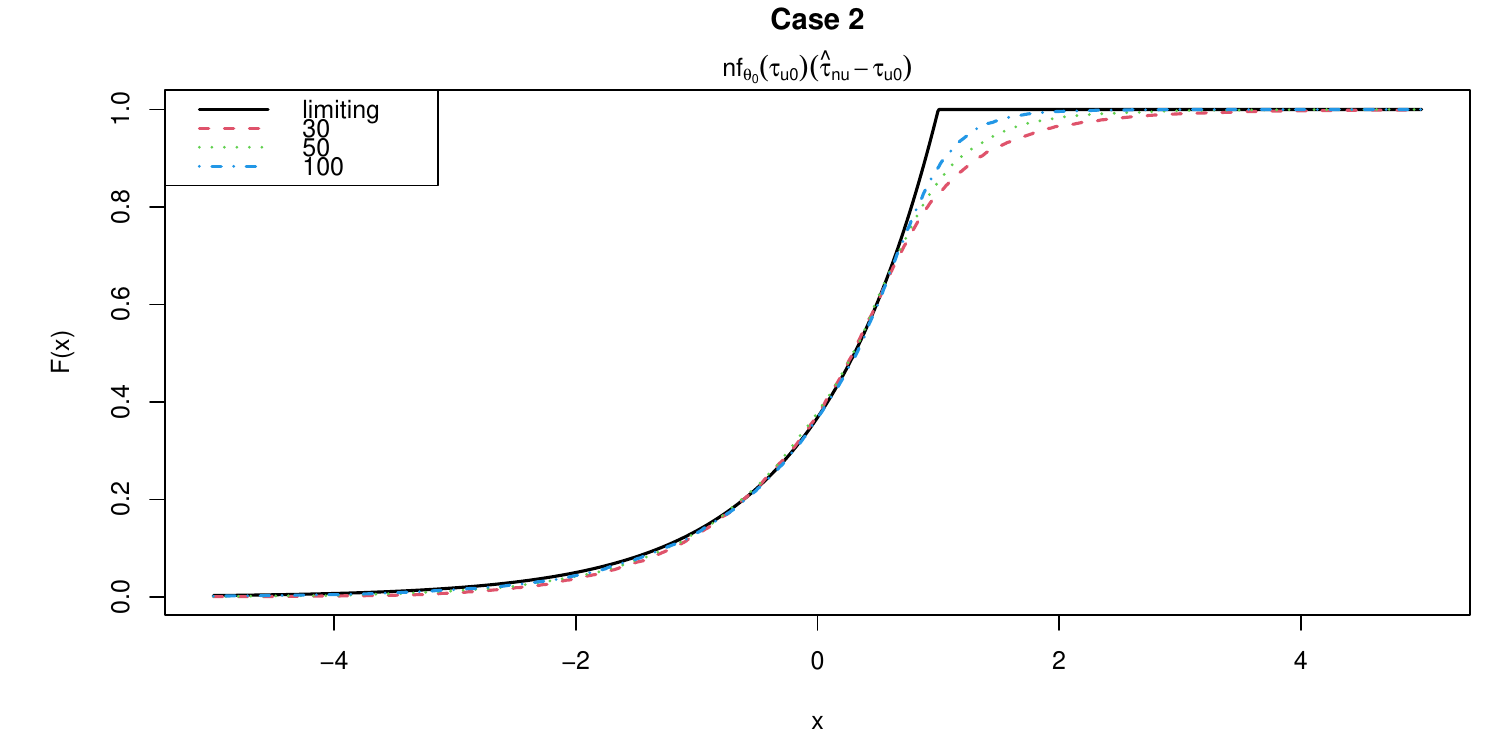}
    \end{subfigure}

    \begin{subfigure}[b]{0.5\textwidth}
        \centering
        \includegraphics[width=\textwidth]{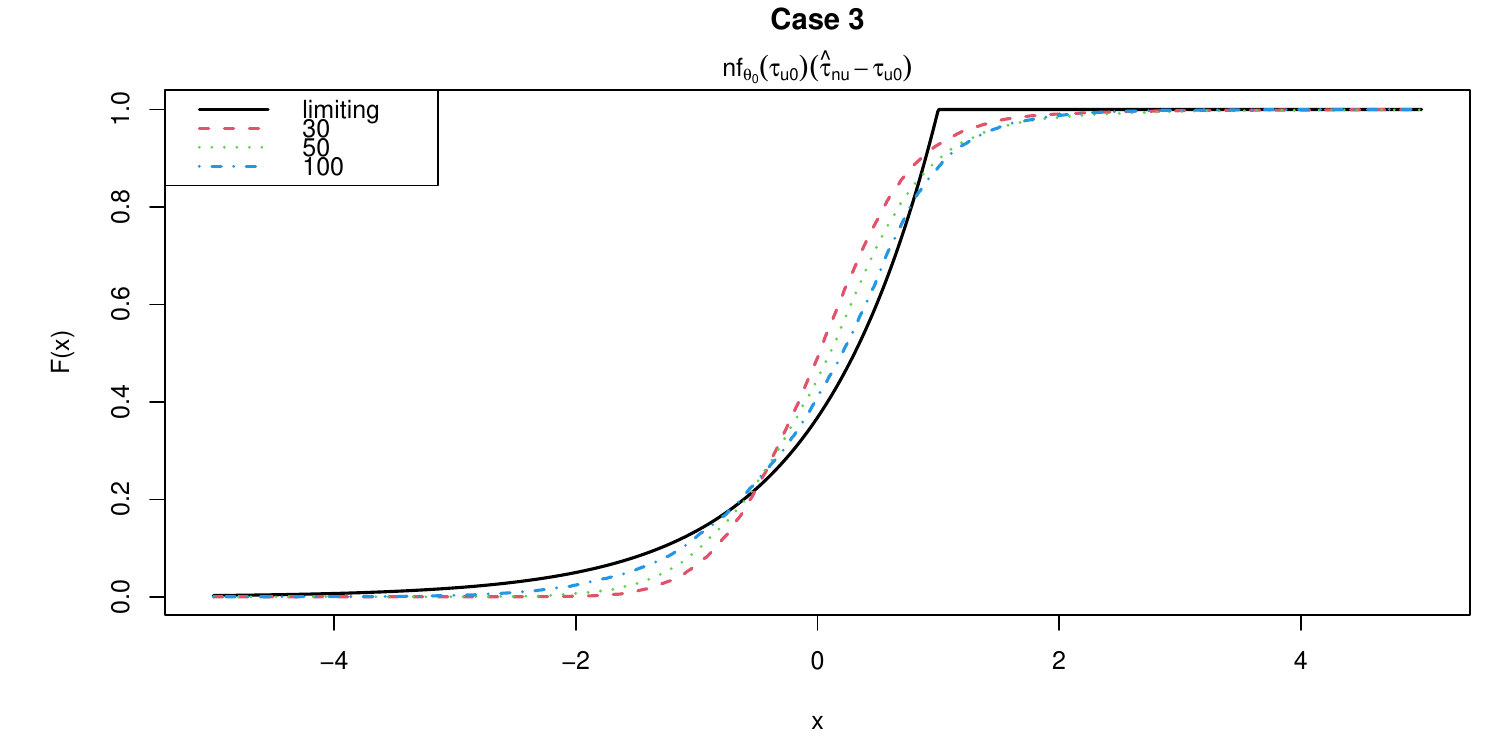}
    \end{subfigure}%
    \begin{subfigure}[b]{0.5\textwidth}
        \centering
        \includegraphics[width=\textwidth]{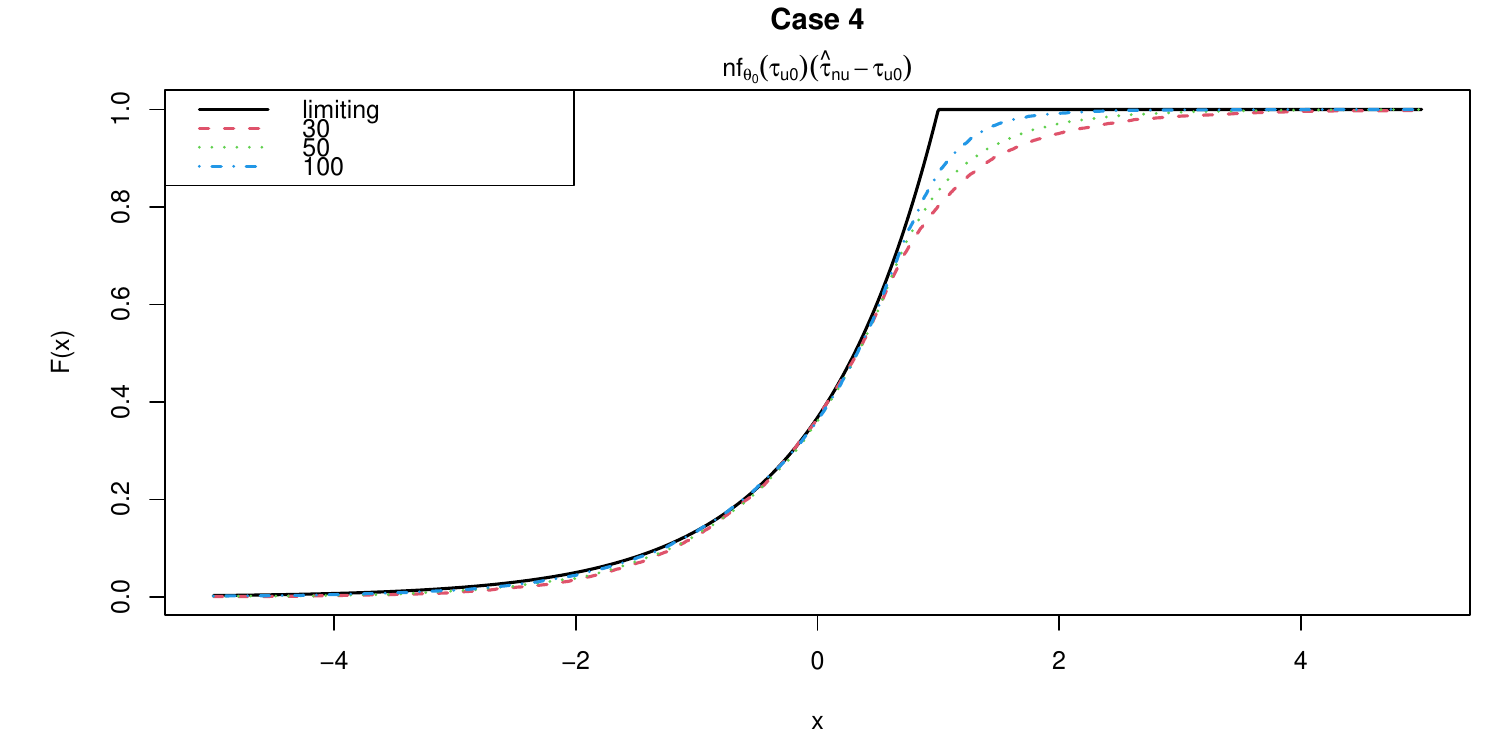}
    \end{subfigure}

    \begin{subfigure}[b]{0.5\textwidth}
        \centering
        \includegraphics[width=\textwidth]{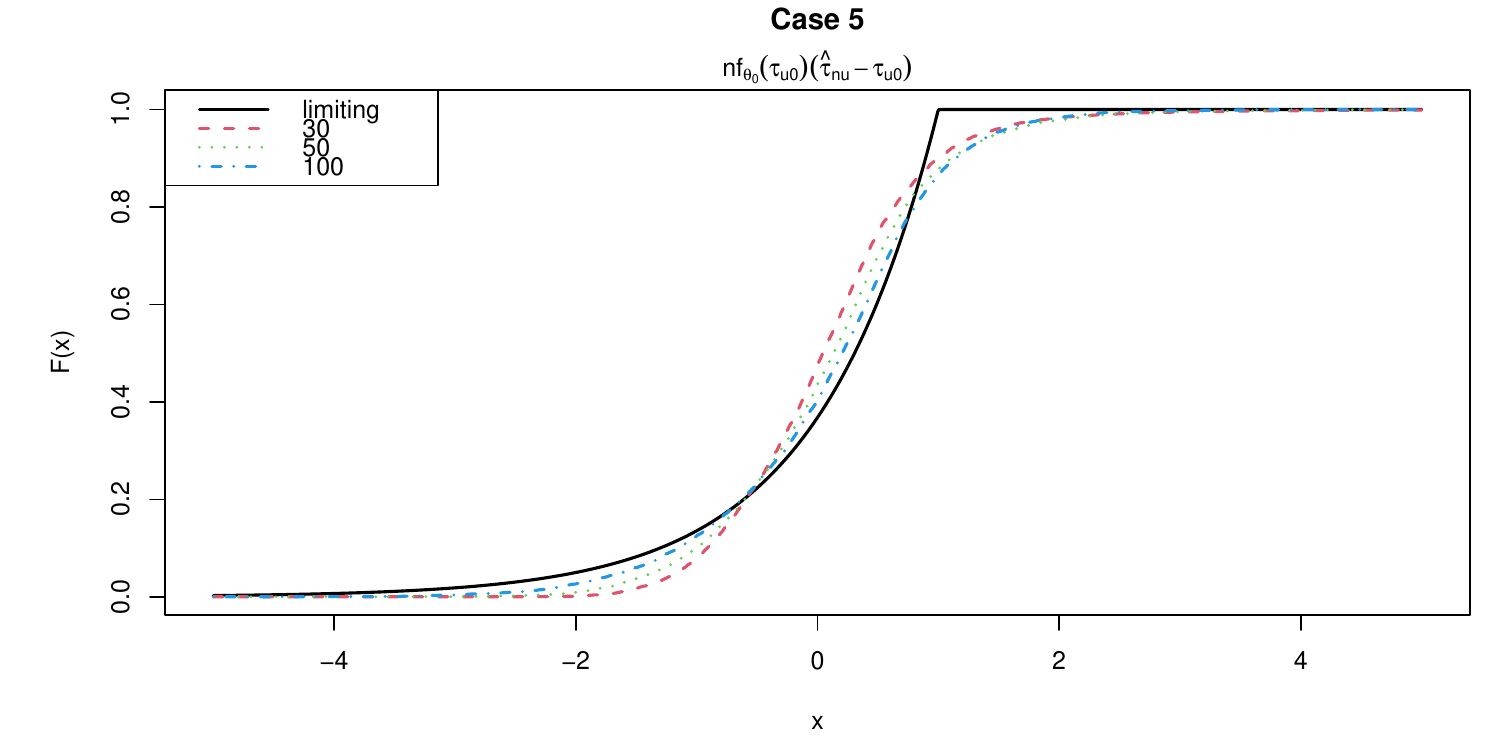}
    \end{subfigure}%
    \begin{subfigure}[b]{0.5\textwidth}
        \centering
        \includegraphics[width=\textwidth]{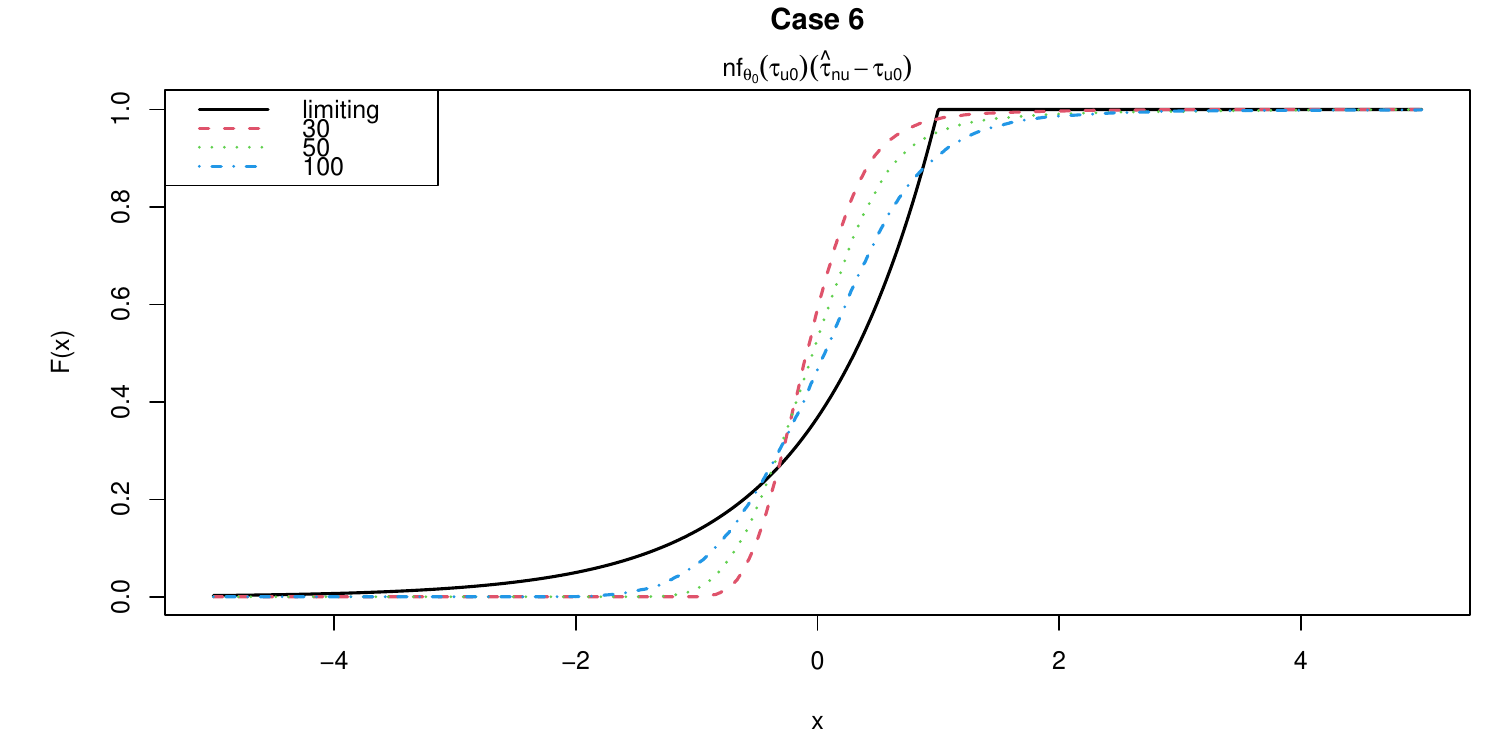}
    \end{subfigure}

    \caption{Results of the simulations illustrating the asymptotic distribution of $nf_{\btheta_0}(\tau_{u0})(\hat{\tau}_{nu}-\tau_{u0})$ as $n\rightarrow \infty$. The black solid, red dashed, green dotted, and blue dot-dashed lines show the limiting CDF and ECDF from 10,000 repetitions of the sample sizes of 30, 50, and 100, respectively.}
    \label{fig:tau_u_asym_dist}
\end{figure}

\section{Real Data Application} \label{sec:data_ex}

As discussed earlier, the motivation for want to estimate all four of the parameters of the truncated normal distribution lies in an application to forensic science in the form of an open set classification problem in hopes that incorporating truncation into the models will alleviate Lindley's paradox seen in those applications. Here, a simplification of this problem will be shown in the form of a closed set classification problem, highlighting a desirable property using truncation. The idea is that truncation will allow for the likelihood to be zero when a new observation is too far from the exemplars for each class used to fit the models. This truncated normal discriminant analysis is closely related to quadratic discriminant analysis with atypicality scoring  \citep{McLACHLAN-Discriminant}, where the atypicality is built into the sampling models. 

The example uses the iris data \citep{Anderson-Iris1935, Fisher-Iris1936} focusing only on the variable petal length. The classes Setosa and Versicolor are split evenly into training and testing sets, with all of the Virginica added to the testing set. Using the training set, the parameters of the truncated normal distribution and the normal distribution are calculated for both classes. The parameters of the truncated normal distribution were estimated using the proposed algorithm with initial value $\btheta^{(0)}=(\bar{x}, s, x_{n:1}, x_{n:n})$ for each class. Predictions are then made on the testing set using the truncated normal discriminant analysis and QDA with an atypicality cutoff of 0.05. This process is repeated 10,000 times with different training/testing splits on the Setosa and Versicolor classes.

\begin{table}[H]
    \centering
    \caption{Summary of the 1000 confusion matrices in the iris data example}
    \begin{tabular}{|c|c|c|c|c|}
        \hline
        Method & Truth/Prediction & No Training Class & Setosa & Versicolor\\
        \hline
        \multirow{3}{*}{QDA with Atypicality} & Virginica & 32.5 (3.0) & 0 (0) & 17.5 (3.0) \\
        & Setosa & 1.5 (0.7) & 23.5 (0.7) & 0 (0) \\
        & Versicolor & 0.8 (0.5) & 0 (0) & 24.2 (0.5)\\
        \hline
        \multirow{3}{*}{TQDA} & Virginica & 36.3 (5.6) & 0 (0) & 13.7 (5.6)\\
        & Setosa & 0 (0) & 25 (0) & 0 (0) \\
        & Versicolor & 0 (0) & 0 (0) & 25 (0) \\
        \hline
    \end{tabular}
    \label{tab:Data_example}
\end{table}

Table \ref{tab:Data_example} summarize the results of the classification examples on the 10,000 training/testing splits for the QDA with atypicality and truncated normal discriminant analysis (TQDA), respectively. The table gives an element-wise mean with an element-wise standard deviation in parentheses for the confusion matrices. Looking at the first block, it can be seen that QDA with atypicality gives nearly perfect classification on the known training classes, Setosa and Versicolor, while accurately finding that the Virginica observations were from a class other than those seen in the training set. The second block shows similar results for the truncated normal discriminant analysis, where there is now perfect classification on Setosa and Versicolor, and the correct identification that the Virginica observations were not from a class seen in the training set.  These results suggest that the addition of truncation to the models and that extending this idea to the open set classification problem may provide a likelihood ratio that is resistant to Lindley's paradox.

\section{Conclusion}
\label{sec:conclusion}

The proposed algorithm provides numerical solutions to a system of equations to yield estimators of the parameters of the truncated normal distribution. These estimators are consistent. The marginal asymptotic distributions of the estimators of the truncation bounds can be characterized using the Weibull distribution and converge on order $n^{-1}$. The joint distribution of the estimators of the mean and standard deviation of the parent normal distribution is multivariate normal and converges on order $n^{-1/2}.$ With these different rates of convergence, there is an open question about the joint distribution, and if $(nf_{\btheta_0}(\tau_{l0})(\hat{\tau}_{ln}-\tau_{u0}),nf_{\btheta_0}(\tau_{u0})(\hat{\tau}_{un}-\tau_{l0}))^t$ and $\sqrt{n}(\hat{\mu}_n - \mu_0,\hat{\sigma}_n - \sigma_0)^t$ are independent in the limit. 

Fitting a truncated normal distribution to classes allows for a resulting discriminant analysis that is robust to the addition of observations from classes that have not yet been seen. Future work entails extending these methods to multivariate and hierarchical models to provide a likelihood ratio as one method to incorporate the ``pinch of probability" discussed by Degroot.

\begin{appendix}

\section{Standard Truncated Normal Distributions}
\label{app:standard_truncated}

Some useful results from the standard truncated normal distributions are used for various results of the algorithm.

\subsection{Quantile Function}

The quantile function for the standard truncated normal distribution $\text{TN}(0,1,\tau_l^*, \tau_u^*)$ takes the form

\begin{equation} \label{eqn:std_trunc_quant}
   J_{\btheta}(p) = \Phi^{-1}\left(\Phi(\tau_l^*) + p(\Phi(\tau_u^*)- \Phi(\tau_l^*))\right), \; p\in [0,1],
\end{equation}
where $\Phi$ is the distribution function of the standard normal distribution. Now, some useful properties of $J_{\btheta}$ are as follows.

\begin{property}
    $J_{\btheta}$ is Lipschitz.
\end{property}

\begin{proof}
Consider $J_{\btheta}$ as defined in \eqref{eqn:std_trunc_quant}. We have
\begin{equation*}
    \frac{\partial J_{\btheta}}{\partial p} = \frac{1}{\phi(\Phi^{-1}\left(\Phi(\tau_l^*) + p(\Phi(\tau_u^*)- \Phi(\tau_l^*))\right))} (\Phi(\tau_u^*)- \Phi(\tau_l^*)).
\end{equation*}
Now as the standard normal density function, $\phi$ is continuous, positive, increasing on $(-\infty,0)$ and decreasing on $(0, \infty)$ and $\Phi^{-1}$ is increasing we have 
\begin{equation*}
    \frac{\partial J_{\btheta}}{\partial p} \le \frac{1}{\phi(\max\{|\tau_l^*|, |\tau_u^*|\})} (\Phi(\tau_u^*)- \Phi(\tau_l^*)) \equiv M(\btheta).
\end{equation*}
Now for $p_1, p_2 \in [0,1],$ by the Mean Value Theorem, there is a $p^*\in(0,1)$ such that
\begin{equation*}
    |J_{\btheta}(p_1) - J_{\btheta}(p_1)| = \frac{\partial J_{\btheta}}{\partial p} \Bigg\vert_{p=p^*} |p_1 - p_2| \le M |p_1 - p_2|.
\end{equation*}
Thus, $J_{\btheta}$ is Lipschitz. 
\end{proof}

\begin{property}
The weights in \eqref{eqn:obsest} satisfy $\xi(n, h_l(n,\btheta), h_u(n,\btheta),k) = J_{\btheta}(p)$ for some $p \in [\frac{i-1}{n}, \frac{i}{n}].$ 
\end{property}

\begin{proof}
Recall that $\xi(n, h_l(n,\btheta), h_u(n,\btheta),k)$ is the median of the $\text{Beta}(h_l(n,\btheta) + k, h_u(n,\btheta),k) + n + 1 - k) \overset{d}{=} \text{Beta}(nc_l + k, nc_u + n + 1 - k)$ distributions for $k=1,\dots,n$, where $c_l$ and $c_u$ are defined as in \eqref{eqn:expectations1} and \eqref{eqn:expectations2}. Denote the mode and mean of the $k^{th}$ beta distribution as $m(k)$ and $\mu(k)$, respectively. Now 
$$\frac{c_l + \frac{k-1}{n}}{c_l + c_u + 1 + \frac{1}{n}} < \frac{c_l + \frac{k-1}{n}}{c_l + c_u + 1} < \frac{c_l + \frac{k-1}{n}}{c_l + c_u + 1 - \frac{1}{n}}$$
which implies
\begin{equation} \label{eqn:weight_res1}
    \mu(k-1) < \frac{c_l + \frac{k-1}{n}}{c_l + c_u + 1} < M(k)
\end{equation}
As $c_u + c_l + 1 -\frac{1}{n} > 0$. Similarly, as $c_u>0,$
\begin{equation*}
    \left(c_l + \frac{k-1}{n}\right)(c_l + c_u + 1) + \frac{1}{n}\left(c_l + \frac{k-1}{n}\right) < \left(c_l + \frac{k-1}{n}\right)(c_l + c_u + 1)+  \frac{1}{n}\left(c_l + c_u + 1\right),
\end{equation*}
yielding
\begin{equation*}
    \frac{c_l + \frac{k-1}{n}}{c_l + c_u + 1} < \frac{c_l + \frac{k}{n}}{c_l + c_u + \frac{1}{n}}
\end{equation*}
which gives
\begin{equation}\label{eqn:weight_res2}
    \frac{c_l + \frac{k-1}{n}}{c_l + c_u + 1} < \mu(k).
\end{equation}
Finally, consider a third inequality
\begin{equation*}
    \left(c_l + \frac{k}{n}\right)(c_l + c_u + 1) - \frac{1}{n}\left(c_l + c_u + 1\right) < \left(c_l + \frac{k}{n}\right)(c_l + c_u + 1) -  \frac{1}{n}\left(c_l + \frac{k}{n}\right),
\end{equation*}
again as $c_l + c_u + 1 - \frac{1}{n}$, it follows that
\begin{equation*}
    \frac{c_l + \frac{k-1}{n}}{c_l + c_u + 1 - \frac{1}{n}} < \frac{c_l + \frac{k}{n}}{c_l + c_u + 1}
\end{equation*}
giving
\begin{equation}\label{eqn:weight_res3}
    M(k) < \frac{c_l + \frac{k}{n}}{c_l + c_u + 1}.
\end{equation}
Now, combining \eqref{eqn:weight_res1} , \eqref{eqn:weight_res2} and \eqref{eqn:weight_res3} it has been shown that
$$\frac{c_l + \frac{k-1}{n}}{c_l + c_u + 1} < \mu(k) < \frac{c_l + \frac{k}{n}}{c_l + c_u + 1}$$
and 
$$\frac{c_l + \frac{k-1}{n}}{c_l + c_u + 1} < M(k) < \frac{c_l + \frac{k}{n}}{c_l + c_u + 1}.$$
Now, for all values of $k,$ both parameters of the Beta distributions are greater than one, so the distributions will exhibit a mode-median-mean inequality or a mean-median-mode inequality \cite{Groeneveld-1977, Arab-convex2021}. Then, after re-substituting $c_l$ and $c_u$ and simplifying we have that
\begin{equation*}
    J_{\btheta}\left(\frac{k}{n}\right) = \Phi^{-1}\left( \frac{c_l + \frac{k}{n}}{c_l + c_u + 1} \right).
\end{equation*}
Thus,
\begin{equation*} \label{eqn:weightfunbounds}
    J_{\btheta}\left(\frac{k-1}{n}\right) < \xi(n, h_l(n,\btheta), h_u(n,\btheta),k) < J_{\btheta}\left(\frac{k}{n}\right).
\end{equation*}
\end{proof}

Note that property 2 also gives that the sequence of weights is an increasing sequence in $k$, and that
$$\frac{1}{n}\sum_{i=1}^n \xi(n, h_l(n,\btheta), h_u(n,\btheta),k) \rightarrow \int_0^1 J_{\btheta}(p)dp, \text{ as } n\rightarrow \infty$$
and
$$\frac{1}{n}\sum_{i=1}^n \xi(n, h_l(n,\btheta), h_u(n,\btheta),k)^2 \rightarrow \int_0^1 J_{\btheta}(p)^2dp, \text{ as } n\rightarrow \infty$$
by the definition of the Riemann integral.

\subsection{Moments}
The moments of the standard truncated normal distributions are also of interest. Now, the moment generating function of the standard truncated normal distribution is 
\begin{equation} \label{eqn:mgf}
    \varphi(t) = \exp\left\{\frac{t^2}{2}\right\} \frac{\Phi(\tau_u^* - t) - \Phi(\tau_l^*-t)}{\Phi(\tau_u^*) - \Phi(\tau_l^*)},
\end{equation}
where $s(t) = \frac{\Phi(\tau_u^* -t)-\Phi(\tau_l^* -t)}{\Phi(\tau_u^* )-\Phi(\tau_l^*)}.$

Now, the first four moments of the standard truncated normal distribution will be of interest. Consider the first four derivatives of $\varphi$. 

\begin{alignat*}{2}
&\varphi'(t) &&= t\varphi(t) - \exp\{t^2/2\}s'(t) \\
&\varphi''(t) &&= (1-t^2)\varphi(t) + 2t\varphi'(t) + \exp\{t^2/2\}s''(t) \\
&\varphi^{(3)}(t) &&= (t^3 - 3t)\varphi(t) + (3-3t^2)\varphi'(t) + 3t\varphi''(t) +  \exp\{t^2/2\}s^{(3)}(t) \\
&\varphi^{(4)}(t) &&= (6t - t^4 - 3)\varphi(t) + (4t^3 -12t)\varphi'(t) + (6 - 3t - 3t^2)\varphi''(t)  \\ 
& && \hspace{0.25in} + 4t\varphi^{(3)}(t) + \exp\{t^2/2\}s^{(4)}(t) .
\end{alignat*}
Now, the derivative of $s$ evaluated at $0$ are
\begin{alignat*}{2}
    &s'(0) &&= -\frac{\phi(\tau_u^*) - \phi(\tau_l^*)}{\Phi(\tau_u^* )-\Phi(\tau_l^*)} \\
    &s''(0) &&= \frac{\phi'(\tau_u^*) - \phi'(\tau_l^*)}{\Phi(\tau_u^* )-\Phi(\tau_l^*)} \\
    &s^{(3)}(0) &&= -\frac{\phi''(\tau_u^*) - \phi''(\tau_l^*)}{\Phi(\tau_u^* )-\Phi(\tau_l^*)} \\
    &s^{(4)}(0) &&= \frac{\phi^{(3)}(\tau_u^*) - \phi^{(3)}(\tau_l^*)}{\Phi(\tau_u^* )-\Phi(\tau_l^*)},
\end{alignat*}
which leads to the first four moments of the standard truncated normal distributions
\begin{align*}
    &\alpha_1 \equiv \varphi'(0) = -\frac{\phi(\tau_u^*) - \phi(\tau_l^*)}{\Phi(\tau_u^* )-\Phi(\tau_l^*)} \\
    &\alpha_2 \equiv \varphi''(0) = 1 + \frac{\phi'(\tau_u^*) - \phi'(\tau_l^*)}{\Phi(\tau_u^* )-\Phi(\tau_l^*)} \\
    &\alpha_3 \equiv \varphi^{(3)}(0) = -3\frac{\phi(\tau_u^*) - \phi(\tau_l^*)}{\Phi(\tau_u^* )-\Phi(\tau_l^*)} - \frac{\phi''(\tau_u^*) - \phi''(\tau_l^*)}{\Phi(\tau_u^* )-\Phi(\tau_l^*)} \\
    &\alpha_4 \equiv \varphi^{(4)}(0) = 3 + 6\frac{\phi'(\tau_u^*) - \phi'(\tau_l^*)}{\Phi(\tau_u^* )-\Phi(\tau_l^*)} + \frac{\phi^{(3)}(\tau_u^*) - \phi^{(3)}(\tau_l^*)}{\Phi(\tau_u^* )-\Phi(\tau_l^*)}.
\end{align*}
Now this leads to property 3.

\begin{property} \label{prop:pos_definite}
The matrix $A =\begin{bmatrix}
    \alpha_2 - \alpha_1^2 & \alpha_3-\alpha_1\alpha_2 \\
    \alpha_3-\alpha_1\alpha_2 & \alpha_4 - \alpha_2^2
\end{bmatrix}$ is positive definite. 
\end{property}

\begin{proof}
Consider $Z\sim\text{TN}(0,1,\tau_l^*, \tau_u^*).$ The covariance matrix of $(Z,Z^2)^T$ is A, and therefore $A$ is positive definite.
\end{proof}

Now, we refer to these distributions as the standard truncated normal distributions due to property 4.

\begin{property} 
Let $Y\sim\text{TN}(\mu, \sigma, \tau_l, \tau_u)$ and $Z\sim\text{TN}(0,1,\tau_l^*, \tau_u^*)$, where $\tau_l^*=\frac{\tau_l-\mu}{\sigma}$ and  $\tau_u^*=\frac{\tau_u-\mu}{\sigma}$. Then $Y\overset{d}{=}\mu+\sigma Z$. 
\end{property}

\begin{proof}
Consider the CDF of $Z$, $F_Z(t) = \frac{\Phi(t) - \Phi(\tau_l^*)}{\Phi(\tau_u^*) - \Phi(\tau_l^*)}.$ Then the CDF of $W=\mu + \sigma Z$  is 

\begin{equation}
    F_W(t) = F_Z\left(\frac{t-\mu}{\sigma}\right) = \frac{\Phi\left(\frac{t-\mu}{\sigma}\right) - \Phi(\tau_l^*)}{\Phi(\tau_u^*) - \Phi(\tau_l^*)}
\end{equation}
which is the CDF of the $\text{TN}(\mu, \sigma, \tau_l, \tau_u)$ distribution. Thus, $Y\overset{d}{=}\mu+\sigma Z.$
\end{proof}

\section{Jacobian Derivation and Properties}
\label{app:derivative}
Multiple lemmas and theorems utilized the limiting form of the sequence of the random functions, $\Psi$, which had the form,
$$\Psi(\btheta)=\begin{pmatrix}
        \mu_0 + \sigma_0\alpha_{01} - \mu -\sigma\alpha_1 \\
        \mu(J_{\btheta},F_{\btheta_0}) - \mu\alpha_1 - \sigma\alpha_2 \\
        \tau_{l0} - \tau_l \\
        \tau_{u0} - \tau_u
    \end{pmatrix}.$$
Recall that the parameter of interest is $\btheta = (\mu, \sigma, \tau_l, \tau_u)^t,$ reparameterizing to $\boldeta = (\mu, \sigma, \tau_l^*, \tau_u^*)^t$, where $\tau_l^* = \frac{\tau_l-\mu}{\sigma}$ and $\tau_u^* = \frac{\tau_u-\mu}{\sigma}$, it can be seen that
\begin{align}
    \Psi(\btheta)&=\begin{pmatrix}
        \mu_0 + \sigma_0\alpha_{01} - \mu -\sigma\alpha_1 \\
        \mu(J_{\btheta},F_{\btheta_0}) - \mu\alpha_1 - \sigma\alpha_2 \\
        \tau_{l0} - \tau_l \\
        \tau_{u0} - \tau_u
    \end{pmatrix} \\
    &=\begin{pmatrix}
        \mu_0 + \sigma_0\alpha_{01} - \mu -\sigma\alpha_1 \\
        \mu(J_{\btheta},F_{\btheta_0}) - \mu\alpha_1 - \sigma\alpha_2 \\
        \mu_0 + \sigma_0\tau_{l0}^* - \mu - \sigma\tau_l^* \\
        \mu_0 + \sigma_0\tau_{u0}^* - \mu - \sigma\tau_u^*
    \end{pmatrix} \equiv \Psi^*(\boldsymbol{\eta}(\btheta)).
\end{align}

Now the Jacobian matrix takes the form $$\frac{\partial\Psi^*}{\partial \boldsymbol{\eta}}(\boldsymbol{\eta}) = \begin{bmatrix}
    -1 & -\alpha_1 &-\sigma\frac{\partial\alpha_1}{\partial\tau_u^*} & \sigma\frac{\partial\alpha_1}{\partial\tau_l^*} \\
    -\alpha_1 & -\alpha_2 & \frac{\partial\mu(J_{\btheta},F_{\btheta_0})}{\partial \tau_u^*} - \mu \frac{\partial\alpha_1}{\partial\tau_u^*} - \sigma \frac{\partial\alpha_2}{\partial\tau_u^*} & \frac{\partial\mu(J_{\btheta},F_{\btheta_0})}{\partial \tau_l^*} - \mu \frac{\partial\alpha_1}{\partial\tau_l^*} - \sigma \frac{\partial\alpha_2}{\partial\tau_l^*} \\
    -1 & -\tau_l^* & -\sigma & 0 \\
    -1 & -\tau_u^* & 0 & -\sigma
\end{bmatrix},$$
as $\alpha_1$ and $\alpha_2$ only depend on $\tau_l^*$ and $\tau_u^*$. Now, by the Leibniz rule, the partial derivatives of $\mu(J_{\btheta}, F_{\btheta_0})$ are $$\frac{\partial\mu(J_{\btheta}, F_{\btheta_0})}{\partial \tau_u^*} = \int_0^1\frac{\phi(\tau_u^*)tF^{-1}_{\btheta_0}(t)}{\phi(J_{\btheta}(t))}dt,$$ and $$\frac{\partial\mu(J_{\btheta}, F_{\btheta_0})}{\partial \tau_l^*} = \int_0^1\frac{\phi(\tau_l^*)(1-t)F^{-1}_{\btheta_0}(t)}{\phi(J_{\btheta}(t))}dt.$$
The partial derivatives of $\alpha_1$ and $\alpha_2$ are
$$\frac{\partial \alpha_1}{\partial \tau_u^*} = \frac{-\phi'(\tau_u^*)(\Phi(\tau_u^*) - \Phi(\tau_l^*)) + \phi(\tau_u^*)(\phi(\tau_u^*) - \phi(\tau_l^*))}{(\Phi(\tau_u^*) - \Phi(\tau_l^*))^2},$$ $$\frac{\partial \alpha_1}{\partial \tau_l^*} = \frac{\phi'(\tau_l^*)(\Phi(\tau_u^*) - \Phi(\tau_l^*)) - \phi(\tau_u^*)(\phi(\tau_u^*) - \phi(\tau_l^*))}{(\Phi(\tau_u^*) - \Phi(\tau_l^*))^2},$$ $$\frac{\partial \alpha_2}{\partial \tau_u^*} = \frac{\phi''(\tau_u^*)(\Phi(\tau_u^*) - \Phi(\tau_l^*)) - \phi(\tau_u^*)(\phi'(\tau_u^*) - \phi'(\tau_l^*))}{(\Phi(\tau_u^*) - \Phi(\tau_l^*))^2},$$ 
and $$\frac{\partial \alpha_2}{\partial \tau_l^*} = \frac{-\phi''(\tau_l^*)(\Phi(\tau_u^*) - \Phi(\tau_l^*)) + \phi(\tau_l^*)(\phi'(\tau_u^*) - \phi'(\tau_l^*))}{(\Phi(\tau_u^*) - \Phi(\tau_l^*))^2}.$$ 

Now, looking at $\frac{\partial\Psi^*}{\partial \boldsymbol{\eta}}(\boldsymbol{\eta}_0)$ we have 

$$\frac{\partial\Psi^*}{\partial \boldsymbol{\eta}}(\boldsymbol{\eta}_0) = \begin{bmatrix}
     \boldsymbol{A}_{11} & \boldsymbol{A}_{12} \\
     \boldsymbol{A}_{21} & \boldsymbol{A}_{22}
\end{bmatrix}.$$

where 

$\boldsymbol{A}_{11} = \begin{bmatrix}
    -1 & -\alpha_{1_0} \\
    -\alpha_{1_0} & -\alpha_{2_0}
\end{bmatrix}$, $\boldsymbol{A}_{12} = \begin{bmatrix}
    a_{12,11} & a_{12,12} \\
     a_{12,21} & a_{12,22} \\
\end{bmatrix},$ $\boldsymbol{A}_{21} = \begin{bmatrix}
    -1 & -\tau_{l_0}^* \\
    -1 & -\tau_{u_0}^* 
\end{bmatrix}$, and $\boldsymbol{A}_{22} = \begin{bmatrix}
    -\sigma_0 & 0 \\
    0 & -\sigma_0
\end{bmatrix},$
where
\begin{align*}
    a_{12,11} &= -\sigma_0\frac{-\phi'(\tau_{u_0}^*)(\Phi(\tau_{u_0}^*) - \Phi(\tau_{l_0}^*)) + \phi(\tau_{u_0}^*)(\phi(\tau_{u_0}^*) - \phi(\tau_{l_0}^*))}{(\Phi(\tau_{u_0}^*) - \Phi(\tau_{l_0}^*))^2} \\
    a_{12,12} &= -\sigma_0\frac{\phi'(\tau_{l_0}^*)(\Phi(\tau_{u_0}^*) - \Phi(\tau_{l_0}^*)) - \phi(\tau_{l_0}^*)(\phi(\tau_{u_0}^*) - \phi(\tau_{l_0}^*))}{(\Phi(\tau_{u_0}^*) - \Phi(\tau_{l_0}^*))^2} \\
    a_{12,21} &= \frac{-\sigma_0}{2} \frac{\phi''(\tau_{u_0}^*)}{\Phi(\tau_{u_0}^*) - \Phi(\tau_{l_0}^*)} + \frac{\sigma_0}{2}\frac{\phi(\tau_{u_0}^*)(\phi'(\tau_{u_0}^*) - \phi'(\tau_{l_0}^*))}{(\Phi(\tau_{u_0}^*) - \Phi(\tau_{l_0}^*))^2} \\
    a_{12,22} &= \frac{\sigma_0}{2} \frac{\phi''(\tau_{l_0}^*)}{\Phi(\tau_{u_0}^*) - \Phi(\tau_{l_0}^*)} - \frac{\sigma_0}{2}\frac{\phi(\tau_{l_0}^*)(\phi'(\tau_{u_0}^*) - \phi'(\tau_{l_0}^*))}{(\Phi(\tau_{u_0}^*) - \Phi(\tau_{l_0}^*))^2}.
\end{align*}
Now, $\frac{\partial\boldeta}{\partial \boldsymbol{\btheta}}(\btheta_0)$ takes the form 
$$\frac{\partial\boldeta}{\partial \boldsymbol{\btheta}}(\btheta_0) = \begin{bmatrix}
   I & 0 \\
   B & \frac{1}{\sigma_0} I
\end{bmatrix},$$
where
$B = \begin{bmatrix}
    \frac{-1}{\sigma_0} & \frac{-1}{\sigma_0}\tau_{l0}^* \\
    \frac{-1}{\sigma_0} & \frac{-1}{\sigma_0}\tau_{u0}^*
\end{bmatrix}.$ Thus, by the chain rule, 

$\frac{\partial\Psi}{\partial \btheta}(\btheta_0) = \frac{\partial\Psi^*}{\partial \boldsymbol{\eta}}(\boldsymbol{\eta}(\btheta_0))\frac{\partial\boldeta}{\partial \boldsymbol{\btheta}}(\btheta_0) = \begin{bmatrix}
    A_{11} + A_{12}B & \frac{1}{\sigma_0}A_{12} \\
    A_{21} + A_{22}B & \frac{1}{\sigma_0}A_{22}
\end{bmatrix}.$

Looking at these matrices, it can be seen that
$A_{21} + A_{22}B = 0,$ and $\frac{1}{\sigma_0}A_{22}=-I.$ The last matrix of importance is $$A_{11} + A_{12}B = \begin{bmatrix}
    -(\alpha_{2_0} - \alpha_{1_0}^2) & -(\alpha_{3_0} - \alpha_{2_0}\alpha_{1_0}) \\
    -\frac{1}{2}(\alpha_{3_0} - \alpha_{2_0}\alpha_{1_0}) &  -\frac{1}{2}(\alpha_{4_0} - \alpha_{2_0}^2)
\end{bmatrix}.$$

Now, using properties of block matrices, it follows that 
\begin{align*}
    \det\left(\frac{\partial\Psi}{\partial \btheta}(\btheta_0)\right) = \det(A_{11} + A_{12}B) \det (-I) = \frac{1}{2} \det \left(\begin{bmatrix}
    (\alpha_{2_0} - \alpha_{1_0}^2) & (\alpha_{3_0} - \alpha_{2_0}\alpha_{1_0}) \\
    (\alpha_{3_0} - \alpha_{2_0}\alpha_{1_0}) &  (\alpha_{4_0} - \alpha_{2_0}^2)
\end{bmatrix} \right) > 0
\end{align*}
by Property \ref{prop:pos_definite}. Also, by properties of block matrices, it can be seen that the upper left block of $\frac{\partial\Psi}{\partial \btheta}(\btheta_0)^{-1}$ is $(A_{11} + A_{12}B)^{-1}.$

\section{Covariance Derivation}
\label{app:covariance}

In this section, different integrals are calculated for different covariance terms.

\begin{property}
    Derivation of the covariance matrix $$\bSigma =
    \begin{pmatrix}
    \mathbb{V}(X_i) & \mathbb{E}\left(\frac{\sigma_0}{2}\left(\frac{X_i - \mu_0}{\sigma_0}\right)^2 \right) \mathbb{E}(X_i) \\
    \mathbb{E}\left(\frac{\sigma_0}{2}\left(\frac{X_i - \mu_0}{\sigma_0}\right)^2 \right) \mathbb{E}(X_i) & \mathbb{V}\left(\frac{\sigma_0}{2}\left(\frac{X_i - \mu_0}{\sigma_0}\right)^2 \right)
\end{pmatrix}$$
\end{property}

\begin{proof}

\begin{align*}
\sigma_{11} &= \int_{\tau_{l0}}^{\tau_{u0}}\int_{\tau_{l0}}^{\tau_{u0}} [ F_{\btheta_0}(\min(x,y)) -F_{\btheta_0}(x)F_{\btheta_0}(y) ]dxdy  \\
&= \int_{\tau_{l0}}^{\tau_{u0}}\int_{\tau_{l0}}^{y} F_{\btheta_0}(x)(1-F_{\btheta_0}(y))dx + \int_{\tau_{l0}}^{\tau_{u0}}\int_{\tau_{l0}}^{y} (1-F_{\btheta_0}(x))F_{\btheta_0}(y)dx \\
&=  \int_{\tau_{l0}}^{\tau_{u0}}(1-F_{\btheta_0}(y))[xF(x)\vert_{\tau_{l0}}^y-\int_{\tau_{l0}}^{y} xf(x)dx] \\
& \hspace{0.5in}+ F_{\btheta_0}(y)[x(1-F_{\btheta_0}(x))|_y^{\tau_{u0}}-\int_y^{\tau_{u0}} xf(x)dx]dy \\
&= \int_{\tau_{l0}}^{\tau_{u0}} \left(F_{\btheta_0}(y) \mathbb{E}(X) - \int_y^{\tau_{u0}} xf(x)dx\right)dy \\
&= \mathbb{E}(X^2)-\mathbb{E}(X)^2 = \mathbb{V}(X)
\end{align*}

Similarly,
\begin{align*}
\sigma_{22} &= \int_{\tau_{l0}}^{\tau_{u0}}\int_{\tau_{l0}}^{\tau_{u0}} J_{\btheta_0}(F_{\btheta_0}(x))J_{\btheta_0}(F_{\btheta_0}(y)) [ F_{\btheta_0}(\min(x,y)) -F_{\btheta_0}(x)F_{\btheta_0}(y) ] dx dy \\
\end{align*}
Looking at the inside integral, it follows that
\begin{align*}
    &\int_{\tau_{l0}}^{\tau_{u0}} J_{\btheta_0}(F_{\btheta_0}(x))J_{\btheta_0}(F_{\btheta_0}(y)) [ F_{\btheta_0}(\min(x,y)) -F_{\btheta_0}(x)F_{\btheta_0}(y) ] dx\\
    &= \int_{\tau_{l0}}^{y} \frac{x-\mu_0}{\sigma_0} \frac{y-\mu_0}{\sigma_0} F_{\btheta_0}(x) (1-F_{\btheta_0}(y))dx \\ 
    & \hspace{0.5in} +  \int_{y}^{\tau_{u0}} \frac{x-\mu_0}{\sigma_0} \frac{y-\mu_0}{\sigma_0} (1-F_{\btheta_0}(x)) (F_{\btheta_0}(y))dx \\
    &= \frac{y-\mu_0}{\sigma_0}\int_{\tau_{l0}}^{y} \frac{x-\mu_0}{\sigma_0}  F_{\btheta_0}(x) dx  \\
    & \hspace{0.5in} + \frac{y-\mu_0}{\sigma_0} F_{\btheta_0}(y) \int_{\tau_{l0}}^{y} \frac{x-\mu_0}{\sigma_0} dx + \frac{y-\mu_0}{\sigma_0} F_{\btheta_0}(y) \int_{\tau_{l0}}^{\tau_{u0}} \frac{x-\mu_0}{\sigma_0} F_{\btheta_0}(x) dx \\
    &= \frac{y-\mu_0}{\sigma_0} F_{\btheta_0}(y) \frac{\sigma_0}{2} \tau_{u0}^{*2} - \frac{y-\mu_0}{\sigma_0}\int_{\tau_{l0}}^y \frac{\sigma_0}{2} \left(\frac{x-\mu_0}{\sigma_0}\right)^2f_{\btheta_0}(x)dx \\ 
    & \hspace{0.5in} - \frac{y-\mu_0}{\sigma_0} F_{\btheta_0}(y)\int_{\tau_{l0}}^{\tau_{u0}} \frac{x-\mu_0}{\sigma_0} F_{\btheta_0}(x) dx
\end{align*}
Putting back into the double integral, it can be seen that
\begin{align*}
     \sigma_{22} &= \frac{\sigma_0}{2} \tau_{u0}^{*2} \int_{\tau_{l0}}^{\tau_{u0}} \frac{y-\mu_0}{\sigma_0} F_{\btheta_0}(y) dy - \frac{\sigma_0}{2}\int_{\tau_{l0}}^{\tau_{u0}} \frac{y-\mu_0}{\sigma_0}\int_{\tau_{l0}}^y \frac{\sigma_0}{2} \left(\frac{x-\mu_0}{\sigma_0}\right)^2f_{\btheta_0}(x)dxdy \\
     & \hspace{0.5in} - \left(\int_{\tau_{l0}}^{\tau_{u0}} \frac{y-\mu_0}{\sigma_0} F_{\btheta_0}(y) dy \right)^2 \\
     &= \frac{\sigma_0^2}{4} \tau_{u0}^{*4} - 2\frac{\sigma_0^2}{4} \tau_{u0}^{*2} \mathbb{E}\left(\left(\frac{X - \mu_0}{\sigma_0}\right)^2 \right) + \frac{\sigma_0^2}{4}\mathbb{E}\left(\left(\frac{X - \mu_0}{\sigma_0}\right)^4 \right) \\
     & \hspace{0.5in} - \left( \frac{\sigma_0}{2} \tau_{u0}^{*2} - \frac{\sigma_0}{2}\mathbb{E}\left(\left(\frac{X - \mu_0}{\sigma_0}\right)^2 \right)\right)^2 \\
     &= \frac{\sigma_0^2}{4} \left(\mathbb{E}\left(\left(\frac{X - \mu_0}{\sigma_0}\right)^4\right) - \mathbb{E}\left(\left(\frac{X - \mu_0}{\sigma_0}\right)^2 \right)^2 \right) \\
     &= \mathbb{V}\left(\frac{\sigma_0}{2}\left(\frac{X - \mu_0}{\sigma_0}\right)^2 \right)
\end{align*}

Finally, 

\begin{align*}
    \sigma_{12}&=\int_{\tau_{l0}}^{\tau_{u0}}\int_{\tau_{l0}}^{\tau_{u0}}J_{\btheta_0}(F_{\btheta_0}(y)) [ F_{\btheta_0}(\min(x,y)) -F_{\btheta_0}(x)F_{\btheta_0}(y) ] dy dx \\
    &= \int_{\tau_{l0}}^{\tau_{u0}}\int_{\tau_{l0}}^{\tau_{u0}} \frac{y-\mu_0}{\sigma_0} [ F_{\btheta_0}(\min(x,y)) -F_{\btheta_0}(x)F_{\btheta_0}(y) ] dy dx
\end{align*}
Looking at the inside integral,
\begin{align*}
& \int_{\tau_{l0}}^{\tau_{u0}} \frac{y-\mu_0}{\sigma_0} [ F_{\btheta_0}(\min(x,y)) -F_{\btheta_0}(x)F_{\btheta_0}(y) ] dy \\
&= \int_{\tau_{l0}}^x \frac{y-\mu_0}{\sigma_0}F_{\btheta_0}(y)(1-F_{\btheta_0}(x))dy + \int_x^{\tau_{u0}} \frac{y-\mu_0}{\sigma_0}(1-F_{\btheta_0}(y))F_{\btheta_0}(x) dy \\
&= \frac{\sigma_0}{2} \tau_{u0}^{*2} F_{\btheta_0}(x) - \int_{\tau_{l0}}^x\frac{\sigma_0}{2}\left(\frac{y-\mu_0}{\sigma_0}\right)^2f_{\btheta_0}(y)dy - F(x)\int_{\tau_{l0}}^{\tau_{u0}}\frac{y-\mu_0}{\sigma_0}f_{\btheta_0}(y)dy
\end{align*}
Putting this back into the double integral, it follows that 
\begin{align*}
    \sigma_{12} = \frac{\sigma_0}{2} \mathbb{E}\left( X\left(\frac{X-\mu_0}{\sigma_0}\right)^2\right) - \frac{\sigma_0}{2}\mathbb{E} \left(\left(\frac{X-\mu_0}{\sigma_0}\right)^2\right) \mathbb{E}(X)
\end{align*}
\end{proof}

\end{appendix}

\bibliographystyle{unsrt}
\bibliography{refs}     

\end{document}